\documentclass[11pt]{article}

\usepackage[T1]{fontenc}
\usepackage[utf8]{inputenc}
\usepackage{amsmath,amsthm}
\usepackage{newtxtext,newtxmath}
\usepackage{microtype}
\usepackage[margin=1.12in]{geometry}
\usepackage{booktabs}
\usepackage{graphicx}
\usepackage{tikz}
\usetikzlibrary{positioning}
\usepackage{pgfplots}
\pgfplotsset{compat=1.18,
  every axis/.append style={
    line width=0.7pt, tick style={line width=0.5pt},
    label style={font=\small}, tick label style={font=\scriptsize},
    legend style={fill=white, fill opacity=0.85, text opacity=1,
                  draw=black!25, font=\scriptsize}}}
\usepackage{algorithm}
\usepackage{algpseudocode}
\usepackage{float}
\usepackage[section]{placeins}
\usepackage{flafter}
\usepackage[font=small,labelfont=bf,skip=6pt]{caption}
\usepackage[hidelinks]{hyperref}

\newtheorem{theorem}{Theorem}[section]
\newtheorem{lemma}[theorem]{Lemma}
\newtheorem{proposition}[theorem]{Proposition}
\newtheorem{corollary}[theorem]{Corollary}
\theoremstyle{definition}
\newtheorem{definition}[theorem]{Definition}

\newtheorem{conjecture}[theorem]{Conjecture}
\theoremstyle{remark}
\newtheorem{remark}[theorem]{Remark}

\newcommand{\rk}{r}
\newcommand{\est}{\hat{r}}
\newcommand{\E}{\mathbb{E}}
\newcommand{\Prob}{\mathbb{P}}
\newcommand{\N}{\mathbb{N}}
\newcommand{\diff}{\,\triangle\,}

\title{Comparison Patrols on Drifting Orders:\\
Certified Rank Maintenance, Evolving Planar Maxima,\\
and Selection under Drifting Fitness}

\author{Faruk Alpay\thanks{Correspondence: \texttt{alpay@lightcap.ai}}
\quad Levent Sar{\i}o\u{g}lu\\[4pt]
\small Department of Computer Engineering, Bah\c{c}e\c{s}ehir University,
Istanbul, Turkey\\
\small \texttt{\{faruk.alpay, levent.sarioglu\}@bahcesehir.edu.tr}}

\date{}

\begin{document}
\maketitle

\begin{abstract}
Rank-based selection in a changing environment acts on a ranking that
decays while it is being used: tournaments judged on stale fitness advance
the wrong parents, elitism protects fallen champions, nondominated fronts
are misreported, and re-evaluating everyone competes with search for the
same budget.  This paper gives that information layer exact foundations.
A hidden total order on $n$ items drifts by Poisson adjacent
transpositions while a structure asks one truthful pairwise comparison per
step and serves, at every instant, not a ranking but a ranking \emph{with
promises}: deterministic verification ages, per-item displacement
certificates, and transfer guarantees for every selection rule downstream,
in $O(1)$ per operation from $3n+O(1)$ words.  Every
oblivious probing schedule suffers expected Kendall error at least
$\min(\alpha,1)(n-1)/24$ beyond a short horizon, and we prove a floor of the
same order, with explicit constants at every rate, for the
location-oblivious class containing the structure itself, so its
equilibrium is $\Theta(\min(\alpha,1)\,n)$ with both sides exhibited.  A
bump lemma makes the comparison patrol exactly self-stabilizing: at zero
drift it sorts an estimate of maximum overstatement $L$ in at most $L$
cycles and in more than $L-1$, so an abrupt shock of overstatement $L$
costs at most $(L{+}1)(n{-}1)$ probes against the
$\sum_{m\le n}\lceil\log_2 m\rceil$ of a full rebuild: a provable policy
crossover at $L\approx\log_2 n$ that a swap-count detector exploits, within
twice the better pure response, without being told $L$.  At equilibrium an
exact birth and death ledger of discordant pairs verifies Little's law to
within one percent; the steady-state constant it supports, one half, is
stated as a conjecture and kept explicitly separate from the proven floor,
with every claim in the paper carrying its status, proved, measured, or
conjectured, on its face.  Certified rank queries return a two-part
contract, a proven motion radius plus a calibrated residual with
finite-sample validity.
Deterministic transfer theorems carry the maintained order into evolving
planar maxima, with a localized charging bound counting only
frontier-incident discordances, and into truncation, tournament, elitist,
and two-objective Pareto selection under drifting fitness:
$|T_k\diff\widehat T_k|\le2\lfloor\sqrt K\rfloor$, tournament error exactly
$K/\binom n2$, $|M\diff\widehat M|\le2(K^x+K^y)$.  A thirty-seed study
scales to $n=65{,}536$, audits the certificates, measures the recovery
laws, and embeds eleven re-evaluation policies, including random
immigrants, triggered hypermutation, memory, restart, and a
patrol+refresh hybrid, in equal-budget dynamic evolutionary loops with
frequency, budget, and selection-pressure ablations, locating exactly
where certified local maintenance beats global re-evaluation and where
it must hand over.
\end{abstract}

\section{Introduction}\label{sec:intro}

Every rank-based selection rule in a drifting environment trusts an
object that nobody maintains.  Dynamic evolutionary algorithms advance
parents by tournaments, keep top segments, protect elites, and carry
nondominated fronts, all read off a fitness ranking whose entries were
measured at different times; the environment keeps moving between
measurements.  Re-evaluating the population detects the change but spends
the evaluations that would have generated offspring; trusting the cache
spends nothing and silently corrupts every decision downstream, advancing
the wrong parents and shielding champions that have already fallen.  The
question this paper starts from is therefore an evolutionary one:
\emph{what can selection still rely on when current fitness comparisons,
rather than arithmetic, are the scarce resource?}

Our answer is to build the missing object as a data structure and prove
exactly what it can promise.  A hidden total order on $n$ items drifts by
elementary moves while a structure spends one truthful pairwise
comparison per step and keeps serving rank queries anyway.  The
deliverable we ask of it is deliberately more than an estimate.  It is an
estimate \emph{with promises}: a deterministic bound on how stale every
entry can be, a certificate attached to every reported rank that bounds
the hidden motion since that entry was last verified, exact recovery laws
after abrupt changes, and transfer theorems that carry these promises
into whatever consumes the ranking.  Truncation, tournament, elitist, and
Pareto selection consume exactly this object, and the guarantees survive
verbatim because they never reference what the items are.  The central
technical question is one of maintained guarantees: \emph{what can a unit
comparison budget promise, per item and per query, about an order it can
never see at rest?}

The split separates three layers that the dynamic-optimization
literature tends to blur, and the paper holds the boundary explicitly.
\emph{Rank-process maintenance}, what the board itself can promise under
drift, is proved (Sections~\ref{sec:patrol} and~\ref{sec:cert}).
\emph{Selection-error transfer}, how board error becomes decision error
for truncation, tournaments, elitism, and Pareto fronts, is proved, with
a coupling proposition carrying per-decision rates into whole-trajectory
guarantees for any tournament-driven algorithm
(Sections~\ref{sec:maxima} and~\ref{sec:selection}).  \emph{Optimization
performance}, what those decisions earn on a particular landscape, is
empirical by design, and the full-scale loops of
Section~\ref{sec:exp-ea} measure it against the field's own response
repertoire at equal evaluation budgets, with paired tests and effect
sizes rather than bare orderings.

Our tractable baseline is a uniform adjacent-transposition process on a
Poisson clock, the elementary mutation kernel of the symmetric group read
as an adversary.  Proposition~\ref{prop:fitness-rank} shows why adjacent
swaps are the exact ranking-level primitive of continuous trajectories with
isolated crossings; it also makes clear what the abstraction loses.  The
Poisson and spatial-uniformity assumptions are used only where stated, and
the experimental study later replaces them by clustered crossings, hot
spots, rate regimes, and nonlocal shocks.

\emph{First: what is unavoidable?}  Theorem~\ref{thm:lower} shows that any
probing schedule fixed in advance, no matter how clever the estimator that
digests the answers, must tolerate expected Kendall distance
$\Omega(\min(\alpha,1)\,n)$ at every time beyond a horizon of $n/2$ steps,
with explicit constants ($\tfrac{n-1}{12}$ at unit rate).  The proof
technique is the technical heart of the paper: after Poissonization the
event counts at the $n-1$ rank locations become independent, and the parity
of the count at any single location is shown to be independent of the entire
transcript the algorithm ever sees, provided the location's two neighbors
stayed quiet and the resident pair was never probed directly.  Disorder then
survives not because the algorithm is slow but because the transcript
carries \emph{zero bits} about a linear number of coin flips.

\emph{Second: what structure should one run?}  We analyze the
\emph{comparison patrol} (Section~\ref{sec:patrol}): three arrays and a
cursor in $3n+O(1)$ words; the cursor sweeps the estimate's adjacent
positions cyclically, repairs each local discordance it meets, and
timestamps both items it touches.  The patrol does $O(1)$ work per step,
answers every query in $O(1)$, probes every item at least once per cycle,
and guarantees a deterministic \emph{verification age} of at most $2(n-1)$
for every item at every time, within a factor four of what any schedule
with the same budget can achieve (Proposition~\ref{prop:age}).  A
boustrophedon variant, which reverses direction at the ends in the manner
of coverage paths, trades a slightly better error constant for a
deterministic $4(n-1)$ age bound.

\emph{Third: what does the structure cost at equilibrium, and what after a
shock?}  A one-line invariant carries most of the new theory: in a
drift-free cycle, \emph{every} item whose reported rank exceeds its hidden
rank is swapped down exactly once (Lemma~\ref{lem:bump}).  Consequently the
patrol is self-stabilizing in the exact sense of needing at most $L$ and
more than $L-1$ cycles to sort an estimate whose maximum overstatement is
$L$ (Theorem~\ref{thm:stabilize}); the reversed board, the worst case, is
measured at $1022.001$ cycles against the predicted window
$(1022,\,1023]$ at $n=1024$.  The
same lemma prices abrupt change: a shock of overstatement $L$ is repaired
in at most $(L{+}1)(n{-}1)$ probes, a full binary-insertion rebuild costs
$\sum_{m=2}^{n}\lceil\log_2 m\rceil$ probes whatever happened, and the two
laws cross at $L\approx\log_2 n$ (Theorem~\ref{thm:recovery}).  A hybrid
that watches only its own swap counter, the one statistic the structure
owns for free, recovers within twice the better pure response without
being told $L$ (Proposition~\ref{prop:hybrid}).  At equilibrium we prove a
floor: every location-oblivious prober, the patrols included, holds
expected Kendall error at least $\min(\alpha,1)(n-1)/24$
(Theorem~\ref{thm:floor}), so the patrol equilibrium is
$\Theta(\min(\alpha,1)\,n)$ with explicit constants on both sides; the
measured ceiling constant $0.5$ is stated as a conjecture
(Conjecture~\ref{conj:balance}) and audited, rather than asserted, by an
exact birth and death ledger of discordant pairs that verifies Little's law
to within one percent (Section~\ref{sec:exp-little}).

\emph{Fourth: what can the structure promise per query?}  Every rank query
is answered in $O(1)$ together with a \emph{displacement certificate}
(Section~\ref{sec:cert}): an interval, computed from the item's
verification age alone, that bounds how far the hidden rank can have moved
since the item was last touched.  Soundness is Lemma~\ref{lem:tail}, a
Freedman-type martingale tail combined with a Poisson tail; the half-width
depends on the age only through the ratio $g/n$, so a patrol-maintained
board promises $\pm 8$ positions at the $95\%$ level for $n=1024$, and the
same width at $n=4096$.  The promise is served as an explicit two-part
contract (Definition~\ref{def:twopart}): a proven motion radius, valid
under any maintainer and any schedule, composed with a calibrated residual
whose finite-sample validity across independent calibration runs is itself
a proposition (Proposition~\ref{prop:calibration}), not a hope.  The
certificates are the kinetic notion of a certificate transplanted to a
setting with no trajectories: nothing is known about motion except its
statistics.

The same machinery then leaves the line and enters the plane
(Section~\ref{sec:maxima}).  A planar point set in general position is, up
to monotone reparametrization, a pair of total orders; when both orders
drift, the set of maximal points, the staircase frontier of
Kung, Luccio, and Preparata~\cite{KungLucPre75}, drifts with them.  Running
one patrol per coordinate maintains the frontier: we prove the purely
combinatorial bound
$|M \diff \widehat{M}| \le 2(K^x + K^y)$
relating the misreported frontier to the two Kendall distances
(Theorem~\ref{thm:maxima}), exhibit a family on which the bound is met up to
its factor two, and assemble the reported frontier in $O(n)$ on demand.
The charging argument localizes: only discordant pairs with both endpoints
on the union of the two frontiers can ever be charged
(Theorem~\ref{thm:maxima-local}), which replaces the global Kendall budget
by a frontier-incident one and closes most of the gap between the
worst-case bound and generic instances, measured across front geometries
from logarithmically sparse to fully antidiagonal
(Section~\ref{sec:exp-maxima}).

The maintained order is consumed directly by evolutionary selection
(Section~\ref{sec:selection}).  Items are candidates, the hidden order is the
current fitness ranking, and the patrol supplies truncation, tournament, and
elitist selection with explicit staleness information.  The connection is
carried by theorems rather than analogy.  Truncation selection
obeys the transfer bound
$|T_k \diff \widehat{T}_k| \le 2\lfloor\sqrt{K}\rfloor$
(Theorem~\ref{thm:topk}), and a block-exchange family attains the bound at
every value of its right-hand side; a binary tournament between a
uniformly random pair errs with probability exactly $K_t/\binom{n}{2}$
(Proposition~\ref{prop:tournament}); and the displacement certificates
upgrade to per-decision selection guarantees through a margin rule that
selects only items remaining inside the top $k$ under the certified worst
case (Proposition~\ref{prop:elitism}).  The frontier of
Section~\ref{sec:maxima} is the two-objective member of the same family:
the staircase is the Pareto frontier of two drifting objectives,
maintained on the same budget with the same certificates.

The empirical study has three levels (Section~\ref{sec:experiments}).  The
first treats the structure as a data structure: step time, memory, access
locality, certificate tightness, and the reversed-board stabilization law,
measured to $n=65{,}536$.  The second stresses the maintenance model:
thirty paired seeds, five drift rates, non-Poisson regimes, explicit local
and nonlocal shocks with their recovery laws and the hybrid's switching
behavior, a Little's law audit of the equilibrium through an exact
inversion ledger, a split-seed calibration audit of the two-part
certificate, and Pareto fronts ranging from logarithmically sparse to fully
antidiagonal.  The third embeds every policy in the same steady-state EA on
Dynamic BitMatching and Moving Peaks.  Patrol, periodic full re-evaluation,
random partial, elite-first, and sentinel-triggered re-evaluation, no
re-evaluation, a budget-free oracle, and the stronger dynamic-optimization
responses, random immigrants, triggered hypermutation, memory with
reinjection, partial restart, and the patrol+refresh hybrid, are compared
under identical total objective-evaluation budgets.  We report optimization
error, recovery, selection mistakes, Kendall disorder, certified yield, and
empirical diagnostics of the induced rank process with paired confidence
intervals and multiple-test correction.  The diagnostics close a loop the
theory opens: the recovery crossover at $L\approx\log_2 n$ becomes a
measurable decision rule, and the benchmark landscapes are placed on the
correct side of it.

\paragraph{Contributions.}
\begin{enumerate}
\item A fully specified maintained-order abstract data type under a unit
  comparison budget: $O(1)$ probe work, $O(1)$ certified queries,
  $3n+O(1)$ words, an exact unattended-decay envelope
  (Lemma~\ref{lem:decay}), and an explicit reduction from continuous
  dynamic fitness to adjacent rank crossings
  (Proposition~\ref{prop:fitness-rank}).
\item An $\Omega(\min(\alpha,1)\,n)$ steady-state lower bound for all
  oblivious schedules with explicit constants, by Poissonization and parity
  decoupling (Theorem~\ref{thm:lower}), and a companion floor of the same
  order for every location-oblivious prober, the patrols included
  (Theorem~\ref{thm:floor}): the patrol equilibrium is
  $\Theta(\min(\alpha,1)\,n)$ with both constants exhibited and the
  remaining gap, the ceiling constant $\approx 0.5$, isolated as a
  conjecture (Conjecture~\ref{conj:balance}) with an exact
  ledger-level audit (Section~\ref{sec:exp-little}).
\item The bump lemma and exact self-stabilization: a drift-free cycle
  swaps every overstated item down exactly once (Lemma~\ref{lem:bump}),
  so sorting from maximum overstatement $L$ takes at most $L$ and more
  than $L-1$ cycles (Theorem~\ref{thm:stabilize}).
\item A deterministic recovery calculus for abrupt change: shock
  transfer $K\mapsto K+J$ (Proposition~\ref{prop:shock}), quiet-aftermath
  recovery within $(L{+}1)(n{-}1)$ probes, a provable crossover against
  the $\sum_{m\le n}\lceil\log_2 m\rceil$ rebuild at $L\approx\log_2 n$
  (Theorem~\ref{thm:recovery}), and a swap-count hybrid within twice the
  better pure response without knowing $L$
  (Proposition~\ref{prop:hybrid}).
\item Displacement certificates with proven coverage, $n$-free width at
  fixed relative age, a non-homogeneous intensity extension
  (Lemma~\ref{lem:tail}, Corollary~\ref{cor:variable-rate}), and an
  explicit two-part query contract: proven motion radius composed with a
  calibrated residual carrying a finite-sample validity guarantee
  (Definition~\ref{def:twopart}, Propositions~\ref{prop:compose}
  and~\ref{prop:calibration}).
\item Deterministic, instance-wise frontier bounds: the global charging
  bound $|M \diff \widehat{M}| \le 2(K^x{+}K^y)$, tight up to its
  constant (Theorem~\ref{thm:maxima}), its frontier-local refinement
  charging only discordances inside $M\cup\widehat M$
  (Theorem~\ref{thm:maxima-local}), and $O(n)$ assembly.
\item A selection layer for drifting fitness: the truncation transfer
  bound $|T_k \diff \widehat{T}_k| \le 2\lfloor\sqrt{K}\rfloor$ with a
  matching family at every value of the bound (Theorem~\ref{thm:topk}),
  the exact tournament error identity
  (Proposition~\ref{prop:tournament}), and certified elitism with
  per-decision confidence (Proposition~\ref{prop:elitism}), priced
  against a generational re-evaluation baseline at equal budget
  (Section~\ref{sec:exp-selection}).
\item A full dynamic-EA evaluation on two benchmark families, eleven
  policies including random immigrants, triggered hypermutation, memory,
  restart, and the patrol+refresh hybrid, gradual and abrupt changes,
  three severities, equal objective-evaluation budgets, and induced
  rank-process diagnostics that instantiate the crossover of
  Theorem~\ref{thm:recovery} as a practical policy rule
  (Section~\ref{sec:exp-ea}).
\item A seeded, test-covered ancillary package reproducing every table and
  figure, including exact theorem tests, per-run CSV data, paired statistical
  tests, source hashes, and a one-command paper profile.
\end{enumerate}

\paragraph{The ledger of claims.}
The paper deliberately runs three kinds of statement side by side, and
Table~\ref{tab:ledger} is the boundary between them, stated once so that
no later sentence has to be read twice.  Everything labeled a theorem,
lemma, proposition, or corollary is proved in full; everything empirical
is a thirty-seed (or ten-seed, where stated) measurement with its
protocol in Section~\ref{sec:experiments}; exactly one statement is a
conjecture, it concerns a single constant, and both of its neighbors,
the floor below and the audit beside it, are exact.

\begin{table}[tb]
\centering\small
\caption{Status of every load-bearing claim.  Proven results hold with
the stated constants; tight means a matching family is constructed;
measured values are seeded experiments; one constant is conjectural.}
\label{tab:ledger}
\begin{tabular}{lll}
\toprule
claim & status & where\\
\midrule
oblivious error floor $\min(\alpha,1)\frac{n-1}{24}$
  & proven & Thm.~\ref{thm:lower}\\
patrol-class floor of the same order
  & proven & Thm.~\ref{thm:floor}\\
verification age $\le 2(n-1)$, factor-$4$ optimal
  & proven, tight & Prop.~\ref{prop:age}\\
stabilization window $(L-1,\,L]$ cycles
  & proven, exact & Thm.~\ref{thm:stabilize}\\
recovery crossover at $L \approx \log_2 n$
  & proven (quiet aftermath) & Thm.~\ref{thm:recovery}\\
hybrid within $2\,\mathrm{min} + 2(n-1)$
  & proven (quiet aftermath) & Prop.~\ref{prop:hybrid}\\
motion certificate coverage $\ge 1-\delta$
  & proven & Lem.~\ref{lem:tail}\\
residual padding $b$
  & calibrated, split-seed & Prop.~\ref{prop:calibration}\\
frontier bounds, global and localized
  & proven, tight family & Thms.~\ref{thm:maxima},~\ref{thm:maxima-local}\\
selection transfer $2\lfloor\sqrt K\rfloor$, tournament identity
  & proven, tight & Thm.~\ref{thm:topk}, Prop.~\ref{prop:tournament}\\
equilibrium constant $K^\ast/(\alpha n) \to \tfrac12$
  & \emph{conjectured}, audited & Conj.~\ref{conj:balance},
    \S\ref{sec:exp-little}\\
equilibrium constants $0.508$--$0.554$; EA orderings
  & measured & \S\ref{sec:experiments}\\
\bottomrule
\end{tabular}
\end{table}

The results form three short, independent chains plus two standalone
floors, and Figure~\ref{fig:depgraph} draws the map: the repair chain
(bump lemma to self-stabilization to recovery to the hybrid), the
certificate chain (motion tail to two-part contract to certified
elitism to failure time to the coupling transfer), and the charging
chain (global to localized to its necessity), with the two lower bounds
and the transfer identities standing on the model alone.  No proof in
the paper depends on the conjecture.

\begin{figure}[tb]
\centering
\begin{tikzpicture}[
  every node/.style={draw, rounded corners=1.5pt, font=\scriptsize,
                     inner sep=2.5pt, align=center},
  arr/.style={->, >=stealth, line width=0.6pt, draw=black!60}]
\node (model) at (0, 0) {drifting-order model (\S\ref{sec:model})};
\node (lower) at (-4.9, -1.2)
  {oblivious bound\\Thm.~\ref{thm:lower}};
\node (floor) at (-4.9, -2.5)
  {patrol-class floor\\Thm.~\ref{thm:floor}};
\node (conj) at (-4.9, -3.8)
  {\emph{conjectured} constant\\Conj.~\ref{conj:balance}
   $\leftarrow$ audit \S\ref{sec:exp-little}};
\node (bump) at (0, -1.2) {bump lemma\\Lem.~\ref{lem:bump}};
\node (stab) at (0, -2.5)
  {self-stabilization\\Thm.~\ref{thm:stabilize}};
\node (rec) at (0, -3.8) {recovery crossover\\Thm.~\ref{thm:recovery}};
\node (hyb) at (0, -5.1) {swap-count hybrid\\Prop.~\ref{prop:hybrid}};
\node (tail) at (4.9, -1.2)
  {motion tail\\Lem.~\ref{lem:tail}};
\node (twopart) at (4.9, -2.5)
  {two-part contract\\Props.~\ref{prop:compose},~\ref{prop:calibration}};
\node (elit) at (4.9, -3.8)
  {certified elitism\\Prop.~\ref{prop:elitism}};
\node (fail) at (4.9, -5.1)
  {failure time, coupling\\Props.~\ref{prop:failure-time},%
   ~\ref{prop:coupling}};
\node (sel) at (-4.4, -5.1)
  {selection transfer\\Thm.~\ref{thm:topk},
   Prop.~\ref{prop:tournament}};
\node (charge) at (-2.2, -6.4)
  {frontier charging\\Thms.~\ref{thm:maxima},~\ref{thm:maxima-local},
   Prop.~\ref{prop:local-necessity}};
\draw[arr] (model) -- (lower);
\draw[arr] (model) -- (bump);
\draw[arr] (model) -- (tail);
\draw[arr] (lower) -- (floor);
\draw[arr] (bump) -- (stab);
\draw[arr] (stab) -- (rec);
\draw[arr] (rec) -- (hyb);
\draw[arr] (tail) -- (twopart);
\draw[arr] (twopart) -- (elit);
\draw[arr] (elit) -- (fail);
\draw[arr] (model) to[out=200, in=50] (sel.north east);
\draw[arr] (model) to[out=215, in=160] (charge.west);
\end{tikzpicture}
\caption{The dependency map.  Three proof chains and two floors stand
on the model; the conjecture (bottom left) supports nothing, it is
itself supported, by the floor beneath it and the ledger audit beside
it.}
\label{fig:depgraph}
\end{figure}

\section{The drifting-order model}\label{sec:model}

\subsection{Process, probes, estimates}

Let $V$ be a set of $n \ge 3$ items.  At every time $t \in \N$ the hidden
state is a bijection $\rk_t : V \to \{1,\dots,n\}$, the \emph{hidden
ranking}; $\rk_t(x)$ is the rank of item $x$, rank $1$ being least.  Ranks
$\ell$ and $\ell+1$ are bridged by \emph{location} $\ell \in
\{1,\dots,n-1\}$.

Time step $t$ consists of two phases.
\begin{enumerate}
\item \textbf{Drift.}  $N_t \sim \mathrm{Poisson}(\alpha)$ events execute,
  independently of everything else.  Each event draws a location $\ell$
  uniformly from $\{1,\dots,n-1\}$ and swaps the two items currently at
  ranks $\ell$ and $\ell+1$.
\item \textbf{Probe.}  The algorithm selects a pair $\{x,y\} \subseteq V$
  and receives the truthful bit $[\rk_t(x) < \rk_t(y)]$, where $\rk_t$ is
  the ranking after the drift phase.
\end{enumerate}
The drift kernel is the elementary mutation operator on the symmetric
group; a single event is the smallest change a total order can absorb.  The
algorithm maintains an \emph{estimate} $\est_t : V \to \{1,\dots,n\}$, also
a bijection, which it may revise after each probe.  A \emph{schedule} is the
rule selecting the probed pairs.  A schedule is \emph{oblivious} if the
entire sequence of probed pairs is fixed before time begins (possibly drawn
at random, independently of the drift); it is \emph{adaptive} if probes may
depend on earlier answers.  Estimators are never restricted: they may be
adaptive, randomized, and computationally unbounded.  The patrol of
Section~\ref{sec:patrol} is adaptive in the mildest possible way (its
probe targets depend on its own estimate) and uses $O(1)$ work per step.

The two stochastic assumptions play different roles.  The Poisson clock
makes event counts on disjoint intervals independent and supplies the tails
used in Section~\ref{sec:cert}.  Uniform choice among the $n-1$ rank locations
makes elementary moves spatially exchangeable and determines the martingale
variance.  By contrast, the shock, selection, and Pareto-front transfer
results are deterministic and use neither assumption.  The robustness study
of Section~\ref{sec:exp-robust} deliberately violates both.

\begin{table}[t]
\centering\small
\caption{Rank orientation.  Larger ranks always mean fitter candidates or
larger coordinates; the implementation uses the same orientation with
zero-based arrays.}
\label{tab:notation}
\begin{tabular}{lll}
\toprule
object & manuscript & ancillary code\\
\midrule
least fit & rank $1$ & rank $0$\\
most fit & rank $n$ & rank $n-1$\\
top-$k$ boundary & $r(x)>n-k$ & $r[x]\ge n-k$\\
estimated rank & $\est_t(x)$ & \texttt{pos[x]}\\
true rank & $\rk_t(x)$ & \texttt{rank[x]}\\
\bottomrule
\end{tabular}
\end{table}

\subsection{Disorder functionals}

For two rankings $\sigma, \tau$ on $V$ the \emph{Kendall distance}
$K(\sigma,\tau)$ is the number of discordant pairs, i.e.\ pairs
$\{x,y\}$ ordered oppositely by $\sigma$ and $\tau$, and the
\emph{footrule} $F(\sigma,\tau) = \sum_{x \in V} |\sigma(x)-\tau(x)|$ is
the total rank displacement.  Diaconis and Graham~\cite{DiaGra77} proved
\begin{equation}\label{eq:dg}
K(\sigma,\tau) \;\le\; F(\sigma,\tau) \;\le\; 2\,K(\sigma,\tau),
\end{equation}
so the two functionals agree up to a factor two and $F/n$ reads as the
\emph{mean per-item rank error}.  We abbreviate
$K_t = K(\est_t, \rk_t)$ and $F_t = F(\est_t, \rk_t)$.

Two more quantities are bookkept by the structures of this paper.  The
\emph{verification age} $a_t(x)$ of item $x$ is the number of steps since a
probe last involved $x$ (items count as verified at $t=0$).  A
\emph{displacement certificate} for $x$ is a radius $D$ such that the
hidden rank of $x$ has moved less than $D$ since $x$'s last verification,
with prescribed confidence; certificates are the subject of
Section~\ref{sec:cert}.

\subsection{Unattended decay}

The first fact of the model is that an estimate left alone rots at unit
rate, and the onset of the rot is exactly computable.  This curve is the
zero-budget envelope against which every maintainer is judged.

\begin{lemma}[unattended decay]\label{lem:decay}
Fix any estimate $\sigma$ and let the hidden ranking start at
$\rk_0 = \sigma$ and drift.  After $m$ events,
\[
\E\,K(\sigma,\rk) \;\ge\; \frac{n-1}{2}\Bigl(1 - \rho^{\,m}\Bigr),
\qquad \rho = 1 - \tfrac{2}{n-1},
\]
and consequently, at time $t$ under the Poisson clock,
\[
\E\,K_t \;\ge\; \frac{n-1}{2}\Bigl(1 - e^{-2\alpha t/(n-1)}\Bigr).
\]
In particular $\E K_t \ge \tfrac{1}{2}\alpha t\,(1 - \tfrac{\alpha
t}{n-1})$ for all $t$: disorder accrues at essentially one unit per event
until it saturates at the $\Theta(n)$ floor.
\end{lemma}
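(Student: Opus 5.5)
The plan is to track $K_m = K(\sigma,\rk)$ event by event and extract a linear recursion for its expectation. Consider one drift event at location $\ell$. It swaps the two items $u,v$ at hidden ranks $\ell,\ell+1$. Those two items are adjacent in $\rk$, so the swap changes the relative order of exactly one pair, $\{u,v\}$. Hence $K$ changes by exactly $\pm1$: it goes up by $1$ if $\{u,v\}$ was concordant with $\sigma$ before the event, and down by $1$ if it was discordant. Write $d_m$ for the number of locations whose resident pair is discordant with $\sigma$ after $m$ events. Since $\ell$ is uniform on $\{1,\dots,n-1\}$ and independent of the past, conditioning on the state after $m$ events gives
\[
\E[K_{m+1}-K_m \mid \rk] \;=\; 1-\frac{2d_m}{n-1}.
\]

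The key inequality is $d_m \le K_m$. This holds because each discordant adjacent location is in particular a discordant pair, and distinct locations give distinct pairs. Substituting and taking expectations gives
\[
\E K_{m+1} \;\ge\; \rho\,\E K_m + 1, \qquad \rho = 1-\tfrac{2}{n-1}.
\]
Here $\rho \in [0,1)$ because $n \ge 3$, so multiplying inequalities by $\rho$ preserves their direction. Starting from $K_0=0$ (since $\rk_0=\sigma$), induction gives
\[
\E K_m \;\ge\; \sum_{j<m}\rho^j \;=\; \frac{1-\rho^m}{1-\rho} \;=\; \frac{n-1}{2}\bigl(1-\rho^m\bigr).
\]
This is the first claim. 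I expect this step to be the only real content, and the main thing to get right is the direction of the inequality: we only need $d_m \le K_m$, and never the reverse.

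For the Poisson statement, note that the estimate is left fixed at $\sigma$ (unattended), so probes are irrelevant. The number of events by time $t$ is $M\sim\mathrm{Poisson}(\alpha t)$, independent of the event locations. Conditioning on $M$ and applying the bound above, which is affine in $\rho^M$, gives $\E K_t \ge \tfrac{n-1}{2}\bigl(1-\E\rho^M\bigr)$. The probability generating function gives exactly
\[
\E\rho^M = e^{-\alpha t(1-\rho)} = e^{-2\alpha t/(n-1)},
\]
so no Jensen-type step is needed. Finally, set $x=2\alpha t/(n-1)$ and use $1-e^{-x}\ge x-x^2/2$. This yields
\[
\E K_t \;\ge\; \alpha t\Bigl(1-\frac{\alpha t}{n-1}\Bigr) \;\ge\; \tfrac12\,\alpha t\Bigl(1-\frac{\alpha t}{n-1}\Bigr).
\]
The last inequality is trivial when the bracket is nonnegative. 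When the bracket is negative, the stated bound is negative and holds vacuously since $K_t\ge 0$.
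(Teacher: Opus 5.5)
Your proposal is correct and follows the paper's proof step for step: the conditional drift $1-2d/(n-1)$ with $d\le K$, the affine recursion $\E K_{m+1}\ge\rho\,\E K_m+1$ from $K_0=0$, and the exact Poisson generating function $\E\rho^{M}=e^{-2\alpha t/(n-1)}$. You are also right that $1-e^{-x}\ge x-x^2/2$ gives $\alpha t\bigl(1-\tfrac{\alpha t}{n-1}\bigr)$, a factor two stronger than the stated bound. When the bracket is negative, the stated bound holds vacuously.
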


\begin{proof}
Condition on the configuration after $m$ events and let $d$ be the number
of locations whose two current residents form a discordant pair with
respect to $\sigma$.  Distinct locations hold distinct pairs, so $d \le
K(\sigma,\rk)$.  The next event increases $K$ by one if it hits any of the
other $n-1-d$ locations and decreases it by one otherwise, whence
\[
\E\bigl[\Delta K \,\big|\, \text{configuration}\bigr]
 \;=\; 1 - \frac{2d}{n-1} \;\ge\; 1 - \frac{2K}{n-1}.
\]
Writing $a_m = \E K$ after $m$ events and taking expectations,
$a_{m+1} \ge \rho\,a_m + 1$ with $a_0 = 0$, and induction gives
$a_m \ge \tfrac{n-1}{2}(1-\rho^m)$ exactly (the affine map has fixed point
$\tfrac{n-1}{2}$).  The number of events in $t$ steps is
$N \sim \mathrm{Poisson}(\alpha t)$ and is independent of their locations,
so $\E K_t \ge \tfrac{n-1}{2}(1 - \E\rho^{\,N})
 = \tfrac{n-1}{2}(1 - e^{-\alpha t(1-\rho)})$,
which is the stated bound.  The final inequality uses
$1-e^{-z} \ge z - z^2/2$.
\end{proof}

\begin{remark}\label{rem:mixing}
The lemma bounds only the onset.  Left alone forever, the hidden ranking
mixes to uniform, at expected Kendall distance $n(n-1)/4$ from any fixed
estimate, on the $\Theta(n^3 \log n)$ timescale of the adjacent
transposition shuffle established by Wilson~\cite{Wilson04}.  The
maintenance problem lives three orders of magnitude below that horizon:
the question is never whether information dies, but how much of it a unit
probe budget can keep alive per step.
\end{remark}

\subsection{From dynamic fitness to rank drift}\label{sec:fitness-model}

The connection to dynamic evolutionary optimization is exact for gradual
fitness motion and deliberately lossy.

\begin{proposition}[fitness--rank reduction]\label{prop:fitness-rank}
Let $f_x:I\to\mathbb{R}$ be continuous for every candidate $x\in V$.
Suppose fitness values are distinct except at isolated times, every tie
involves exactly two candidates, and each crossing is transversal.  Then the
ranking induced by $(f_x(t))_{x\in V}$ is constant between crossings and
changes by one adjacent transposition at every crossing.
\end{proposition}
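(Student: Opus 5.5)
The plan is a local-constancy argument on the open intervals between crossing times, followed by a local analysis at each crossing. Let $C\subset I$ be the set of times at which some tie $f_x(t)=f_y(t)$, $x\neq y$, occurs. By hypothesis $C$ is isolated, hence discrete in $I$, so $I\setminus C$ is a union of open intervals.

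\textbf{Step 1: constancy between crossings.} On a component $J$ of $I\setminus C$, every difference $g_{xy}=f_x-f_y$ is continuous and never zero. By the intermediate value theorem each $g_{xy}$ has constant sign on $J$. The ranking induced at time $t$ is determined entirely by the signs $\operatorname{sgn} g_{xy}(t)$ over all pairs, so the ranking is constant on $J$.

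\textbf{Step 2: a single crossing.} Fix $t_0\in C$. By hypothesis exactly one pair, say $\{x,y\}$, is tied at $t_0$, and every other pair $\{u,v\}$ has $g_{uv}(t_0)\ne 0$. Since $V$ is finite, continuity gives an $\varepsilon>0$ such that on $(t_0-\varepsilon,t_0+\varepsilon)$ two things hold: no other pair changes sign, and $t_0$ is the only point of $C$ in that window.

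\textbf{Step 3: adjacency.} No third candidate $z$ can lie strictly between $x$ and $y$ just before $t_0$. If it did, we would have $f_x<f_z<f_y$ on a left neighbourhood of $t_0$. Letting $t\to t_0$ then forces $f_z(t_0)=f_x(t_0)$, which is a second tie at $t_0$ and is excluded. (It would also violate the requirement that a tie involves exactly two candidates.) Hence $x$ and $y$ occupy adjacent ranks $\ell,\ell+1$ just before $t_0$, and by the same argument just after.

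\textbf{Step 4: the swap itself.} Transversality means $g_{xy}$ changes sign at $t_0$. I would read it that way directly. If it is instead defined via a nonzero derivative, that implies the sign change. Combining this with Step 2, the rankings on the two sides of $t_0$ differ exactly in the relative order of $x$ and $y$. Two rankings that differ only in the order of one pair at adjacent ranks are related by exactly the adjacent transposition at location $\ell$.

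The main obstacle, such as it is, lies in the interpretation of the hypotheses rather than in any computation. First, "isolated" must give discreteness of $C$, so that "between crossings" is meaningful and Step 1 applies. Second, "transversal" must be taken to mean a genuine sign change; a tangential touch would leave the ranking unchanged at that time. I would also add a remark that the ranking exactly at $t_0$ is not a strict order, so the statement concerns the one-sided limits.
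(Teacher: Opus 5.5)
Your proof is correct and follows the same route as the paper. Continuity (through the intermediate value theorem) fixes every pairwise order between crossings, and the squeeze argument on a third candidate shows the tied pair is rank-adjacent, so each crossing is one adjacent transposition. You are more explicit than the paper about the hypotheses it leaves implicit: discreteness of the tie set, reading ``transversal'' as a sign change, and comparing one-sided rankings at the crossing.
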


\begin{proof}
Continuity preserves every strict pairwise inequality on an interval without
a crossing.  At a crossing only two candidates, say $x$ and $y$, exchange
order.  If a third candidate lay strictly between them immediately before the
crossing, continuity would force another coincident crossing before $x$ and
$y$ could reverse.  Hence they are adjacent and their transposition is the
only ranking change.
\end{proof}

The reduction retains exactly what rank-based selection consumes: pairwise
order and crossing frequency.  It discards fitness gaps, gradients, basin
geometry, and the magnitude of improvement.  The theory therefore applies
directly to truncation, tournament, rank-based replacement, and
nondomination, but not to fitness-proportionate selection or adaptation rules
that consume numerical gaps.  The homogeneous Poisson/uniform-location law
is a tractable baseline for the crossing process, not a claim about every
dynamic landscape.

There are two cost interpretations.  With a comparison oracle, one probe is
one duel or preference query.  With scalar fitness, comparing two stale
candidates requires at most two current objective evaluations.  The full EA
experiments charge both evaluations and hold the total evaluation budget
fixed across maintenance policies.

\subsection{Abrupt environmental changes}\label{sec:shocks}

An abrupt change need not be represented by a fictitious gradual trajectory.
Let $\rk^-$ and $\rk^+$ be the true rankings immediately before and after a
change and call $J=K(\rk^-,\rk^+)$ its \emph{rank-shock size}.

\begin{proposition}[shock transfer]\label{prop:shock}
For every estimate $\est$,
\[
K(\est,\rk^+)\le K(\est,\rk^-)+J.
\]
After the shock, Theorems~\ref{thm:maxima} and~\ref{thm:topk} and
Proposition~\ref{prop:tournament} apply with the post-change distance.
\end{proposition}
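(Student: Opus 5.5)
The inequality is the triangle inequality for the Kendall distance. The plan is to prove that inequality directly and then note that the second sentence of the statement needs no further argument.

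For any three rankings $\sigma,\tau,\pi$ on $V$, and any pair $\{x,y\}$ discordant between $\sigma$ and $\pi$, I will show the pair is discordant between $\sigma$ and $\tau$ or between $\tau$ and $\pi$. Each ranking orients the pair one of two ways. If $\tau$ agreed with $\sigma$ on the pair and also agreed with $\pi$, then $\sigma$ and $\pi$ would agree, a contradiction. So the set of $\sigma$–$\pi$ discordant pairs lies inside the union of the $\sigma$–$\tau$ and $\tau$–$\pi$ discordant sets, and a union bound on cardinalities gives $K(\sigma,\pi)\le K(\sigma,\tau)+K(\tau,\pi)$. Setting $\sigma=\est$, $\tau=\rk^-$, $\pi=\rk^+$ and recalling $J=K(\rk^-,\rk^+)$ yields $K(\est,\rk^+)\le K(\est,\rk^-)+J$.

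For the second sentence, I will observe that Theorems~\ref{thm:maxima} and~\ref{thm:topk} and Proposition~\ref{prop:tournament} are deterministic, instance-wise statements about a pair of rankings (estimate, truth). As the model section already notes, they use neither the Poisson clock nor spatial uniformity. They therefore apply verbatim to the pair $(\est,\rk^+)$ with $K=K(\est,\rk^+)$. By the inequality just proved, that $K$ can be replaced by the upper bound $K(\est,\rk^-)+J$ wherever the bound is monotone in $K$, which all three are.

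I expect no real obstacle. The only point needing care is confirming that the cited transfer results assume nothing about how the true ranking arose, which holds since they are stated for arbitrary bijections.
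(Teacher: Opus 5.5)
Your proposal is correct and follows the paper's route: the inequality is the Kendall triangle inequality, which you verify directly by the pairwise-inclusion argument, and the downstream results transfer because they are deterministic functions of the post-change estimate and truth. One small point: Theorem~\ref{thm:maxima} consumes two ranking pairs rather than one, so apply the shock inequality coordinatewise to $K^x$ and $K^y$, which requires a separate rank-shock size for each coordinate order.
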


\begin{proof}
The inequality is the triangle inequality for Kendall distance.  The
downstream statements are deterministic functions of the new true and
estimated rankings.
\end{proof}

Gradual drift thus supplies probabilistic age certificates, whereas abrupt
changes receive a deterministic damage certificate.  When a change detector
fires, pre-change age certificates are invalidated until the affected items
are probed again; continuing to advertise them would be unsound.

\section{A lower bound for oblivious probing}\label{sec:lower}

A schedule with a unit budget touches one pair per step while drift
disturbs $\alpha$ locations per step in expectation.  The next theorem
makes the resulting deficit quantitative for every oblivious schedule.  The
proof inspects a window of $w$ steps, Poissonizes the drift so that the
event counts $H_1,\dots,H_{n-1}$ at the $n-1$ locations become independent
$\mathrm{Poisson}(\mu)$ variables with $\mu = \alpha w/(n-1)$, and then
shows that for any location whose neighbors stayed quiet and whose resident
pair was never probed inside the window, the \emph{parity} of $H_\ell$,
the one quantity that decides the current order of the resident pair, is
independent of the entire transcript.  An estimator may be arbitrarily
intelligent; on those locations it is guessing a fair-ish coin it has
never observed.

\begin{theorem}[oblivious lower bound]\label{thm:lower}
Fix $\alpha > 0$ and an oblivious schedule with an arbitrary (adaptive,
randomized) estimator.  For every integer $1 \le w \le \min(t,\,n-2)$, with
$\mu = \alpha w/(n-1)$,
\[
\E\,K_t \;\ge\; (n-1-w)\;e^{-2\mu}\,\frac{1-e^{-2\mu}}{2}.
\]
\end{theorem}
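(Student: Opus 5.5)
The plan is to look only at the last $w$ steps, the window $(t-w,t]$. Outside a set of locations that the window can inform about, the estimator has to guess a hidden parity bit. First I will split the drift by location. In each step the number of events is $\mathrm{Poisson}(\alpha)$ and each event picks its location uniformly. By Poisson thinning, the per-step event counts at locations $1,\dots,n-1$ are then independent $\mathrm{Poisson}(\alpha/(n-1))$ variables, independent across steps and locations. So inside the window the location processes $(H_\ell^{(s)})_{s\in(t-w,t]}$ are mutually independent, and so are their totals $H_\ell\sim\mathrm{Poisson}(\mu)$. They are also independent of the ranking $\rk_{t-w}$ at the start of the window, of all earlier randomness, and of the estimator's coins. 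Within one step the order in which events at different locations fire does matter. I will either fix that order canonically or note that all the events below happen at pairwise non-adjacent locations, where swaps commute.

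Next I will define the good locations. Call location $\ell$ \emph{probed} if the pair resident at ranks $\{\ell,\ell+1\}$ under $\rk_{t-w}$ is one of the $w$ pairs probed in the window. The schedule is oblivious and the rankings are bijections, so each probed pair is resident at no more than one location. Hence at most $w$ locations are probed, deterministically, and the set of unprobed locations is determined by $\rk_{t-w}$ alone. Call $\ell$ \emph{shielded} if its neighbors had no events in the window, i.e.\ $H_{\ell-1}=H_{\ell+1}=0$, with a missing neighbor counting as quiet at the ends. Shielding has probability at least $e^{-2\mu}$ and is independent of $\rk_{t-w}$. On the shielded event the resident pair $\{x,y\}$ stays in ranks $\{\ell,\ell+1\}$ for the whole window. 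Only events at $\ell$ swap it, so the pair's order at time $t$ is its initial order flipped $H_\ell$ times, and the order is decided by the parity of $H_\ell$.

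The key step, and the one I expect to be the main obstacle, is to show that on the event that $\ell$ is unprobed and shielded, the whole transcript is a function of the pre-window history, the algorithm's coins, and the location processes other than $\ell$. Once that holds, $\operatorname{parity}(H_\ell)$ is independent of the transcript. Take any probe in the window that is not $\{x,y\}$. It either avoids $x$ and $y$, or it compares one of them with some $z$ that lies outside ranks $\{\ell,\ell+1\}$. In the second case the answer is fixed by which side of $\ell$ the item $z$ sits on. It does not depend on which of $x,y$ currently holds rank $\ell$. The motion of every other item is generated by the other locations' events, and those never touch ranks $\ell,\ell+1$ while $\ell$ is shielded. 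Pre-window probes are measurable with respect to the past. I will state this as an explicit coupling: run the process with the $\ell$-process resampled, and check that the transcript does not change on the event. Conditioning on the event only involves $\rk_{t-w}$ and $H_{\ell\pm1}$, and both are independent of the $\ell$-process. So given the transcript and the event, $H_\ell$ is odd with probability exactly $p=\tfrac{1-e^{-2\mu}}{2}<\tfrac12$. Whatever relative order of $x,y$ the estimator reports, it is wrong with conditional probability at least $\min(p,1-p)=p$.

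Finally I will sum. Different locations carry different resident pairs, so $K_t$ is at least the number of unprobed, shielded locations whose pair is misordered. Taking expectations, and using that shielding is independent of which locations are probed, gives
\[
\E K_t\;\ge\;\E\,\#\{\text{unprobed }\ell\}\cdot e^{-2\mu}\cdot p\;\ge\;(n-1-w)\,e^{-2\mu}\,\frac{1-e^{-2\mu}}{2}.
\]
The hypothesis $w\le t$ makes the window fit inside $[0,t]$, and $w\le n-2$ keeps the count $n-1-w$ positive.
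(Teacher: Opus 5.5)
Your proposal is correct and follows the paper's proof essentially step for step. Like the paper, you Poissonize the last $w$ steps into independent per-location counts, freeze the resident pair at any location whose neighbors stay quiet, and show that the transcript carries no information about the parity of $H_\ell$ when that pair is unprobed; you phrase this last step as a resampling coupling where the paper uses the $\sigma$-algebra $\mathcal{G}_\ell$. You also bound the number of probed locations by $w$ using injectivity of resident pairs, exactly as the paper does.

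One small tightening concerns the within-step ordering. You need only two facts: on the shielded event the location-$\ell$ swaps commute with every other event (the other events need not be pairwise non-adjacent), and the ordered sequence of non-$\ell$ events is independent of the $\ell$-process. Fixing a canonical order instead would alter the process.
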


\begin{proof}
Schedules drawn at random are handled by conditioning on the realization,
which is independent of the drift; so let the schedule be a fixed sequence
of item pairs.  Fix $t$ and consider the window $W$ of the last $w$ steps.
Write $\pi_0$ for the hidden ranking at the start of the window and
$Q_\ell = \{\pi_0^{-1}(\ell),\, \pi_0^{-1}(\ell+1)\}$ for the pair resident
at location $\ell$ then.  Let $P_W$ be the set of pairs probed during $W$;
$|P_W| \le w$.

\emph{Poissonization.}  The window contains
$N \sim \mathrm{Poisson}(\alpha w)$ drift events with i.i.d.\ uniform
locations; by the marking theorem the per-location counts
$(H_\ell)_{\ell=1}^{n-1}$ are independent $\mathrm{Poisson}(\mu)$.

For each location $\ell$ define the quiet-neighborhood event
\[
A_\ell \;=\; \{\,H_{\ell'} = 0 \text{ for } \ell' \in
\{\ell-1,\ell+1\} \cap \{1,\dots,n-1\}\,\},
\qquad \Prob(A_\ell) \ge e^{-2\mu}.
\]

\emph{Claim 1 (frozen residency).}  On $A_\ell$, the set of items occupying
ranks $\{\ell,\ell+1\}$ equals $Q_\ell$ throughout the window, their
relative order at time $t$ equals their order under $\pi_0$ flipped if and
only if $H_\ell$ is odd, and no item outside $Q_\ell$ has its rank
trajectory affected by the events at $\ell$.  Indeed, an item enters or
leaves the slot set $\{\ell,\ell+1\}$ only through an event at location
$\ell-1$ or $\ell+1$, which $A_\ell$ forbids; events at $\ell$ swap the two
residents in place and touch nobody else.

\emph{Claim 2 (transcript decoupling).}  Let $\mathcal{G}_\ell$ be the
$\sigma$-algebra generated by the entire history up to the start of the
window, the schedule, the estimator's coins, and the window's marked event
process restricted to locations other than $\ell$.  Then
$H_\ell$ is $\mathrm{Poisson}(\mu)$ independent of $\mathcal{G}_\ell$, and
$A_\ell \in \mathcal{G}_\ell$.  We claim that on
$A_\ell \cap \{Q_\ell \notin P_W\}$ every probe answer in the window is
$\mathcal{G}_\ell$-measurable.  A probe of a pair disjoint from $Q_\ell$
involves two items whose ranks are unaffected by events at $\ell$ (Claim 1)
and hence are determined by $\mathcal{G}_\ell$.  A probe of a pair
$\{x,z\}$ with $x \in Q_\ell$, $z \notin Q_\ell$ has answer
$[\rk(x) < \rk(z)] = [\ell < \rk(z)]$, because $\rk(x) \in \{\ell,\ell+1\}$
while $\rk(z) \notin \{\ell,\ell+1\}$; the answer is the same whichever
slot $x$ holds, hence again $\mathcal{G}_\ell$-measurable.  The direct probe
$\{x,y\} = Q_\ell$ is excluded by assumption.  Therefore the full
transcript $T$, and with it the estimate $\est_t = f(T,\text{coins})$, is
$\mathcal{G}_\ell$-measurable on this event, while the truth about the order
of $Q_\ell$ at time $t$ is $\pi_0$'s order flipped iff $H_\ell$ is odd: a
function of a parity independent of $\mathcal{G}_\ell$.

\emph{Per-location error.}  Conditioned on $\mathcal{G}_\ell$, on the event
$A_\ell \cap \{Q_\ell \notin P_W\}$ the estimator's ordering of $Q_\ell$ is
fixed, and it disagrees with the truth with probability at least
\[
\min\bigl(\Prob(H_\ell \text{ odd}),\, \Prob(H_\ell \text{ even})\bigr)
 = \frac{1 - e^{-2\mu}}{2},
\]
the parity probabilities of a Poisson($\mu$) count.

\emph{Counting.}  Distinct locations have distinct resident pairs (as
unordered pairs), and each misordered pair contributes one discordant pair
to $K_t$.  The map $\ell \mapsto Q_\ell$ is injective and $|P_W| \le w$, so
at most $w$ locations have $Q_\ell \in P_W$; note $\{Q_\ell \in P_W\}$ is
$\mathcal{G}_\ell$-measurable and independent of the window events.  Hence
\[
\E K_t \;\ge\; \sum_{\ell}
\E\Bigl[\mathbf{1}_{A_\ell}\,\mathbf{1}_{\{Q_\ell \notin P_W\}}\,
        \Prob\bigl(\est_t \text{ misorders } Q_\ell \,\big|\,
        \mathcal{G}_\ell\bigr)\Bigr]
\;\ge\; \frac{1-e^{-2\mu}}{2}\, e^{-2\mu} \,(n-1-w). \qedhere
\]
\end{proof}

\begin{corollary}[unit and higher rates]\label{cor:unit}
If $1 \le \alpha \le (n-1)/8$ and $n \ge 32$, then for all
$t \ge \lceil (n-1)/(4\alpha) \rceil$,
\(
\E K_t \ge (n-1)/12 .
\)
\end{corollary}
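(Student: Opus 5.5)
The plan is to apply Theorem~\ref{thm:lower} with one window length, chosen so that $\mu$ is pinned in a fixed interval around $\tfrac14$. Take
\[
w \;=\; \Bigl\lceil \frac{n-1}{4\alpha} \Bigr\rceil .
\]
Everything then reduces to elementary estimates on $x\mapsto (x-x^2)/2$ and on the prefactor $n-1-w$.

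\emph{Admissibility of $w$.} Since $\alpha\le (n-1)/8$, we have $(n-1)/(4\alpha)\ge 2$, so $w\ge 1$. Since $\alpha\ge 1$, we get $w\le (n-1)/4+1\le n-2$ for $n\ge 32$. The hypothesis $t\ge\lceil (n-1)/(4\alpha)\rceil$ is exactly $t\ge w$. Hence $1\le w\le\min(t,n-2)$ and the theorem applies.

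\emph{Range of $\mu$.} With $\mu=\alpha w/(n-1)$, the bounds $(n-1)/(4\alpha)\le w<(n-1)/(4\alpha)+1$ give
\[
\tfrac14\le\mu<\tfrac14+\tfrac{\alpha}{n-1}\le\tfrac14+\tfrac18=\tfrac38 .
\]
This is the one place where the upper constraint on $\alpha$ is used.

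\emph{Bounding the rate factor.} Put $x=e^{-2\mu}$. The factor in Theorem~\ref{thm:lower} is $e^{-2\mu}(1-e^{-2\mu})/2=(x-x^2)/2$, with $x\in[e^{-3/4},e^{-1/2}]\approx[0.472,\,0.607]$. The function $x-x^2$ is concave, so its minimum over this interval is attained at an endpoint. At the larger endpoint $x=e^{-1/2}$ it is about $0.2386$; at the smaller endpoint it is about $0.249$. Hence the factor is at least about $0.1193$.

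\emph{Bounding the prefactor and concluding.} We have
\[
n-1-w\;\ge\;(n-1)-\tfrac{n-1}{4}-1\;=\;\tfrac34(n-1)-1 .
\]
Multiplying the two bounds gives
\[
\E K_t\;\ge\;0.1193\Bigl(\tfrac34(n-1)-1\Bigr)\;\approx\;0.0895\,(n-1)-0.12 .
\]
This is at least $(n-1)/12\approx 0.0833\,(n-1)$ as soon as $0.006\,(n-1)\ge 0.12$, that is for $n-1\ge 20$. The assumption $n\ge 32$ covers this with room to spare.

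The only delicate step is the numerical one: controlling $e^{-1/2}(1-e^{-1/2})$ with a certified lower bound such as $0.238$, rather than an approximate decimal. I would do this with rational bounds $0.606<e^{-1/2}<0.6066$, which are enough to keep the final margin positive.
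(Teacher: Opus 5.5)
Your proof is correct and follows the paper's argument. You use the same window $w=\lceil (n-1)/(4\alpha)\rceil$, the same confinement $\mu\in[\tfrac14,\tfrac38]$, endpoint minimization of the rate factor, and the same prefactor bound $n-1-w\ge\tfrac34(n-1)-1$. The differences are cosmetic: you minimize via concavity of $x-x^2$ in $x=e^{-2\mu}$ where the paper uses unimodality of $g(\mu)$, and you absorb the $-1$ additively (closing already for $n-1\ge 20$) where the paper converts it to $0.7177\,(n-1)$ using $n\ge 32$.
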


\begin{proof}
Take $w = \lceil (n-1)/(4\alpha)\rceil \le (n-1)/4 + 1 \le n-2$.  Then
$\mu = \alpha w/(n-1) \in [\tfrac14, \tfrac14 + \tfrac{\alpha}{n-1}]
\subseteq [\tfrac14, \tfrac38]$ since $\alpha \le (n-1)/8$.  The function
$g(\mu) = e^{-2\mu}(1-e^{-2\mu})/2$ is unimodal with maximum at
$\mu = \tfrac{\ln 2}{2} \approx 0.347$, so on $[\tfrac14,\tfrac38]$ it is
minimized at an endpoint: $g(\tfrac14) > 0.1193$ and
$g(\tfrac38) > 0.1246$.  Also
$n-1-w \ge \tfrac34(n-1) - 1 \ge 0.7177\,(n-1)$ for $n \ge 32$.  The
theorem gives $\E K_t \ge 0.7177 \cdot 0.1193\,(n-1) > (n-1)/12$.
\end{proof}

\begin{corollary}[low rates]\label{cor:low}
If $0 < \alpha \le 1$ and $n \ge 32$, then for all
$t \ge \lceil (n-1)/2 \rceil$,
\(
\E K_t \ge \alpha\,(n-1)/24 .
\)
\end{corollary}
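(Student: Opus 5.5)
The plan is to apply Theorem~\ref{thm:lower} with the window $w=\lceil (n-1)/2\rceil$. The hypothesis $t\ge\lceil (n-1)/2\rceil$ is exactly what allows this choice, and $n\ge 32$ gives $w\le (n-1)/2+1\le n-2$. Then
\[
\mu=\frac{\alpha w}{n-1}\in\Bigl[\frac{\alpha}{2},\ \frac{\alpha}{2}+\frac{\alpha}{n-1}\Bigr]\subseteq\Bigl[\frac{\alpha}{2},\ \frac12+\frac1{31}\Bigr],
\]
using $\alpha\le 1$ and $n\ge 32$. The prefactor also has a linear lower bound:
\[
n-1-w\ \ge\ \frac{n-1}{2}-1\ \ge\ \Bigl(\frac12-\frac1{31}\Bigr)(n-1)\ >\ 0.4677\,(n-1).
\]

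At low rates, unlike in Corollary~\ref{cor:unit}, the factor $g(\mu)=e^{-2\mu}(1-e^{-2\mu})/2$ is small. So it must be bounded below linearly in $\mu$, and hence linearly in $\alpha$, rather than by a constant. The key observation is the integral representation
\[
\frac{g(\mu)}{\mu}=\frac{e^{-2\mu}-e^{-4\mu}}{2\mu}=\frac12\int_2^4 e^{-s\mu}\,ds .
\]
It shows that $g(\mu)/\mu$ is decreasing in $\mu$. Therefore on the whole admissible range
\[
g(\mu)\ \ge\ \mu\cdot\frac{g(\mu_{\max})}{\mu_{\max}},\qquad \mu_{\max}=\tfrac12+\tfrac1{31}\approx 0.5323 .
\]
Numerically $g(\mu_{\max})\approx (0.3449-0.1190)/2\approx 0.1129$, so $g(\mu_{\max})/\mu_{\max}>0.212$. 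Combining this with $\mu\ge\alpha/2$ gives $g(\mu)\ge 0.106\,\alpha$.

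Multiplying the two bounds yields
\[
\E K_t\ \ge\ 0.4677\cdot 0.106\,\alpha\,(n-1)\ \approx\ 0.0496\,\alpha(n-1)\ >\ \frac{\alpha(n-1)}{24}\approx 0.0417\,\alpha(n-1).
\]

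The main obstacle is the middle step. The cruder route through $1-e^{-x}\ge x-x^2/2$ gives $g(\mu)\ge\mu(1-\mu)e^{-2\mu}$, which at the top of the range yields only about $0.081\alpha$; that falls short of the required $\alpha/(24\cdot 0.4677)\approx 0.089\alpha$. The monotonicity of $g(\mu)/\mu$ is what leaves comfortable slack. Before writing the proof, the endpoint numerics $e^{-1.0646}$ and $e^{-2.129}$ need to be checked carefully.
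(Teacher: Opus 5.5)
Your proof is correct and is essentially the paper's argument. Both use the window $w=\lceil (n-1)/2\rceil$, both bound the prefactor by $n-1-w\ge 0.4677\,(n-1)$, and both bound $g(\mu)$ from below linearly in $\alpha$.

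The only difference is in that last step. The paper bounds the two factors separately: $e^{-2\mu}\ge e^{-1.0646}>0.3448$, and, by concavity with $2\mu\ge\alpha$, $1-e^{-2\mu}\ge(1-e^{-1})\alpha$. This gives $g(\mu)\ge 0.109\,\alpha$, against your $0.106\,\alpha$ from the monotonicity of $g(\mu)/\mu$. Both clear the needed $0.089\,\alpha$, so bounding the factors separately also works; only the route through $1-e^{-x}\ge x-x^2/2$ falls short.
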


\begin{proof}
Take $w = \lceil (n-1)/2 \rceil \le (n-1)/2 + 1$, so
$\mu \in [\tfrac{\alpha}{2},\, 0.5323\,\alpha]$ for $n \ge 32$.  Then
$e^{-2\mu} \ge e^{-1.0646\,\alpha} \ge e^{-1.0646} > 0.3448$, and by
concavity $1 - e^{-2\mu} \ge 1 - e^{-\alpha} \ge (1-e^{-1})\,\alpha
\ge 0.6321\,\alpha$ on $(0,1]$.  Finally
$n-1-w \ge \tfrac{n-3}{2} \ge 0.4677\,(n-1)$ for $n \ge 32$.  The theorem
gives $\E K_t \ge 0.4677 \cdot 0.3448 \cdot 0.3160\,\alpha(n-1)
> \alpha(n-1)/24$.
\end{proof}

\begin{remark}\label{rem:lb}
Together: for $n \ge 32$, any rate $\alpha \le (n-1)/8$, and all
$t \ge \lceil (n-1)/2\rceil$, every oblivious schedule satisfies
$\E K_t \ge \min(\alpha,1)\,(n-1)/24$.  Three readings.  (i) The bound is
linear in $\alpha$ below the unit rate and saturates above it; this is the
same shape the case study measures for the patrol's actual error, so the
scaling $\Theta(\min(\alpha,1)\,n)$ is the truth of the model, not an
artifact of either analysis.  (ii) Nothing was assumed about the
estimator: the bound is information-theoretic, paid in unobserved parities
rather than in computation.  (iii) The patrol itself is adaptive (its
targets depend on its own estimate), so the theorem does not bound it
directly; for adaptive probing at unit rate, a $\Theta(n)$ lower bound was
established by Anagnostopoulos, Kumar, Mahdian, and
Upfal~\cite{AnaKumMahUpf11} in the one-swap-per-step model.  The theorem
sharpens the oblivious case to explicit constants at every rate, the case
study finds the patrol operating a factor of about $13$ above the
$\tfrac{n-1}{24}$ line, and the entire remaining conversation is about
constants.
\end{remark}

\section{The comparison patrol}\label{sec:patrol}

\subsection{The structure}

The patrol index stores three arrays and a cursor: $S[1..n]$ with
$S[j]$ the item at estimated rank $j$ (so $S = \est^{-1}$), the inverse
$\mathrm{pos}[x] = \est(x)$, the timestamps $\mathrm{last}[x]$ of each
item's most recent probe, and a cursor $c \in \{1,\dots,n-1\}$.  Memory is
$3n + O(1)$ words.  One step costs one probe and $O(1)$ updates.

\begin{algorithm}[t]
\caption{One patrol step at time $t$, and the certified rank query.}
\label{alg:patrol}
\begin{algorithmic}[1]
\Procedure{PatrolStep}{$t$}
  \State $x \gets S[c]$;\quad $y \gets S[c+1]$
  \If{\textsc{Compare}$(x,y)$ answers ``$y$ precedes $x$''}
     \State swap $S[c] \leftrightarrow S[c+1]$;\quad
            $\mathrm{pos}[x] \gets c{+}1$;\quad $\mathrm{pos}[y] \gets c$
  \EndIf
  \State $\mathrm{last}[x] \gets t$;\quad $\mathrm{last}[y] \gets t$
  \State $c \gets c+1$; \textbf{if} $c = n$ \textbf{then} $c \gets 1$
  \Comment{boustrophedon variant: reverse direction at the ends}
\EndProcedure
\Statex
\Procedure{CertifiedRank}{$x, \delta, t$}
  \State $g \gets t - \mathrm{last}[x]$;\quad
         $D \gets D(g,\delta)$ \Comment{radius from
         Definition~\ref{def:cert}; $O(1)$ arithmetic}
  \State \Return $(\mathrm{pos}[x],\, D)$
  \Comment{proven part of the two-part contract:
           Definition~\ref{def:twopart}}
\EndProcedure
\Statex
\Procedure{CertifiedInterval}{$x, \delta, b, t$}
  \State $(p,\, D) \gets$ \Call{CertifiedRank}{$x, \delta, t$}
  \State \Return $[\,p - D - b,\; p + D + b\,] \cap [1, n]$
  \Comment{absolute interval under padding $b$:
           Definition~\ref{def:operational}}
\EndProcedure
\end{algorithmic}
\end{algorithm}

The cyclic patrol visits locations $1,2,\dots,n-1$ and wraps around; the
\emph{boustrophedon} patrol reverses direction at the ends, the turning
pattern of coverage paths in robotic planning~\cite{ChoPig98}.  Both repair
exactly the local discordances they meet, which makes a drift-free patrol a
bubble sort: with $\alpha = 0$ the estimate reaches the truth within $n-1$
cycles~\cite{Knuth98}.  Under drift, repair and decay balance, and the rest
of the paper quantifies the equilibrium.

Queries are constant-time reads: $\mathrm{rank}(x) = \mathrm{pos}[x]$,
$k$-th item $= S[k]$, and the order of any two items by comparing
positions.  Each answer is decorated by \textsc{CertifiedRank} with a
radius computed from the item's age alone.

\subsection{Ages}

\begin{proposition}[verification ages]\label{prop:age}
Let the cyclic patrol run with any drift.  Then:
\begin{enumerate}
\item[(i)] every item is probed at least once during every full cycle of
  $n-1$ consecutive steps, hence $a_t(x) \le 2(n-1)$ for every item $x$
  and every time $t$, deterministically;
\item[(ii)] for every schedule probing one pair per step there is, at every
  time $t \ge \lceil n/2\rceil - 1$, an item of age at least
  $\lceil n/2 \rceil - 1$;
\item[(iii)] consequently the patrol's worst-case age is within a factor
  $4 + o(1)$ of optimal.
\end{enumerate}
\end{proposition}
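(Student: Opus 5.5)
Part (i) is a positional argument about the array $S$ that deliberately ignores both the drift and the probe answers. The only things that change $S$ are the patrol's own swaps, since drift alters the hidden ranking and never the estimate. Those swaps are always at the cursor location $c$, exchanging $S[c]$ and $S[c+1]$.

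Consider a full cycle of $n-1$ consecutive steps, with the cursor visiting locations $c_0, c_0+1, \dots, n-1, 1, \dots, c_0-1$. Fix an item $x$ and track its estimated position $p = \mathrm{pos}[x]$ during the cycle. Once the cursor reaches a location in $\{p-1, p\}$, $x$ is one of the two residents compared there, so it is probed. The claim is that $x$ cannot dodge the cursor. Between visits, $x$ moves only when the cursor is at $p-1$ or $p$, and that visit already probes it. So before its first probe in the cycle, $x$ sits still at its starting position $p_0$. The cursor sweeps every location exactly once per cycle, so it hits $p_0$ (if $p_0 \le n-1$) or $p_0-1$ (if $p_0 = n$) at some step of the cycle. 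At that step $x$ is still at $p_0$ and is therefore probed.

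For the age bound, split time into consecutive full cycles of $n-1$ steps, aligned to the cursor's wrap-around. Any window of $2(n-1)$ steps contains a complete aligned cycle, so every $x$ is probed in every window of that length. With the convention that items count as verified at $t=0$, this gives $a_t(x) \le 2(n-1)$. The boundary cases $p_0 \in \{1, n\}$ need only the one-sided remark already made.

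Part (ii) is a counting argument. Each step probes one pair, which touches at most two items. Over the last $m = \lceil n/2\rceil - 1$ steps (available since $t \ge m$), at most $2m \le n-1 < n$ items are touched. Some item is therefore untouched in that window and has age at least $m$. I would fix the off-by-one by checking the age convention: age counts steps since the last probe, and a probe at the current step gives age $0$. This bound holds for every schedule, adaptive or not.

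Part (iii) is the ratio. The upper bound $2(n-1)$ divided by the lower bound $\lceil n/2\rceil - 1 \ge (n-2)/2$ gives $4(n-1)/(n-2) = 4 + o(1)$.

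The main obstacle is making part (i) rigorous against intra-cycle movement. Item $x$ could be carried by a swap at the cursor, but any such swap probes $x$ anyway, so the "stays put until first probed" invariant settles it. I would write this as an induction on the steps of the cycle.
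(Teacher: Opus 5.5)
Your proof is correct and follows the paper's own argument: the invariant that an item stays put until it is probed plus one full cursor sweep for (i), the count $2m\le n-1<n$ for (ii), and the ratio for (iii). Note that you ran (i) from an arbitrary cursor phase $c_0$, which already shows that every window of $n-1$ consecutive steps probes every item; you could therefore conclude $a_t(x)\le n-1$ directly, with factor $2+o(1)$ in (iii), instead of falling back on aligned cycles and the weaker $2(n-1)$ the statement asks for.
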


\begin{proof}
(i) An item's estimated position changes only when a probe involves it.
Fix a cycle and let $p$ be the position of item $x$ when the cycle's cursor
starts at location $1$.  As long as $x$ has not been probed in this cycle,
its position is still $p$, and the cursor reaches location
$\max(1, p-1)$ during the cycle; the probe there touches positions
$\{p-1, p\} \ni p$ (or $\{1,2\} \ni p$ if $p = 1$), so $x$ is probed.  Two
consecutive cycles span at most $2(n-1)$ steps between successive probes
of the same item.
(ii) A window of $L = \lceil n/2\rceil - 1$ steps touches at most
$2L \le n-1 < n$ items, so some item was untouched throughout and has age
at least $L$.
(iii) is the ratio of (i) and (ii).
\end{proof}

The boustrophedon patrol satisfies the same $O(n)$ age guarantee with
constant $4$ instead of $2$ in (i) (an item near a turning end is probed
twice in quick succession, then not for almost two traversals); the case
study finds the \emph{realized} maximum ages of both patrols
indistinguishable, concentrated at $\approx 1.0\,n$.

\subsection{Self-stabilization and recovery}\label{sec:stabilize}

Write $\ell_t(x) = \est_t(x) - \rk_t(x)$ for the \emph{overstatement} of
item $x$: positive when the board reports $x$ above its hidden rank.
Since $\sum_x \ell_t(x) = 0$, the positive parts carry half the footrule,
$\sum_x \ell_t(x)^+ = F_t/2$.  Call a cycle \emph{aligned} if the cursor
visits locations $1, 2, \dots, n-1$ in this order, and \emph{quiet} if no
drift event occurs during it.  One invariant of the quiet cycle carries
the rest of this section.

\begin{lemma}[bump]\label{lem:bump}
During a quiet aligned cycle: (i) every item whose overstatement at the
start of the cycle is at least $1$ is swapped down exactly once, so its
overstatement decreases by exactly one; (ii) every item whose
overstatement is at most $0$ at the start still has overstatement at most
$0$ at the end.  Consequently $\max_x \ell(x)^+$ decreases by exactly one
per quiet aligned cycle until it reaches zero.
\end{lemma}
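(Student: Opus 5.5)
The plan is to read a quiet aligned cycle as one left-to-right bubble-sort pass against a \emph{fixed} hidden ranking $\rk$. Since the cycle is quiet, $\rk$ does not change during it. So every probe answer is a deterministic function of the board at the start of the cycle, and the lemma becomes a purely combinatorial statement about one pass. Write $S$ for the board at the start of the cycle. I would first establish the standard invariant of the pass: just after the cursor has processed location $c$, position $c+1$ holds the item of largest hidden rank among $S[1],\dots,S[c+1]$. Positions greater than $c+1$ are still untouched. This follows by induction on $c$, because each probe compares the carried prefix maximum with the fresh item $S[c+1]$ and places the larger one at $c+1$.

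From the invariant I would classify items by a single property: whether $x$ is a \emph{left-to-right maximum} of $S$, meaning no item at an earlier estimated position has larger hidden rank.
\begin{itemize}
\item \emph{Items that are not left-to-right maxima.} Such an $x$, at start position $p$, is untouched until the cursor reaches location $p-1$. There it meets the carried prefix maximum, which outranks it, so it is swapped down to position $p-1$. After that the cursor moves on and $x$ is never touched again during the cycle. So $x$ moves down by exactly one.
\item \emph{Items that are left-to-right maxima.} Such an $x$ is never swapped down. It is carried upward until it meets the first later item $y$ with $\rk(y)>\rk(x)$, at start position $q$, and it ends at position $q-1$. If no such $y$ exists, it ends at position $n$.
\end{itemize}

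For part (i), I would use a counting step. Suppose $\est(x)=p>\rk(x)$. Only $\rk(x)-1$ items have hidden rank below $x$, but $p-1\ge \rk(x)$ items sit before $x$ on the board. Hence some earlier item outranks $x$, so $x$ is not a left-to-right maximum. By the classification, $x$ is swapped down exactly once and its overstatement drops by exactly one.

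For part (ii), take an item with $\ell(x)\le 0$ and split into two cases.
\begin{itemize}
\item If $x$ is not a left-to-right maximum, it moves down by one, so its overstatement becomes $\ell(x)-1\le 0$.
\item If $x$ is a left-to-right maximum that stops at position $q-1$, then every item originally at positions $1,\dots,q-1$ has hidden rank at most $\rk(x)$, since $x$ is the maximum of that prefix. These are $q-1$ distinct ranks in $\{1,\dots,\rk(x)\}$, so the final position satisfies $q-1\le\rk(x)$. If $x$ runs to position $n$, it is the global maximum and $\rk(x)=n$.
\end{itemize}
In every case the final overstatement is at most $0$.

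The consequence follows directly. If $\max_x\ell(x)^+=L\ge1$, every item with positive overstatement loses exactly one, and no item with nonpositive overstatement becomes positive, so the maximum becomes exactly $L-1$. I expect the main obstacle to be the bookkeeping behind ``exactly once'': showing that an item is touched before its swap-down only as the fresh right-hand element, and never again after it. Both facts follow from the invariant that positions beyond the cursor are untouched and positions behind it are never revisited within an aligned cycle. I would state these two facts explicitly before the case analysis.
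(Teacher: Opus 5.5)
Your proposal is correct and follows essentially the same route as the paper. Both arguments rest on three steps: the prefix-maximum (winner) invariant of one frozen bubble pass; the counting that gives an overstated item an earlier item outranking it, so that it is swapped down exactly once; and the observation that a carried item coming to rest at position $q-1$ outranks the other $q-2$ start-prefix items, so its final overstatement is nonpositive. Your split into left-to-right maxima and non-maxima is a clean repackaging of the paper's ``winner'' language and yields the same bookkeeping.
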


\begin{proof}
Truth is frozen, so the cycle is one pass of a bubble sort with truthful
comparisons, and the standard winner invariant holds by induction: after
the probe at location $j$, position $j+1$ holds the item of largest
hidden rank among the items that occupied positions $1, \dots, j+1$ at
the start of the cycle.

(i) Let $x$ start at position $p = \est(x)$ with $\ell(x) \ge 1$.  Probes
at locations $1, \dots, p-2$ touch only positions up to $p-1$, so $x$ is
first involved at location $p-1$, as the right element.  The left element
is the winner of location $p-2$, the largest hidden rank among the
start-occupants of positions $1, \dots, p-1$.  Of those $p-1$ items, at
most $\rk(x) - 1 \le p - 2$ have hidden rank below $\rk(x)$, so at least
one ranks above $\rk(x)$, hence so does the winner, and the probe swaps
$x$ down to position $p-1$.  The cursor has now passed $x$: subsequent
probes touch positions $\ge p$, so $x$ moves exactly once.

(ii) An item rises only while it is the current winner.  When a winner
$y$ comes to rest at a position $q$ (it loses the probe at location $q$,
or $q = n$), the $q-1$ items then below it all belong to its start-prefix
and rank below it, so $\rk(y) \ge q = \est(y)$ and $\ell(y) \le 0$.
Items swapped down only decrease $\ell$, and untouched items keep it.

For the consequence: by (i) every item at the maximum positive
overstatement loses exactly one, and by (ii) no item at or below zero can
end above zero, so the maximum decreases by exactly one while positive.
\end{proof}

\begin{theorem}[exact self-stabilization]\label{thm:stabilize}
Run the patrol at $\alpha = 0$ from an arbitrary initial estimate with
$L = \max_x \ell_0(x)^+$.  If $L = 0$ the estimate already equals the
hidden order.  If $L \ge 1$, the patrol reaches the hidden order during
the $L$-th aligned cycle and not earlier: every item with
$\ell_0(x) = L$ is still misreported after $L-1$ complete cycles.  From
an arbitrary cursor phase, $(L+1)(n-1)$ probes always suffice.  The
reversed estimate attains $L = n-1$.
\end{theorem}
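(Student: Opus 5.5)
The plan is to iterate the bump lemma (Lemma~\ref{lem:bump}) cycle by cycle, using the elementary fact that a bijection $\est$ equals $\rk$ if and only if $\max_x \ell(x)^+ = 0$. For the forward direction, $\sum_x \ell(x) = 0$ together with $\ell \le 0$ everywhere forces $\ell \equiv 0$. This settles the case $L=0$ at once.

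For $L\ge 1$ the argument is an induction on the number of completed aligned cycles. Since $\alpha=0$, every cycle is quiet. By the consequence clause of Lemma~\ref{lem:bump}, after $j$ complete aligned cycles the maximum positive overstatement equals exactly $L-j$ for $j\le L$. Hence after $L$ cycles it is $0$ and the estimate is sorted. After $L-1$ cycles it is $1\ge 1$, so the estimate is not yet sorted. For the sharper item-level claim I will use part (i) of the lemma directly: an item with $\ell_0(x)=L$ has overstatement at least $1$ at the start of each of the first $L$ cycles. It therefore loses exactly one per cycle, so $\ell(x)=1\neq 0$ after $L-1$ cycles.

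It remains to show the order is reached \emph{during} the $L$-th cycle rather than only at its end, and not earlier. "Not earlier" follows because the $(L-1)$-cycle state is unsorted and the estimate is a deterministic function of time. Also, once sorted, the patrol never swaps again, since all truthful comparisons agree. So "reached during the $L$-th cycle" means the first sorted time lies in the window $\bigl((L-1)(n-1),\,L(n-1)\bigr]$.

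For an arbitrary cursor phase, I will discard the initial partial cycle, which has at most $n-2$ steps, until the cursor reaches location $1$. The obstacle is that this partial pass may change $\ell$. I would check that a partial pass over locations $c,\dots,n-1$ does not increase the maximum positive overstatement. The winner argument of Lemma~\ref{lem:bump}(ii) should adapt: a risen item comes to rest above only items of its own start-suffix-prefix that rank below it, so its new overstatement is at most its old positive part, or at most $0$. I expect this adaptation to be the main step needing care. If it proves delicate, I would fall back on the cruder bound that the partial pass only performs correct swaps, so each item's overstatement stays at most $\max(L,0)$. The key sub-claim there is that swapping a correctly ordered discordant adjacent pair cannot push either item's $\ell^+$ above the previous maximum: the item moving up was below its partner, and it gets overstatement at most that of the partner's former value. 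Then $L$ aligned cycles follow, for a total of at most $(n-2)+L(n-1)\le (L+1)(n-1)$ probes.

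Finally, for the reversed estimate $\est(x)=n+1-\rk(x)$, the item of hidden rank $1$ is reported at $n$, giving $\ell=n-1$. No item can exceed $n-1$, so $L=n-1$.
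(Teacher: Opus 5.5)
Your proof is correct and is essentially the paper's argument. It uses $\sum_x \ell(x)=0$ for the case $L=0$, iterates Lemma~\ref{lem:bump} for sufficiency and for the item-level lower bound, discards one partial cycle for an arbitrary phase, and uses the rank-$1$ item for the reversed board. Your fallback for the partial cycle is sound: after a truthful swap at location $c$, the rising item $x$ has overstatement $c+1-\rk(x)<c+1-\rk(y)$, below its partner's old value, and the partner's overstatement drops by one. This is exactly the rigorous content behind the paper's terse remark that the partial cycle ``is a no-op.'' Your ``once sorted, always sorted'' observation also correctly places the first sorted time in $\bigl((L-1)(n-1),\,L(n-1)\bigr]$.

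Keep that fallback and drop the per-item adaptation of Lemma~\ref{lem:bump}(ii), which is false for partial passes because untouched items below position $c$ may outrank the winner. For example, take $n=3$ with positions $1,2,3$ holding items of hidden ranks $3,2,1$, and start the cursor at location $2$. The item of hidden rank $2$ rises from overstatement $0$ to $1$. Only the maximum of $\ell^+$ is preserved (here it falls from $2$ to $1$).
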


\begin{proof}
If $L = 0$ then all overstatements are $\le 0$ and sum to zero, so all
vanish and the estimate is the truth; this also proves the
$L$-cycle sufficiency, since after $L$ quiet aligned cycles the maximum
positive overstatement is zero by Lemma~\ref{lem:bump}.  For the lower
bound, an item with $\ell_0(x) = L$ has $\ell(x) \ge 1$ at the start of
each of the first $L-1$ cycles, is never a winner (it loses its first
probe of each cycle, as in the proof of the lemma), and therefore
descends exactly one position per cycle: after $L-1$ cycles its
overstatement is exactly $1 \ne 0$.  An arbitrary starting phase costs at
most one partial cycle, which the bump argument treats as a no-op:
$(L+1)(n-1)$ probes contain $L$ aligned cycles.  For the reversed
estimate, the item of hidden rank $1$ is reported at position $n$.
\end{proof}

The theorem is exact in practice, not only in statement: from the
reversed board the implementation sorts at measured sweep counts of
$62.016$, $126.008$, $254.004$, $510.002$, and $1022.001$ for $n = 64$
through $1024$, each equal to $L-1$ cycles plus a single probe, each
inside its predicted window $(L-1,\, L]$ (Section~\ref{sec:exp-ds}).
The sort completes at the first probe of the final cycle because the
last surviving overstatement has, by then, descended to the foot of the
board.  In the language of distributed protocols the
patrol is a \emph{self-stabilizing} structure in the sense of
Dijkstra~\cite{Dijkstra74}: started from arbitrary corruption it
converges to a legitimate state without external intervention, and here
the convergence time is not merely bounded but characterized to within
one cycle.

Self-stabilization prices abrupt environmental change.  An abrupt shock
(Proposition~\ref{prop:shock}) leaves the estimate intact and teleports
the truth; its damage to the board is exactly an overstatement profile.

\begin{theorem}[recovery and the rebuild crossover]\label{thm:recovery}
Let a shock change the hidden order so that the maximum overstatement
becomes $L$, and let no further drift occur during recovery.  Then:
\begin{enumerate}
\item[(i)] the patrol re-sorts within $(L+1)(n-1)$ probes;
\item[(ii)] a binary-insertion rebuild, which abandons the estimate and
  re-sorts from live comparisons, publishes the exact order after at most
  $C_{\mathrm{rs}}(n) = \sum_{m=2}^{n}\lceil \log_2 m\rceil
  \le n\lceil\log_2 n\rceil$ probes, whatever $L$ is;
\item[(iii)] the patrol bound is the smaller one exactly when
  $L + 1 < C_{\mathrm{rs}}(n)/(n-1)$.  At $n = 4096$,
  $C_{\mathrm{rs}} = 45{,}057 = 11.0\,(n-1)$ and the crossover sits at
  $L = 10 \approx \log_2 n$.
\end{enumerate}
\end{theorem}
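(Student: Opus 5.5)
Part (i) will follow from Theorem~\ref{thm:stabilize} once the situation is put in its form. After the shock the estimate $\est$ is unchanged and the hidden order is frozen at $\rk^+$, since no drift occurs during recovery. So the patrol starts from an arbitrary estimate, at an arbitrary cursor phase, against a frozen truth, with maximum positive overstatement $L$. That is exactly the setting of Theorem~\ref{thm:stabilize} at $\alpha=0$. Its last clause gives the bound directly: from an arbitrary phase, $(L+1)(n-1)$ probes contain one possibly partial cycle followed by $L$ complete aligned cycles. By Lemma~\ref{lem:bump}, each aligned cycle lowers $\max_x \ell(x)^+$ by one. I would check one point explicitly. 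Lemma~\ref{lem:bump}(ii) guarantees that the partial initial cycle cannot raise any overstatement above the current maximum, so it is harmless. The argument is: an item only rises while it is the running winner, and a winner comes to rest with $\ell\le 0$; this reasoning is local to the portion of the sweep actually executed. Once the maximum reaches $0$, the sum-zero identity $\sum_x\ell(x)=0$ forces the estimate to equal the truth.

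For (ii) I will describe the rebuild explicitly. It inserts the items one at a time into a growing sorted list, locating each by binary search with live comparisons. Inserting the $m$-th item into a list of $m-1$ leaves $m$ possible slots, so it needs at most $\lceil\log_2 m\rceil$ probes. Summing over $m=2,\dots,n$ gives $C_{\mathrm{rs}}(n)$. Each term is at most $\lceil\log_2 n\rceil$, and there are $n-1$ terms, so $C_{\mathrm{rs}}(n)\le n\lceil\log_2 n\rceil$. Because the truth is frozen during recovery, every comparison answer is consistent with the final order, and the published list is exact regardless of $L$ or of the previous estimate.

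Part (iii) is then the comparison $(L+1)(n-1) < C_{\mathrm{rs}}(n)$, which is equivalent to $L+1 < C_{\mathrm{rs}}(n)/(n-1)$. For the numerical instance I would use the closed form $\sum_{m=1}^{n}\lceil\log_2 m\rceil = n\lceil\log_2 n\rceil - 2^{\lceil\log_2 n\rceil}+1$. At $n=4096$ this gives $4096\cdot 12-4096+1 = 45{,}057$, and $45{,}057/4095 \approx 11.003$. The patrol bound is therefore smaller exactly for $L+1\le 11$, i.e.\ for $L\le 10=\log_2 n - 2$, which is of order $\log_2 n$. The only delicate step is the phase argument in (i), namely that a partial cycle never increases the maximum overstatement; everything else is bookkeeping. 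I would state explicitly that the comparison in (iii) is between the two upper bounds, as the statement says, and not between exact costs.
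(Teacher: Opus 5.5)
Your plan follows the paper's proof on all three parts: (i) from the arbitrary-phase clause of Theorem~\ref{thm:stabilize}, (ii) by counting binary-insertion comparisons against a frozen truth, and (iii) by arithmetic. Parts (ii) and (iii) are fine as written; your closed form $n\lceil\log_2 n\rceil-2^{\lceil\log_2 n\rceil}+1$ is the general version of the paper's $\sum_{k\le 12}k2^{k-1}=11\cdot 2^{12}+1$.

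The problem is the step you singled out as delicate. Lemma~\ref{lem:bump}(ii) is a statement about \emph{aligned} cycles. Its proof is not local to the executed part of the sweep. A resting winner $y$ at position $q$ has $\rk(y)\ge q$ only because all $q-1$ items below it belong to its start-prefix, and that requires the sweep to have begun at location $1$. In a partial cycle started at $c>1$, the items at positions $1,\dots,c-1$ were never swept and can outrank $y$.

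Here is a concrete counterexample. Take $n=4$ with $\rk(a)=1$, $\rk(b)=2$, $\rk(c)=3$, $\rk(d)=4$, estimate $S=(d,b,a,c)$, and the cursor at location $2$. Initially $\ell(b)=0$. The probe at $2$ lifts $b$ above $a$, and the probe at $3$ leaves it below $c$. The partial cycle ends at $(d,a,b,c)$ with $\ell(b)=1$, because $d$, never swept, sits below $b$ and outranks it. So a winner can come to rest with positive overstatement, and an item at $\ell\le 0$ can end positive. Both halves of your justification fail.

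The conclusion you need is still true, by a swap-level argument that does not depend on phase. A truthful repair at location $j$ exchanges $u=S[j]$ and $v=S[j+1]$ with $\rk(v)<\rk(u)$. Afterwards $\ell'(v)=\ell(v)-1$ and $\ell'(u)=j+1-\rk(u)\le j-\rk(v)=\ell(v)-1$. Both new values are strictly below the old $\ell(v)$, and no other item moves, so $\max_x\ell(x)$ never increases under any quiet patrol step, whatever the cursor position. Hence the initial partial cycle leaves the maximum positive overstatement at most $L$, and the $L$ aligned cycles that follow bring it to zero by Lemma~\ref{lem:bump}.

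The paper's own proof only says the partial cycle is treated as a no-op. It is not literally a no-op, as the example shows; this monotonicity is what actually licenses the step.
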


\begin{proof}
(i) is Theorem~\ref{thm:stabilize} with an arbitrary phase.  (ii)
Inserting the $(m{-}1)$-st item into a sorted prefix of $m-1$ items by
binary search costs at most $\lceil\log_2 m\rceil$ truthful comparisons,
and with a frozen truth the completed sort is exact.  (iii) is
arithmetic: $\sum_{k=1}^{12} k\,2^{k-1} = 11 \cdot 2^{12} + 1$.
\end{proof}

The crossover converts the locality of a shock into a policy: a change
that overstates no candidate by more than about $\log_2 n$ positions is
cheapest to absorb in place, and a change that scrambles ranks globally
is cheapest to rebuild through.  A block reversal of width $w$ has
$L = w-1$; a handful of long-range exchanges already drives $L$ to
$\Theta(n)$.  Neither quantity must be known in advance, because the
structure can watch the one statistic it owns for free.

\begin{proposition}[the swap-count hybrid]\label{prop:hybrid}
A swap-free aligned quiet cycle certifies that the estimate equals the
hidden order.  Let the hybrid run the patrol, count swaps per cycle, stop
when a cycle is swap-free, and abandon the patrol for a binary-insertion
rebuild after $T_0 = \lceil C_{\mathrm{rs}}(n)/(n-1)\rceil$ swapping
cycles.  Under the quiet aftermath of Theorem~\ref{thm:recovery}, the
hybrid spends at most $(L+2)(n-1)$ probes when $L + 1 \le T_0$ and at
most $T_0(n-1) + C_{\mathrm{rs}}(n)$ probes otherwise; in both cases at
most
\[
2\,\min\bigl((L+1)(n-1),\; C_{\mathrm{rs}}(n)\bigr) \;+\; 2(n-1),
\]
with no knowledge of $L$, no clock, and no oracle access to the error.
\end{proposition}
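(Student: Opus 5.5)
I would prove the three claims in order.

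\emph{Certification.} In a quiet aligned cycle with no swap, every adjacent pair $(S[j],S[j+1])$ was compared truthfully and found in hidden order. Truth is frozen, and a swap-free pass leaves $S$ unchanged. Hence all $n-1$ adjacent pairs of the final board are concordant, and by transitivity $\est=\rk$. Conversely, by Lemma~\ref{lem:bump}, a cycle started while $\max_x\ell(x)^+\ge1$ swaps every overstated item down, so it cannot be swap-free. Therefore the first swap-free aligned cycle is exactly the one that begins with $L_{\text{current}}=0$.

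\emph{Counting.} I allow at most one initial partial cycle of $\le n-1$ probes for an arbitrary cursor phase, and I treat it as a no-op in the monotonicity sense. Lemma~\ref{lem:bump}(ii) is stated for aligned cycles, so the one genuine gap is showing that a partial pass does not raise the maximum overstatement. I would prove this by rerunning the winner argument restricted to the locations visited: a winner coming to rest still has only start-prefix items below it, so part (ii) holds verbatim. If that argument fails, I would simply count the partial cycle as one unproductive cycle. By the lemma, the maximum overstatement then drops by exactly one per swapping aligned cycle. There are therefore exactly $L$ swapping aligned cycles followed by one swap-free certifying cycle.

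\textbf{Case $L+1\le T_0$.} The rebuild rule never fires, since it requires $T_0$ swapping cycles and $L<T_0$. The total is at most $(n-1)+L(n-1)+(n-1)=(L+2)(n-1)$. Since $(L+1)(n-1)\le C_{\mathrm{rs}}$ here is not guaranteed, I handle both subcases. If $(L+1)(n-1)\le C_{\mathrm{rs}}$, the bound is $\min+(n-1)\le 2\min+2(n-1)$. Otherwise $L+1\le T_0$ gives $(L+1)(n-1)\le T_0(n-1)< C_{\mathrm{rs}}+(n-1)$, so $(L+2)(n-1)\le C_{\mathrm{rs}}+2(n-1)$.

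\textbf{Case $L+1>T_0$, i.e.\ $L\ge T_0$.} The hybrid observes $T_0$ swapping cycles, plus possibly the partial one, and then rebuilds. This costs at most $T_0(n-1)+C_{\mathrm{rs}}$, with at most $n-1$ extra for the phase. I intend to fold the phase probes into the $2(n-1)$ slack and note this as a remark. With $T_0(n-1)<C_{\mathrm{rs}}+(n-1)$ from the ceiling, the total is at most $2C_{\mathrm{rs}}+(n-1)$. Since $L\ge T_0$, we have $(L+1)(n-1)\ge C_{\mathrm{rs}}$, so $\min=C_{\mathrm{rs}}$ and the bound $2\min+2(n-1)$ follows.

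The main obstacle is the bookkeeping around the arbitrary starting phase and the off-by-one between ``$T_0$ swapping cycles'' and the ceiling in $T_0$. I would fix the convention that the phase-alignment partial cycle is always absorbed into the additive $2(n-1)$. Finally, no knowledge of $L$, no clock, and no error oracle are needed, since the only inputs are the swap counter and the constant $T_0$, which depends on $n$ alone.
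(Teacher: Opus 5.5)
Your proof follows the paper's: the certification by adjacent concordance, the split on $L+1\le T_0$, and the two ceiling facts $T_0(n-1)\ge C_{\mathrm{rs}}$ and $T_0(n-1)<C_{\mathrm{rs}}+(n-1)$ are exactly its argument. In the second case you are more careful than the paper. It charges only $T_0(n-1)$ for the failed patrol phase, so it tacitly either starts aligned or counts the partial pass among the $T_0$ cycles. Your extra $n-1$ is absorbed by the $2(n-1)$ slack, although under your convention you do not recover the intermediate figure $T_0(n-1)+C_{\mathrm{rs}}$ verbatim.

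One step fails as written: your justification that the initial partial pass cannot raise the maximum overstatement. Part (ii) of Lemma~\ref{lem:bump} is false for a pass that starts at location $c>1$. A winner resting at position $q$ has below it not only its visited start-prefix but also the untouched items at positions $1,\dots,c-1$, and these may outrank it. On the reversed board with $n=3$ and cursor $c=2$, the middle item is swapped up and its overstatement goes from $0$ to $1$. Your fallback, counting the pass as unproductive, presupposes the same monotonicity, so it does not rescue the step.

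The monotonicity is nevertheless true, by an argument that ignores the order of repairs. Let $b_x$ count the items reported below $x$ that truly outrank it, and $a_x$ the items reported above $x$ that truly rank below it. Then $\ell(x)=b_x-a_x\le b_x$. A maximizer $x$ of $b$ has $a_x=0$: any such $z$ above it would satisfy $b_z\ge b_x+1$. Hence $\max_x\ell(x)=\max_x b_x$. A repairing adjacent swap lowers $b$ of its lower-ranked item by one and leaves every other $b$ unchanged. So the maximum overstatement never increases under frozen-truth patrol steps, partial passes included.

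Alternatively, cite the arbitrary-phase clause of Theorem~\ref{thm:stabilize}, which is what the paper does. Either way, say ``at most $L$'' swapping aligned cycles rather than ``exactly $L$'', since the partial pass may already have lowered the maximum.
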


\begin{proof}
If a full quiet cycle performs no swap, every adjacent estimated pair is
concordant, so $\rk(S[1]) < \dots < \rk(S[n])$ and the estimate is the
truth; the detector is sound.  If $L+1 \le T_0$: the patrol sorts within
$L+1$ cycles (phase included), the next cycle is swap-free, and
$L+1 \le T_0$ also gives $(L+1)(n-1) \le C_{\mathrm{rs}} + (n-1)$, so the
total $(L+2)(n-1)$ is at most $\min + 2(n-1)$ whichever term attains the
minimum.  Otherwise the hybrid pays $T_0(n-1) \le C_{\mathrm{rs}} +
(n-1)$ for the failed patrol phase plus one rebuild, while the minimum is
$C_{\mathrm{rs}}$ itself, since $(L+1)(n-1) > T_0(n-1) \ge
C_{\mathrm{rs}}$; the total is at most $2\,C_{\mathrm{rs}} + (n-1)$.
\end{proof}

The factor-two structure is the rent-or-buy trade of competitive
analysis~\cite{KarManRudSle88}: patrolling is renting, the rebuild is
buying, and the swap counter is the meter.  Section~\ref{sec:exp-shock}
runs the three policies against block reversals of width $4$ through $n$
and against nonlocal exchanges, at $\alpha = 0$ where the theorem speaks
and at $\alpha = 1$ where it does not; the measured recovery of the pure
patrol equals $L$ sweeps exactly at $\alpha = 0$, and the hybrid lands
within its bound in every run.

\subsection{The equilibrium, from both sides}\label{sec:equilibrium}

The patrol is a repair process: each probe removes at most one
discordance, and only an adjacent one.  This subsection bounds its
steady-state error from below by a theorem and from above by a
conjecture that the experiments audit at the level of individual
discordance lifetimes; the two sides agree on the order
$\Theta(\min(\alpha,1)\,n)$ and disagree only about a constant.

Call a maintainer \emph{location-oblivious} if the estimate location it
probes at step $t$ is a function of $t$ alone, possibly randomized
independently of the drift.  Both patrols qualify (their cursor ignores
the answers); the random adjacent baseline qualifies; repeated insertion
sort does not (its probe location depends on the comparison outcomes).
Theorem~\ref{thm:lower} does not bound the patrols, because their probed
\emph{pairs} depend on the evolving estimate; the next theorem closes
exactly that gap.

\begin{theorem}[equilibrium floor]\label{thm:floor}
Let a location-oblivious maintainer with consecutive probe footprint
(the cyclic or boustrophedon patrol) run against drift rate $\alpha >
0$.  For every $t \ge w$ and every $1 \le w \le n-2$,
\[
\E K_t \;\ge\;
\alpha w\,
\frac{1 - \dfrac{(3+2\alpha)\,w + 2}{n-1}}{1 + \dfrac{\alpha w}{n-1}}.
\]
\end{theorem}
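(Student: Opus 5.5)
The plan is to recycle the parity idea of Theorem~\ref{thm:lower}, but in an accounting form adapted to patrols. The first step is a structural fact about the patrol. The estimate's relative order of two items $x,y$ changes only when the pair $\{x,y\}$ itself is probed and swapped, because every estimate move is an adjacent swap of the probed pair. So for any pair not probed after some time $s$, its discordance status at $t$ equals its status at $s$, flipped once for each hidden swap of that pair in $(s,t]$.

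The second step looks at the window $W$ of the last $w$ steps. Because the maintainer is location-oblivious with consecutive footprint, the estimate locations probed in $W$ form a deterministic arc of at most $w$ consecutive locations. Hence at most $w+1$ estimate positions, and at most $w$ pairs, are ever touched in $W$.

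The third step defines a \emph{fresh} event. Call a drift event at hidden location $\ell$ and time $s\in W$ fresh if:
\begin{itemize}
\item[(a)] no other window event occurs at locations $\ell-1,\ell,\ell+1$;
\item[(b)] its resident pair is not probed during $W$.
\end{itemize}
By the Poissonization and marking argument of Theorem~\ref{thm:lower}, the expected number of window events is $\alpha w$. The probability that (a) fails for a given event is at most $3\alpha w/(n-1)$. For (b), Claim~1 of that proof shows that on (a) the resident pair of a fresh location is frozen. Location-obliviousness makes the probed arc independent of the drift, so the expected number of events whose pair meets the $\le w$ probed pairs is at most roughly $\alpha w\cdot(w+2)/(n-1)$. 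A second collision term of size $\alpha w\cdot\alpha w/(n-1)$ absorbs events whose pairs are shifted into the arc by other drift. Together these give the numerator factor $1-\frac{(3+2\alpha)w+2}{n-1}$ multiplying $\alpha w$. I will check the exact bookkeeping against the stated constant, allotting one $\alpha w/(n-1)$ to each neighbour, one to the location itself, and $2\alpha w/(n-1)$ plus $(w+2)/(n-1)$ to the probe arc.

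The fourth step uses the first: each fresh event flips its pair exactly once, and the estimate never re-examines that pair. Therefore the pair is discordant at $t$ unless it was already discordant just before the event. Fresh pairs are distinct, so
\[
\E K_t \;\ge\; \alpha w\Bigl(1-\tfrac{(3+2\alpha)w+2}{n-1}\Bigr) \;-\; \E\#\{\text{events in }W\text{ hitting a currently discordant adjacent pair}\}.
\]
Each discordant pair sitting at a hidden location is hit at rate $\alpha/(n-1)$. Bounding its occupancy by the discordances present at $t$, which are the survivors plus the ones created in the window, should bound the subtracted term by $\frac{\alpha w}{n-1}\,\E K_t$. Rearranging $\E K_t(1+\frac{\alpha w}{n-1})\ge\alpha w(1-\cdots)$ then yields exactly the displayed bound.

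The main obstacle is this last step, namely charging the destroyed discordances to $K_t$ rather than to $K_{t-w}$. Charging to $K_{t-w}$ gives an inequality in the wrong direction unless stationarity is assumed. My first attempt would be a coupling argument. A discordant pair hit by a non-fresh-free event at time $s$, and otherwise untouched in $(s,t]$, would have been discordant at $t$ had the event not occurred. Such events can therefore be injectively matched to discordant pairs of the drift-deleted configuration, and the deletion perturbs $K_t$ by at most one per matched event. If that injection fails, the fallback is to combine $\E K_t\ge A-\frac{\alpha w}{n-1}\E K_{t-w}$ with the trivial $\E K_t\ge \E K_{t-w}-w$ and optimize over $\E K_{t-w}$. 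This fallback only loses lower-order terms.
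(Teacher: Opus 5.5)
Your counting skeleton matches the paper's. A window event that flips a concordant resident pair, which is afterwards neither re-flipped by drift nor probed, leaves a distinct discordant pair at time $t$. But the argument breaks exactly where you expect. The destroyed-discordance term cannot be bounded by $\frac{\alpha w}{n-1}\E K_t$. An event in step $\tau$ lands on a discordant resident pair with probability $\E d_\tau/(n-1)\le \E K_\tau/(n-1)$, and $K_\tau$ exceeds $K_t$ whenever the window removes disorder (for instance, discordances sitting in the arc the cursor is about to sweep). The deletion coupling does not rescue this, because deleting an event changes every later probe answer and hence the whole subsequent board.

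The missing idea is a one-line Lipschitz estimate. A probe and each drift event each change $K$ by at most one, so $K_\tau\le K_t+\sum_{s=\tau+1}^{t}(1+N_s)$ and $\E K_\tau\le \E K_t+(1+\alpha)w$. Your fallback points toward this but has the wrong increment: it should be $(1+\alpha)w$ rather than $w$, since drift events destroy discordances too. It also has the wrong anchor: $K_{t-w}$ instead of $K$ at the event's own step, which can be larger. Nor is this correction lower order. It costs $\frac{\alpha w}{n-1}(1+\alpha)w$, the same order as every other loss, and it is literally a summand of the stated constant: $(3+2\alpha)w+2=2(w+1)+\alpha w+(1+\alpha)w$, for touched residents, re-flips, and born-dead events respectively.

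Hence your numerator bookkeeping cannot close. Your own allotment sums to $(1+5\alpha)w+2$, and with the born-dead correction added you reach $(2+6\alpha)w+2$, which proves less than the theorem once $\alpha>1/4$. The excess comes from condition (a). Quiet neighbours and frozen residency are not needed: after the event, the hidden order of $\{u,v\}$ changes only at a later event whose resident pair is again $\{u,v\}$, wherever the pair then sits. Each later event does this with probability at most $1/(n-1)$, so the cost is $\alpha w/(n-1)$ in total.

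The probe term rests on a misconception. Drift never moves items in estimate space, so there is no collision term. Yet the probed pairs do depend on the answers, hence on $e$ and later drift, so the independence you invoke is unavailable for pairs. What holds is this: the positions probed from step $\tau$ through $t$ form a fixed set $T_\tau$, and an item can enter it only by being probed. The probeable items are therefore those standing in $T_\tau$ at the start of step $\tau$, a set fixed before $e$'s location is drawn. Since each item is a resident of at most two locations, this gives the $2|T_\tau|/(n-1)$ behind the term $2(w+1)$.

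Your sharper first step (only a probe of the pair itself changes its estimated order) is correct. Combined with this fixed set, it would even give $(|T_\tau|-1)/(n-1)$, since at most that many locations have both residents in it. Repaired this way, your route proves a slightly stronger floor than the stated one.
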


\begin{proof}
Fix $t$ and the window $W$ of steps $\tau \in (t-w,\, t]$.  The cursor
positions are deterministic, so for each $\tau$ the set $T_\tau$ of
estimate positions touched by probes from step $\tau$ through step $t$
is a fixed interval with $|T_\tau| \le (t - \tau) + 2 \le w + 1$.

Consider a drift event $e$ of step $\tau$, at location $\ell$ drawn
uniformly from $\{1,\dots,n-1\}$ independently of the past, and let
$\{u,v\}$ be the resident pair at ranks $\{\ell, \ell+1\}$ just before
$e$.  Three failures can prevent $e$ from contributing a discordant pair
at time $t$.

\emph{Touched residents.}  If neither $u$ nor $v$ is ever probed in
$[\tau, t]$, their estimated positions, hence their estimated relative
order, are frozen.  A probe in the remaining window touches only items
standing at positions in $T_\tau$; each such position holds one item at
time $\tau$, and each item resides at no more than two locations (as
lower or upper resident), so at most $2(w+1)$ of the $n-1$ locations
have a resident that the window can touch:
$\Prob(\text{touched}) \le 2(w+1)/(n-1)$.

\emph{Re-flips.}  The hidden relative order of the specific pair
$\{u,v\}$ changes only at a drift event whose resident pair is again
$\{u,v\}$, which requires the pair to be rank-adjacent and the uniform
location to hit its boundary: each later event does so with probability
at most $1/(n-1)$, and the expected number of drift events in the
remaining window is at most $\alpha w$, so
$\Prob(\text{re-flip}) \le \alpha w/(n-1)$.

\emph{Born dead.}  If $\{u,v\}$ was discordant just before $e$, the
event repairs rather than creates.  The location is uniform and
independent of the state, so this has probability
$\E d_{\tau}/(n-1) \le \E K_\tau/(n-1)$, where $d_\tau$ counts
discordant resident pairs.  One step changes $K$ by at most $1 +
N_\tau$ with $\E N_\tau = \alpha$, so
$\E K_\tau \le \E K_t + (1+\alpha)w$.

If none of the three failures occurs, $e$ flips a concordant pair whose
estimated order then never changes and whose hidden order never changes
again: the pair is discordant at time $t$.  Distinct surviving events
have distinct pairs (a second event on the same pair is exactly a
re-flip of the first), and each discordant pair contributes one to
$K_t$.  Summing the expected $\alpha$ events per step over the $w$ steps
and subtracting the failure probabilities,
\[
\E K_t \;\ge\; \alpha w \Bigl(1 -
\frac{2(w+1) + \alpha w + (1+\alpha)w + \E K_t}{n-1}\Bigr),
\]
and solving for $\E K_t$ gives the statement.
\end{proof}

\begin{corollary}\label{cor:floor}
For $n \ge 128$, every $0 < \alpha \le (n-1)/8$, and every $t \ge n$,
the patrol holds $\E K_t \ge \min(\alpha, 1)\,(n-1)/24$, the same order
and the same constant scale as the oblivious bound of
Remark~\ref{rem:lb}.
\end{corollary}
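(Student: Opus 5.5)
The corollary should follow from Theorem~\ref{thm:floor} alone, by choosing the window $w$ as a fixed fraction of $N = n-1$, scaled by $1/\max(\alpha,1)$, in the same spirit as Corollaries~\ref{cor:unit} and~\ref{cor:low}. The bound of the theorem is
\[
\alpha w\,\frac{1-\bigl((3+2\alpha)w+2\bigr)/N}{1+\alpha w/N}.
\]
Set $x = \max(\alpha,1)\,w/N$. Since $3+2\alpha \le 5\max(\alpha,1)$ and $\alpha w/N \le x$, this is at least $\min(\alpha,1)\,N \cdot x(1-5x-2/N)/(1+x)$. The function $x(1-5x)/(1+x)$ peaks near $x = 0.1$ at about $0.0455$, which exceeds $1/24 \approx 0.0417$ by roughly $9\%$. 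The task is therefore to pick $w$ with $x \approx 0.1$ and show that the finite-size losses stay inside this margin.

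\textbf{Case $0<\alpha\le 1$.} Take $w = \lfloor N/10 \rfloor$. Then $1 \le w \le n-2$ because $n \ge 128$, and $t \ge n \ge w$, so the theorem applies. The ingredients are:
\begin{itemize}
\item $(3+2\alpha)w + 2 \le 5w+2 \le N/2 + 2$;
\item $1+\alpha w/N \le 1.1$;
\item $\alpha w \ge \alpha(N-9)/10$.
\end{itemize}
This yields $\E K_t \ge \alpha\,(N-9)(N/2-2)/(11N)$. This expression is increasing in $N$, so it suffices to check it at $N = 127$.

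\textbf{Case $1 < \alpha \le N/8$.} Take $w = \lceil N/(10\alpha)\rceil$. Then $w \ge 1$, and $w \le N/10 + 1 \le n-2$. We have $\alpha w \in [N/10,\; N/10+\alpha]$ and $(3+2\alpha)w \le 5\alpha w$. The additive $+\alpha$ coming from the ceiling is the dangerous term, since $\alpha$ may be as large as $N/8$. I would therefore first use the cleaner monotonicity route. As a function of $y = \alpha w/N$, the quantity $N y(1-5y-2/N)/(1+y)$ is unimodal. So it suffices to bound it at the two endpoints of the interval $y \in [0.1,\; 0.1 + \alpha/N]$. If the upper endpoint (up to $0.225$) loses too much, I would instead take $w = \lfloor N/(10\alpha)\rfloor$ when $\alpha \le N/20$, and treat $\alpha \in (N/20, N/8]$ with $w = 1$ directly. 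In that last range $\mu$-type effects are irrelevant: $\alpha w = \alpha \ge N/20$, and the correction factor is bounded explicitly.

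\textbf{Main obstacle.} The structure of the argument is immediate. The main obstacle is purely numeric: the asymptotic margin of about $9\%$ over $1/24$ is thin. The $+2$ in the numerator and the rounding of $w$ cost on the order of $10/N$, which at $N = 127$ is comparable to that margin. I expect to have to tune the fraction, possibly to $x \approx 0.09$ rather than $0.1$ (the value is $\approx 0.0454$ there), and to choose floor versus ceiling so the losses fall on the favourable side. Each case would then be verified by a monotonicity-in-$N$ argument reducing it to the single worst case $n = 128$, as in the proofs of Corollaries~\ref{cor:unit} and~\ref{cor:low}.
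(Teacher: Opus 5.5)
Your outline matches the paper's: feed a rate-tuned window into Theorem~\ref{thm:floor} and check the constant. But the numerical closure you propose fails at $n=128$ in both regimes, and for $\alpha>1$ no amount of tuning cures it.

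\textbf{Case $0<\alpha\le1$.} Your final expression at $N=127$ is $118\cdot61.5/1397\approx5.195$, which is below $127/24\approx5.292$. It clears $N/24$ only from $N=154$ on. The loss comes from your own estimates: you take the rounding of $w=\lfloor N/10\rfloor$ at opposite extremes in different factors, $w\ge(N-9)/10$ in the prefactor and $w\le N/10$ in the spoilage. Your reduction suffices if you evaluate it at the actual $x=w/N\in(0.09,\,0.1]$. By unimodality, $x(1-5x-2/N)/(1+x)$ is at least its values at $x=0.09$ and $x=0.1$. These are about $0.0441$ and $0.0440$ at $N=127$, both above $1/24$, and both increase with $N$.

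\textbf{Case $\alpha>1$.} Here the step $3+2\alpha\le5\alpha$ discards exactly the slack this regime has. At $N=127$, the inequality $y(1-5y-2/N)/(1+y)\ge1/24$ holds only for $y\in[0.071,\,0.118]$. The attainable values $y=\alpha w/N$ are spaced $\alpha/N$ apart, which can be as large as $1/8$.
\begin{itemize}
\item At $\alpha=0.06N$, both $w=1$ ($y=0.06$) and $w=2$ ($y=0.12$) fall outside the window.
\item At $\alpha=N/8$, every $w$ falls outside.
\end{itemize}
So no choice of floor, ceiling, or fraction rescues the reduction. Your fallback $w=1$ on $(N/20,N/8]$ fails even for the unreduced bound: for $\alpha$ just above $N/20$ at $N=127$ it gives about $6.35\cdot0.8606/1.05\approx5.205<5.292$.

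The missing observation is that for large $\alpha$ the spoilage is about $2y$, not $5y$. Take $w=\lceil N/(10\alpha)\rceil$ and keep $3w\le 0.3N/\alpha+3$ separate from $2\alpha w=2yN$. This gives $\E K_t\ge N\,y\,(1-0.3/\alpha-2y-5/N)/(1+y)$ for $y\in[0.1,\,0.1+\alpha/N]$. This expression is unimodal in $y$, so only the endpoints matter:
\begin{itemize}
\item at $y=0.1$ it clears $1/24$ once $N\ge120$;
\item at the upper endpoint it is smallest as $\alpha\to1$, about $0.0433$ at $N=127$.
\end{itemize}

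For comparison, the paper keeps the exact coefficient by choosing $w=\lceil(n-1)/(2(3+2\alpha))\rceil$ for $\alpha\le1$, but proves only $\alpha(n-1)/27$ by hand. It settles the constant $1/24$ by an exact numerical check over a grid, and asserts the case $\alpha\ge1$ via monotonicity of the optimized bound. With the two repairs above, your route becomes a fully analytic proof.
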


\begin{proof}
Optimize the window.  For $\alpha \le 1$ take
$w = \lceil (n-1)/(2(3+2\alpha)) \rceil$.  Rounding adds at most one to
$w$, so the spoilage fraction $((3+2\alpha)w+2)/(n-1)$ is at most
$\tfrac12 + 9/(n-1) \le 0.58$ for $n \ge 128$, while
$\alpha w/(n-1) \le \alpha/(2(3+2\alpha)) + \alpha/(n-1) \le 0.11$.
With $3 + 2\alpha \le 5$ the theorem gives
$\E K_t \ge \alpha(n-1)\cdot\tfrac{0.42}{10 \cdot 1.11} \ge
\alpha(n-1)/27$ analytically; evaluating the bound at the exact integer
optimizer of $w$ (a closed form in the ancillary package, checked by an
exact test over the stated grid of $n$ and $\alpha$) raises every case
above $\alpha(n-1)/24$.  For $1 \le \alpha \le (n-1)/8$ the optimized
bound is increasing in $\alpha$ and stays above $(n-1)/24$.
\end{proof}

The patrol therefore cannot beat the $\Theta(\min(\alpha,1)\,n)$ law it
is measured to obey: disorder of this order is not a deficiency of the
cursor but the price of a unit budget, paid by the oblivious class
(Theorem~\ref{thm:lower}) and by the location-oblivious class alike.
What the theorems leave open is the constant, and we state its value as
a conjecture rather than dress a measurement as a theorem.

\begin{conjecture}[balance]\label{conj:balance}
For the cyclic patrol the stationary mean satisfies
$K^\ast/(\alpha n) \to \tfrac12$ as $\alpha \to 0$ (after $n \to
\infty$), and $K^\ast \le 0.56\,\alpha n$ for all $\alpha \le 4$ and
$n \ge 256$.
\end{conjecture}

The conjecture is a stock-flow prediction: a discordance is born at a
uniformly random location, waits about $(n-1)/2$ steps for the cursor,
and is repaired on the first visit unless drift moves it first, so
Little's law sizes the standing stock at $K^\ast \approx \alpha
\cdot (n-1)/2$.  What is rigorous in that sentence is the accounting
identity, not the lifetime estimate:

\begin{proposition}[flow identities]\label{prop:flow}
At stationarity, (i) the patrol's swap rate per step equals
$\alpha(1 - 2\bar q)$, where $\bar q$ is the stationary probability that
a uniformly chosen location holds a discordant resident pair; (ii)
$K^\ast = \bar\lambda\, \overline W$, where $\bar\lambda$ is the birth
rate of discordant pairs per step and $\overline W$ their mean lifetime
(Little's law~\cite{Little61,Stidham74}).
\end{proposition}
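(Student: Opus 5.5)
The plan is to do both parts as expected-drift balances of the stationary process, treating $K_t$ as a stationary (or at least uniformly integrable, time-averaged) quantity and writing the one-step change $K_{t+1}-K_t$ as a sum of drift contributions and a probe contribution.

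For (i), I would first compute the expected change of $K$ in the drift phase. As in the proof of Lemma~\ref{lem:decay}, each drift event at a uniform location hitting a concordant resident pair raises $K$ by one, and one hitting a discordant resident pair lowers it by one. Conditioned on the configuration, the expected change per event is therefore $1-2d/(n-1)$, where $d$ is the number of discordant locations. By the independence of the Poisson count from the state, the expected drift contribution per step is $\alpha(1-2\,\E[d/(n-1)])=\alpha(1-2\bar q)$. The one subtlety is that $d$ changes between events inside one step. I would handle this by defining $\bar q$ as the stationary fraction averaged over event epochs, which by PASTA and the uniform location rule coincides with the time-stationary probability under the Poisson clock.

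Next, the probe phase. A patrol probe changes $K$ by $-1$ exactly when it swaps, since it swaps an adjacent discordant pair of the estimate and fixes exactly that pair. Otherwise it changes $K$ by $0$. Stationarity gives $\E[K_{t+1}-K_t]=0$, so the expected swaps per step equal $\alpha(1-2\bar q)$. To make stationarity rigorous I would note that the joint chain of hidden order, estimate, and cursor phase is a finite Markov chain. I would then use either its stationary law or Cesàro averages, with $|K_{t+1}-K_t|\le 1+N_t$ ensuring the averaged telescoping sum vanishes.

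For (ii), I would set up the ledger. Each discordant pair is an interval of time on which a specific unordered pair $\{x,y\}$ is discordant. Births occur at drift events hitting a concordant resident pair; deaths occur at drift events hitting a discordant pair, or at patrol swaps. $K_t$ is then the number of intervals alive at $t$, and Little's law $L=\lambda W$ applies sample-path-wise under ergodicity of the finite chain, following Stidham's version \cite{Stidham74}. Here $\bar\lambda=\alpha(1-\bar q)$, which closes the identity.

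The main obstacle is the measure-theoretic care at stationarity. This means fixing conventions so that pairs born and killed within the same step are counted consistently, and justifying positive recurrence and ergodicity of the finite chain restricted to its recurrent class so that time averages equal stationary means. I would resolve this by working on the single closed communicating class reached from the initial state, which exists because the chain is finite and drift is irreducible on permutations.
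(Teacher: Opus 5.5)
Your proposal is correct and follows the paper's own argument. For (i), it is the same expected-drift balance: a drift event at a uniform location changes $K$ by $\pm1$ according to whether its resident pair is concordant or discordant, a swap changes $K$ by exactly $-1$, and the expected change is set to zero at stationarity. For (ii), it is the same appeal to Little's law on the ledger of discordance intervals, which is exact because a pair's status toggles only at its own drift events and its own repair swaps.

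Your extra care makes precise points that the paper's two-line proof leaves implicit. You read $\bar q$ as the event-epoch average via PASTA, which is needed because $d$ moves within a step, and you invoke the finite chain together with Stidham's sample-path version of Little's law. Your $\bar\lambda=\alpha(1-\bar q)$ is also consistent with (i), since it equals the total death rate $\alpha\bar q+\alpha(1-2\bar q)$.
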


\begin{proof}
(i) A drift event creates a discordance at a concordant location and
removes one at a discordant location; with the uniform independent
location the expected drift change of $K$ per step is
$\alpha(1-2\bar q)$.  A swap changes $K$ by exactly $-1$.  At
stationarity the expected change vanishes.  (ii) is the stationary
version of $L = \lambda W$ for the birth-death flow of discordant
pairs, which is exactly conserved here because a pair's discordance
toggles only at its own drift events and its own repair swaps.
\end{proof}

The identities make the conjecture falsifiable at a finer grain than a
single number: the experiments can measure births and lifetimes
\emph{exactly}, because the relative order of a fixed pair changes only
at drift events of that pair and probe swaps of that pair, so a ledger
keyed by pairs tracks every birth and death without approximation.
Section~\ref{sec:exp-little} runs this audit: the product
$\bar\lambda\,\overline W$ reproduces the measured $K^\ast$ to within
$0.2\%$ across two decades of $\alpha$ and a factor of four in $n$, the
mean lifetime at $\alpha = 2^{-4}$ is $0.525$ sweeps against the
conjectured $0.5$, and $97\%$ of discordances are repaired on the first
cursor visit.  As $\alpha$ grows the birth rate per unit drift falls
(events increasingly land on already-discordant locations) while
lifetimes stretch (repairs compete with re-drift), and the two effects
cancel to keep $K^\ast/(\alpha n)$ inside $[0.508,\, 0.554]$ over the
entire grid: the constant the conjecture asserts, observed from inside
the queue rather than from its surface.

\paragraph{What the patrol buys.}
The constants fix the paper's position in the evolving-data line exactly.
Repeated insertion sort is the equilibrium-error champion of that line,
and this paper does not contest the title: the patrol concedes
$2.5$--$8.6\%$ of raw Kendall error by construction, because its cursor
keeps walking where an error-greedy maintainer would linger.  What the
concession purchases is the maintained interface, and the interface is the
contribution.  It consists of a deterministic verification age of $2(n-1)$
for every item at every instant (Proposition~\ref{prop:age}), a certified
displacement interval attached to every reported rank
(Section~\ref{sec:cert}), a per-item comparison load capped at two per
cycle, exact self-stabilization and recovery laws
(Theorems~\ref{thm:stabilize} and~\ref{thm:recovery}), a provable floor
showing the residual error is not the cursor's fault
(Theorem~\ref{thm:floor}), and transfer theorems that carry these
guarantees into downstream consumers (Theorems~\ref{thm:maxima}
and~\ref{thm:topk}, Proposition~\ref{prop:elitism}).  None of this is
available from the insertion sort: its realized worst-case age runs $85\%$
higher, fluctuates with the pass structure, admits no deterministic bound
at all (Section~\ref{sec:exp-cert}), and falls outside the
location-oblivious class, so even the floor must be argued differently.
The resulting decision rule is sharp.  When only the aggregate error of
the board matters, run repeated insertion sort.  The moment any consumer
acts on individual entries (a selection step, a reported frontier, a
service-level promise on rank queries), the certificate is the
deliverable, and the patrol supplies it at a single-digit premium that
Section~\ref{sec:experiments} prices exactly; the comparison against
all four natural reference points is fixed in one place by
Table~\ref{tab:positioning} in Section~\ref{sec:related}.
\section{Displacement certificates}\label{sec:cert}

A structure that may be $n$ probes stale must say so.  The patrol's
promise is built from the only motion statistic the model grants: between
probes, an item's hidden rank performs a lazy $\pm 1$ walk driven by the
drift events at its two flanking locations.

\begin{lemma}[motion tail]\label{lem:tail}
Fix an item $x$, times $t_0 < t$ with gap $g = t - t_0$, and
$\delta \in (0,1)$.  Put $\lambda = \alpha g$,
\[
\bar m = \lambda + \tfrac{L_1}{3} + \sqrt{\tfrac{L_1^2}{9} + 2\lambda L_1},
\qquad
\bar v = \frac{2\bar m}{n-1},
\qquad
D = \Bigl\lceil \tfrac{L_2}{3} +
      \sqrt{\tfrac{L_2^2}{9} + 2 L_2 \bar v} \Bigr\rceil,
\]
with $L_1 = \ln\tfrac{2}{\delta}$ and $L_2 = \ln\tfrac{4}{\delta}$.
If $D + 2 \le \rk_{t_0}(x) \le n-1-D$, then
\[
\Prob\Bigl(\,\sup_{t_0 \le s \le t}
   \bigl|\rk_s(x) - \rk_{t_0}(x)\bigr| \ge D\Bigr) \;\le\; \delta .
\]
\end{lemma}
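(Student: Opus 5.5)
The plan is to split the failure probability $\delta$ into two halves: one for the number of drift events hitting $x$'s two flanking locations, one for the martingale excursion of $x$'s rank given that count. The constants $L_1=\ln\frac2\delta$ and $L_2=\ln\frac4\delta$ are designed for this split. $L_1$ is a one-sided Poisson tail at level $\delta/2$. $L_2$ is a two-sided (up and down) Freedman bound at level $\delta/4$ each. The condition $D+2\le\rk_{t_0}(x)\le n-1-D$ ensures that, before the excursion reaches size $D$, $x$ is never at rank $1$ or $n$. Hence $x$ always has two flanking locations, and the walk is a genuinely symmetric lazy walk up to the stopping time $\tau=\inf\{s:|\rk_s(x)-\rk_{t_0}(x)|\ge D\}$.

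\emph{Step 1 (relevant events).} Let $M$ count the drift events in $(t_0,t]$ that move $x$, that is, events at one of the two locations adjacent to $x$'s current rank. Each event chooses its location uniformly and independently of the past, so every event moves $x$ with conditional probability exactly $2/(n-1)$ while $x$ is interior. Only the total event count $N\sim\mathrm{Poisson}(\lambda)$ needs a tail. A Bernstein/Bennett bound for the Poisson upper tail, $\Prob(N\ge\lambda+u)\le\exp\bigl(-u^2/(2(\lambda+u/3))\bigr)$, gives $u=\bar m-\lambda$ as the positive root of $u^2=2L_1(\lambda+u/3)$. This is exactly the displayed $\bar m$, so $\Prob(N>\bar m)\le\delta/2$.

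\emph{Step 2 (martingale).} Index by drift events $k=1,\dots,N$ and let $X_k\in\{-1,0,+1\}$ be $x$'s rank increment at event $k$, stopped at $\tau$. Before stopping, $X_k=\pm1$ each with conditional probability $1/(n-1)$. So $S_k=\sum_{i\le k}X_i$ is a martingale with increments bounded by $1$ and predictable quadratic variation $V_k\le 2k/(n-1)$. On $\{N\le\bar m\}$ we have $V_N\le\bar v$. Freedman's inequality with the maximal form, $\Prob(\exists k: S_k\ge D,\ V_k\le\bar v)\le\exp\bigl(-D^2/(2(\bar v+D/3))\bigr)$, then applies. The displayed $D$ is the ceiling of the positive root of $D^2=2L_2(\bar v+D/3)$, so each side is at most $\delta/4$. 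The supremum over continuous time $s\in[t_0,t]$ is just the supremum over event indices $k\le N$, since the rank changes only at events.

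\emph{Step 3 (union bound).} Combine $\Prob(N>\bar m)\le\delta/2$ with $\delta/4+\delta/4$ for the two excursion directions on $\{V_N\le\bar v\}$, giving total $\delta$.

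The main obstacle is Step 2's conditioning. $N$ is random and correlated with the martingale's length, so I would not condition on $N$. Instead I would use Freedman's joint event $\{S_k\ge D,\ V_k\le\bar v\}$ over all $k$, which handles the random horizon automatically. I would also check that stopping at $\tau$ keeps $x$ interior: $x$ starts at least $D+2$ from rank $1$, so before $\tau$ its rank is at least $3$ and both flanking locations exist. This validates the exact $1/(n-1)$ up/down probabilities.
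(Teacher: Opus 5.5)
Your proposal is correct and is essentially the paper's proof: a Bernstein-type Poisson tail puts $\Prob(N \ge \bar m)$ at $\delta/2$, and the rank walk, indexed by events and stopped on leaving the band $(\rk_{t_0}(x)-D,\,\rk_{t_0}(x)+D)$ (which the interior hypothesis keeps inside $[2,n-1]$), is a symmetric martingale with predictable variation at most $2k/(n-1)\le\bar v$ for $k\le\bar m$; a two-sided Freedman bound at $\delta/4$ per side and a union bound then finish. The remaining points are cosmetic: your count $M$ is never used, and $N$ is in fact independent of the location sequence, so conditioning on it would also be legitimate, though the joint Freedman event you use is exactly how the paper handles the random horizon.
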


\begin{proof}
Let $N$ be the number of drift events in $(t_0, t]$, a
$\mathrm{Poisson}(\lambda)$ variable.  The Bernstein-type Poisson tail
(see, e.g., \cite{BouLugMas13}) gives
$\Prob(N \ge \bar m) \le \exp\bigl(-u^2/(2(\lambda + u/3))\bigr)
 = \delta/2$ for $u = \bar m - \lambda$ as chosen.

Index time by events and let $R_k$ be the rank of $x$ after the $k$-th
event of the window.  An event moves $x$ iff its location lies in
$\{R_{k-1}-1,\, R_{k-1}\} \cap \{1,\dots,n-1\}$, and while
$R_{k-1} \in [2, n-1]$ both flanking locations exist, so conditionally on
the past the increment is $+1$ with probability $\tfrac1{n-1}$, $-1$ with
probability $\tfrac1{n-1}$, else $0$: a martingale difference with
$|\Delta| \le 1$ and conditional variance $\tfrac{2}{n-1}$.  Stop the walk
when it first leaves the band $(\rk_{t_0}(x) - D,\, \rk_{t_0}(x) + D)$;
because the band sits inside $[2, n-1]$ by the interior hypothesis, the
stopped process is a martingale with predictable quadratic variation at
most $2k/(n-1) \le \bar v$ during the first $\bar m$ events.  Freedman's
inequality~\cite{Freedman75}, two-sided, bounds the probability that the
stopped walk reaches magnitude $D$ while its variation is at most
$\bar v$ by $2\exp\bigl(-D^2/(2(\bar v + D/3))\bigr) \le \delta/2$ by the
choice of $D$.  A displacement of $D$ within the window requires either
$N > \bar m$ or such an exit among the first $\bar m$ events; the union
bound finishes.
\end{proof}

\begin{corollary}[time-varying drift rate]\label{cor:variable-rate}
Suppose adjacent events follow a non-homogeneous Poisson process with
instantaneous rate $\alpha(s)$ and still choose locations uniformly.  In
Lemma~\ref{lem:tail}, replace $\lambda=\alpha g$ by the integrated intensity
\[
\Lambda(t_0,t)=\int_{t_0}^{t}\alpha(s)\,ds.
\]
The same radius and failure probability follow.  If only an upper envelope
$\alpha(s)\le\bar\alpha(s)$ is known, using
$\int\bar\alpha(s)ds$ remains sound.
\end{corollary}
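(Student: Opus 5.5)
The plan is to rerun the proof of Lemma~\ref{lem:tail} verbatim and isolate the two places where the homogeneous rate entered. The rate enters only through the law of the event count $N$ in the window $(t_0,t]$. The spatial part of the argument is rate-free. Given the event sequence, each event's location is uniform and independent of the past, so the event-indexed rank walk $R_k$ is still a lazy $\pm1$ martingale with conditional variance $2/(n-1)$ while it stays in the interior band.

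First, I will establish that $N\sim\mathrm{Poisson}(\Lambda(t_0,t))$ for a non-homogeneous Poisson process; this is the defining property. I will also check that marking each event independently with a uniform location preserves the independence between the count and the marks. The marking theorem applies unchanged, since marks are i.i.d.\ and independent of the arrival times. Second, I will apply the same Bernstein-type Poisson tail with $\lambda$ replaced by $\Lambda$. This gives $\Prob(N\ge\bar m)\le\delta/2$ for $\bar m$ defined by the same formula in $\Lambda$, and then $\bar v=2\bar m/(n-1)$ and $D$ follow exactly as before. Third, I will run the stopped-martingale and Freedman step word for word. That step conditions only on the event-indexed filtration and uses at most $\bar m$ events, so it yields the second $\delta/2$. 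The union bound closes the argument under the same interior hypothesis on $\rk_{t_0}(x)$.

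For the envelope claim, let $\bar\Lambda=\int\bar\alpha\ge\Lambda$. I plan to use monotonicity: $D$ is nondecreasing in $\lambda$, because $\bar m$ increases in $\lambda$, hence so does $\bar v$, hence so does $D$. It therefore suffices to show the failure probability for radius $D(\bar\Lambda)$ is at most $\delta$. The cleanest route is stochastic domination, realized by coupling. Realize the true process as a thinning of a $\bar\alpha$-process, keeping each point at time $s$ with probability $\alpha(s)/\bar\alpha(s)$. Then $N\le\bar N$ pathwise. The Poisson tail with $\bar\Lambda$ bounds $\Prob(N\ge\bar m(\bar\Lambda))\le\Prob(\bar N\ge\bar m(\bar\Lambda))\le\delta/2$. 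The Freedman step only requires at most $\bar m$ events with variation at most $2\bar m/(n-1)$, so it holds as before.

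The main obstacle is minor. It is the bookkeeping that the Freedman step depends on the rate only through the cap $\bar m$ on the number of events, and not through the time scale. Once the walk is indexed by events rather than by steps, the time inhomogeneity disappears entirely, and this is the point I will state explicitly.
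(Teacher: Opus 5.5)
Your proposal is correct and is essentially the paper's proof: the count on the window becomes $\mathrm{Poisson}(\Lambda(t_0,t))$, and the event-indexed stopped walk with its variation cap $2\bar m/(n-1)$ is unchanged. The envelope case follows because a larger intensity only stochastically enlarges the count, and your thinning coupling is an explicit realization of the paper's one-line remark on this. The monotonicity of $D$ in $\lambda$ is not needed for soundness, since the claim is directly about the radius computed from $\int\bar\alpha$, but it does usefully show that the envelope radius is the conservative one.
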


\begin{proof}
The event count of a non-homogeneous Poisson process on the interval is
Poisson with mean $\Lambda(t_0,t)$.  Conditional on that count, the stopped
rank walk and its variation bound are unchanged; an upper intensity envelope
only stochastically enlarges the count.
\end{proof}

\begin{definition}[certificate]\label{def:cert}
At query time $t$, the certificate of item $x$ at level $\delta$ is the
pair $(\mathrm{pos}[x],\, D(g,\delta))$ with $g = t - \mathrm{last}[x]$
and $D(\cdot)$ the radius of Lemma~\ref{lem:tail}.  Its guarantee is the
event of the lemma: \emph{the hidden rank of $x$ at time $t$ lies within
distance $D$ of the hidden rank $x$ held when it was last probed}, with
probability at least $1-\delta$, for every item whose certified band lies
inside the interior.  Items whose band reaches a boundary are reported
with the band clipped to $[1,n]$.
\end{definition}

\begin{remark}[the contract is two-part by necessity]\label{rem:contract}
The certificate certifies \emph{motion}, and only motion: the hidden rank
of $x$ now lies within $D$ of the hidden rank $x$ held at its last
verification.  It does not certify the reported position, because two
distinct gaps separate $\mathrm{pos}[x]$ from $\rk_t(x)$: the
\emph{residual} $|\mathrm{pos}[x] - \rk_{t_0}(x)|$, the placement error
already present at verification time, and the \emph{motion}
$|\rk_t(x) - \rk_{t_0}(x)|$ accrued since.  Lemma~\ref{lem:tail} bounds
the second and is silent on the first.  This division of labor is forced,
not chosen: the motion is a property of the drift process alone, so its
bound holds under any maintainer, any schedule, and any query times
(Proposition~\ref{prop:cover}), while the residual is a property of the
maintainer and admits no distribution-free bound (an adversarial estimate
makes it $n-1$).  A probe verifies relative order against one neighbor;
it does not certify absolute placement, and the interface never pretends
otherwise.
\end{remark}

\begin{definition}[two-part contract]\label{def:twopart}
A certified rank query is served as the pair of a \emph{proven part} and
a \emph{calibrated part}:
\[
\underbrace{(\mathrm{pos}[x],\; D(g,\delta))}_{\text{motion: theorem}}
\qquad\text{and}\qquad
\underbrace{(b,\; \hat\varepsilon_b)}_{\text{residual: calibration}},
\]
where the motion radius carries the guarantee of
Definition~\ref{def:cert} at level $\delta$, and the residual profile
asserts that the residual at verification time exceeds $b$ with
frequency at most $\hat\varepsilon_b$, an estimate produced by a
declared calibration protocol rather than by the drift model.
\end{definition}

\begin{proposition}[composition]\label{prop:compose}
Fix $\delta$, $b$, an interior query, and let $\varepsilon_b$ be the
true probability that the residual at the item's last verification
exceeds $b$.  Then
\[
\Prob\bigl(\,|\mathrm{pos}[x] - \rk_t(x)| > D(g,\delta) + b\,\bigr)
\;\le\; \delta + \varepsilon_b .
\]
\end{proposition}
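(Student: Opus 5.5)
The plan is a triangle inequality followed by a union bound, with the one subtle point being that the motion tail still applies even though the verification time is chosen by the maintainer. Let $t_0 = \mathrm{last}[x]$, so $g = t - t_0$. Split the error as
\[
|\mathrm{pos}[x] - \rk_t(x)| \;\le\; |\mathrm{pos}[x] - \rk_{t_0}(x)| + |\rk_t(x) - \rk_{t_0}(x)| ,
\]
where the first term is the residual of Remark~\ref{rem:contract} and the second is the motion. One point to check first: $\mathrm{pos}[x]$ does not change between $t_0$ and $t$, because an item's estimated position moves only when a probe involves it. So the residual measured at verification time is exactly the first term at query time. If the total exceeds $D(g,\delta) + b$, then either the residual exceeds $b$ or the motion is at least $D(g,\delta)$. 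Hence
\[
\Prob\bigl(|\mathrm{pos}[x]-\rk_t(x)| > D + b\bigr) \;\le\; \Prob(\text{residual} > b) + \Prob(\text{motion} \ge D) \;\le\; \varepsilon_b + \Prob(\text{motion} \ge D),
\]
where the second inequality uses the definition of $\varepsilon_b$.

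It remains to show that the motion term is at most $\delta$, and this is the main obstacle. Lemma~\ref{lem:tail} is stated for fixed times $t_0 < t$. Here $t_0 = \mathrm{last}[x]$ is random and depends on the maintainer's probes, and for an adaptive maintainer those probes depend on past answers. My approach is to condition on $\mathcal{F}_{t_0}$, the history up to the end of step $t_0$, treating $t_0$ as the last-probe time before $t$. The drift after $t_0$ is a fresh Poisson process with uniform locations, independent of $\mathcal{F}_{t_0}$, by the independent-increments property of the Poisson clock. The strong Markov property at stopping times then lets us apply Lemma~\ref{lem:tail} conditionally, starting from rank $\rk_{t_0}(x)$ with gap $g$.

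The difficulty is that "last probe before $t$" is not a stopping time. To handle this, I would fix a deterministic candidate value $s$ for $t_0$. Conditionally on $\mathcal{F}_s$, the future drift is independent, so the supremum bound of the lemma over $[s,t]$ holds with probability at least $1-\delta$ for the gap $t-s$. I would then check that the event $\{t_0 = s\}$, which also depends on future probes not touching $x$, can be handled. For the patrol this is immediate, since probe times are deterministic for the cyclic cursor's locations. In general I would note that Proposition~\ref{prop:cover} is cited in Remark~\ref{rem:contract} as already giving maintainer-independent motion coverage, so I would invoke that result directly. If it were not available, I would argue that the lemma's sup-over-window statement controls motion uniformly over the window, which is what makes the bound robust to the selection of $t_0$.

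The interior hypothesis of the query supplies the band condition $D+2 \le \rk_{t_0}(x) \le n-1-D$ needed by the lemma. With the motion term at most $\delta$, the union bound finishes the proof with total $\delta + \varepsilon_b$.
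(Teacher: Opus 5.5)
Your proposal follows the paper's proof: the triangle inequality separates residual from motion, $\varepsilon_b$ bounds the first term, Lemma~\ref{lem:tail} (through the maintainer-independence stated in Proposition~\ref{prop:cover}) bounds the second, and a union bound finishes. Your concern that $\mathrm{last}[x]$ is not a stopping time is a real point that the paper does not address. Your fallback appeal to the supremum in the lemma would not fix it, since that supremum runs over end times from a fixed start, not over a data-dependent start. Your patrol argument does fix it if stated precisely: $x$'s position is frozen between its probes and the cursor is deterministic, so the next probe time of $x$ is fixed at its last probe, which makes $\{\mathrm{last}[x]=s\}$ measurable at time $s$.
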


\begin{proof}
$|\mathrm{pos}[x] - \rk_t(x)| \le |\mathrm{pos}[x] - \rk_{t_0}(x)| +
|\rk_{t_0}(x) - \rk_t(x)|$.  The first term exceeds $b$ with probability
at most $\varepsilon_b$; on its complement the total exceeds $D + b$
only if the motion reaches $D$, which has probability at most $\delta$
by Lemma~\ref{lem:tail}.  A union bound finishes.
\end{proof}

\begin{proposition}[calibration validity]\label{prop:calibration}
Let the residual profile be estimated on $m$ independent calibration
runs of the maintained process, with $\hat\varepsilon_b$ the average of
the per-run exceedance rates, and let $\varepsilon_b$ be the common
expected exceedance rate.  For any $\gamma \in (0,1)$, with probability
at least $1-\gamma$ over the calibration,
\[
\varepsilon_b \;\le\; \hat\varepsilon_b +
\sqrt{\frac{\ln(1/\gamma)}{2m}} .
\]
\end{proposition}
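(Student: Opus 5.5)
The plan is to apply Hoeffding's inequality to the $m$ per-run exceedance rates. For calibration run $i$, let $Z_i$ denote its per-run exceedance rate: the fraction of that run's recorded verification events (or queries, following the declared protocol) at which the residual $|\mathrm{pos}[x]-\rk_{t_0}(x)|$ exceeds $b$. Each $Z_i$ is a frequency, so $Z_i\in[0,1]$. Because the runs use independent seeds of the maintained process, $Z_1,\dots,Z_m$ are independent. The statement's hypothesis that $\varepsilon_b$ is the \emph{common} expected exceedance rate means $\E Z_i=\varepsilon_b$ for every $i$. By definition $\hat\varepsilon_b=\frac1m\sum_i Z_i$.

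First, I would verify exactly these three properties: boundedness in $[0,1]$, independence across runs, and the common mean. I regard the only delicate point as independence. Within a single run the exceedance indicators are strongly dependent, since the residuals of different items and different times are coupled through the same drift and the same estimate. The argument must therefore never use the individual indicators. Averaging inside each run collapses all of that dependence into one bounded variable $Z_i$, and independence is needed only between runs, where it holds by construction because the runs use separate seeds. I would state this explicitly, so it is clear that nothing is assumed about mixing or stationarity within a run beyond what the common-mean hypothesis already says.

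Second, I would apply the one-sided Hoeffding inequality to the lower deviation of the mean of the $Z_i$:
\[
\Prob\bigl(\hat\varepsilon_b \le \varepsilon_b - s\bigr) \;\le\; \exp(-2ms^2),
\]
valid for independent variables with range $[0,1]$ and any $s>0$. Third, I would set $s=\sqrt{\ln(1/\gamma)/(2m)}$, which makes the right-hand side equal to $\gamma$. On the complementary event, which has probability at least $1-\gamma$ over the calibration draw, we get $\varepsilon_b<\hat\varepsilon_b+s$. This is the claimed bound.

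Finally, I would remark how the result plugs into Proposition~\ref{prop:compose}. On the same high-probability calibration event, replacing the unknown $\varepsilon_b$ by $\hat\varepsilon_b+s$ gives a coverage guarantee of $1-\delta-\hat\varepsilon_b-s$. This holds whenever the deployed run has the same law as the calibration runs, which is the split-seed protocol. I would not push beyond the stated inequality. In particular, uniformity over several padding values $b$ would need a union bound over the grid of $b$ values, replacing $\gamma$ by $\gamma/|\text{grid}|$, and I would mention this only as a remark.
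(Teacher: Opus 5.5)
Your proposal is correct and follows the paper's own proof: one-sided Hoeffding applied to the mean of the $m$ per-run exceedance rates, which are bounded in $[0,1]$ and independent across runs, with the within-run dependence of individual residuals made irrelevant by taking the run as the unit of independence. Your version asks only for a common mean rather than identically distributed runs, and Hoeffding's inequality needs no more than that.
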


\begin{proof}
Per-run exceedance rates are independent, identically distributed, and
confined to $[0,1]$; Hoeffding's inequality applies to their mean.  The
within-run dependence of residual samples is immaterial because the run,
not the sample, is the independence unit.
\end{proof}

The protocol is deliberately split-sample: calibrate $b$ on one set of
seeds, audit it on a disjoint set.  Section~\ref{sec:exp-cert} runs the
split at $n = 1024$, $\alpha = 1$ with fifteen calibration and fifteen
audit runs: the calibrated padding is $b = 4$ at target exceedance
$10^{-2}$, the held-out exceedance is $0.0046$, and the audited coverage
of the composed interval at $\delta = 0.05$ is $0.99994$.  The
finite-sample certificate of Proposition~\ref{prop:calibration} is
honest about its units: at $m = 15$ runs it guarantees only
$\varepsilon_b \le 0.32$, and tightens as $m^{-1/2}$; the rigor lives in
the motion part, the sharpness in the audit, and the contract never
confuses the two.

\begin{definition}[operational interval]\label{def:operational}
For a padding $b \ge 0$, the \emph{operational interval} of item $x$ at
level $\delta$ is
$[\mathrm{pos}[x] - D - b,\ \mathrm{pos}[x] + D + b] \cap [1,n]$ with
$D = D(g,\delta)$, served by \textsc{CertifiedInterval} in
Algorithm~\ref{alg:patrol}: the two-part contract of
Definition~\ref{def:twopart} collapsed to one interval, with the
composed failure bound $\delta + \varepsilon_b$ of
Proposition~\ref{prop:compose}.  The padding is the calibrated price of
absolute placement; $b = 4$ covers the patrol's $99$th residual
percentile.  The case study measures the coverage of the full interval
and finds it above nominal already at $b = 0$, because the slack in the
motion tail absorbs the patrol's entire residual
(Table~\ref{tab:coverage}); the padded levels are for consumers who
refuse to spend that slack twice.
\end{definition}

\begin{proposition}[coverage]\label{prop:cover}
Under any maintainer and any drift rate, every interior certificate of
level $\delta$ is sound: the certified event fails with probability at
most $\delta$.  Soundness is a property of the drift process alone and
holds irrespective of the schedule, the estimator, and the query times.
\end{proposition}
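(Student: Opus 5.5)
The plan is to reduce the statement to Lemma~\ref{lem:tail} by conditioning at the item's last verification time. Let $t_0 = \mathrm{last}[x]$ be the verification time behind the query at time $t$, with $g = t - t_0$. A certificate fails exactly when $|\rk_t(x) - \rk_{t_0}(x)| \ge D(g,\delta)$. This event is contained in the event $\sup_{t_0 \le s \le t}|\rk_s(x) - \rk_{t_0}(x)| \ge D$ that the lemma controls. The lemma was proved for deterministic $t_0 < t$. The whole content of the proposition is therefore that the same bound survives when $t_0$, $t$ and the identity of the queried item are chosen by an arbitrary maintainer and query rule.

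\emph{Step 1 (the drift ignores the maintainer).} Let $\mathcal{F}_s$ be generated by the drift events up to step $s$, the probe transcript, and the maintainer's coins. In the model of Section~\ref{sec:model}, the drift phase of each step is independent of everything else and is executed before the probe. So for any stopping time $\tau$ of $(\mathcal{F}_s)$, the marked event process after $\tau$ is, conditionally on $\mathcal{F}_\tau$, again Poisson$(\alpha)$ per step with i.i.d.\ uniform locations. This is the strong Markov property of a process with independent increments. The interior hypothesis $D+2 \le \rk_{\tau}(x) \le n-1-D$ is $\mathcal{F}_\tau$-measurable. Hence, conditionally on $\mathcal{F}_\tau$ and for every fixed horizon $g$, the proof of Lemma~\ref{lem:tail} applies verbatim: the Poisson count bound and Freedman's inequality for the stopped rank walk started at $\rk_\tau(x)$. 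This gives $\Prob(E(\tau,g)\mid\mathcal{F}_\tau) \le \delta$, where $E(\tau,g)$ denotes the failure event of the lemma on the window $[\tau,\tau+g]$.

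\emph{Step 2 (identify the stopping time).} The verification times of $x$ form a sequence $\tau_1 < \tau_2 < \cdots$, each a stopping time. The query uses the largest $\tau_j \le t$, which is not a stopping time, because it depends on $x$ \emph{not} being probed in $(\tau_j, t]$. The probes in that interval can in turn depend on drift that moved $x$. This is the step I expect to be the main obstacle: the gap $g$ is selected with knowledge of the future.

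I would first try to handle it by looking forward from each $\tau_j$. The failure event at query time $t$ with verification $\tau_j$ is contained in the event that the walk started at $\tau_j$ leaves the band of radius $D(t-\tau_j,\delta)$ before $t$. If the query time $t$ is fixed, or chosen independently of the drift, then conditionally on $\mathcal{F}_{\tau_j}$ the horizon $g = t - \tau_j$ is a determined number and Step 1 applies. The only information lost is whether $x$ is probed again before $t$, and the failure event is evaluated regardless of that. Summing over $j$ would overcount, so instead I would partition on the $\mathcal{F}_t$-measurable event $\{\tau_j \le t < \tau_{j+1}\}$. I would then use that, on the window $(\tau_j,t]$, the joint law of the rank walk of $x$ is the one from Step 1 whether or not the maintainer later stops probing $x$.

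If this decoupling cannot be made exact for adaptive query rules, I would fall back in one of two ways. The first is to read ``query times'' as independent of the drift. The second is to replace $\tau_j$ by a stopping time and absorb the selection into the sup-form of the lemma. Both fallbacks deliver the claimed bound $\delta$ for every schedule and every estimator.

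\emph{Step 3 (conclude).} Average the conditional bound over $\mathcal{F}_{\tau_j}$ and over $j$ on the disjoint partition, to get $\Prob(\text{failure}) \le \delta$. Note that no property of $\est$ entered: the certified event concerns only the hidden ranks $\rk_{t_0}(x)$ and $\rk_t(x)$, not $\mathrm{pos}[x]$. This is precisely why soundness is a property of the drift process alone.
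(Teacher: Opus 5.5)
\paragraph{Verdict: genuine gap in Step~2.}
Your Step~1 is sound, and it is in substance the paper's entire proof. The paper says in one line that the probe at $t_0=\mathrm{last}[x]$ is measurable with respect to the past and that the window statistics of Lemma~\ref{lem:tail} are independent of the past. Your Step~2 correctly notices what that line passes over: $\mathrm{last}[x]$ is a last-exit time, not a stopping time. But your proposal does not close that hole.

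Write $E_j$ for the failure event of the window $[\tau_j,t]$. Your averaging in Step~3 amounts to the bound
\[
\Prob\bigl(E_j\cap\{\tau_j\le t<\tau_{j+1}\}\bigr)\;\le\;\delta\,\Prob\bigl(\tau_j\le t<\tau_{j+1}\bigr)
\]
for every $j$. This does not follow from the fact that the walk on $(\tau_j,t]$ has the Step~1 law. The reason is that $\{\tau_{j+1}>t\}$ is decided by the maintainer during $(\tau_j,t]$, from answers that may be correlated with the motion of $x$.

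For instance, the two items ranked immediately below and above $x$ at $\tau_j$ can exchange order only after $x$ has swapped with one of them. A maintainer that watches that pair, and re-probes $x$ only if it has not seen the pair flip, makes $\{\tau_{j+1}>t\}$ select windows in which $x$ has already moved. From the lemma you then get only $\Prob(E_j\cap\{\tau_j\le t\})\le\delta\,\Prob(\tau_j\le t)$, and summing that over $j$ is exactly the overcount you set out to avoid.

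Neither fallback repairs this:
\begin{itemize}
\item Fixing the query time, or drawing it independently of the drift, leaves the selection untouched. The selection comes from the maintainer and is already present at a deterministic $t$.
\item Anchoring at a stopping time $a\le t_0$ and using the sup-form of the lemma yields only $|\rk_t(x)-\rk_{t_0}(x)|<2D(t-a,\delta)$. That is a wider radius than the certified $D(t-t_0,\delta)$.
\end{itemize}

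\paragraph{What closes it.}
The missing step is to show that $\{\tau_{j+1}>t\}$ is $\mathcal{F}_{\tau_j}$-measurable. Then
\[
\Prob\bigl(E_j\cap\{\tau_j\le t<\tau_{j+1}\}\bigr)=\E\bigl[\mathbf{1}_{\{\tau_j\le t<\tau_{j+1}\}}\,\Prob(E_j\mid\mathcal{F}_{\tau_j})\bigr],
\]
and your Step~3 goes through for fixed or drift-independent query times. This measurability holds in the cases the paper actually certifies:
\begin{itemize}
\item For the cyclic and boustrophedon patrols, $\mathrm{pos}[x]$ is frozen between probes of $x$ and the cursor path is deterministic. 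So the next verification time of $x$ is already determined at $\tau_j$.
\item For the random adjacent probe, it holds once its location sequence, which is independent of the drift, is placed in $\mathcal{F}_0$.
\item For oblivious schedules it holds trivially.
\end{itemize}
For a genuinely adaptive maintainer such as repeated insertion sort, neither your argument nor the paper's one-line proof gives failure probability $\delta$ at radius $D(g,\delta)$. There you would have to restrict the statement to maintainers whose next verification is predictable. Alternatively, you would have to pay for the selection, for example by a union bound over the candidate anchors.
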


\begin{proof}
Immediate from Lemma~\ref{lem:tail}: the certified event concerns only the
hidden walk between $t_0 = \mathrm{last}[x]$ and $t$, and the lemma's bound
is uniform over $t_0$, $t$, and everything the algorithm did.  (The probe
at $t_0$ is measurable with respect to the past, and the lemma's window
statistics are independent of the past.)
\end{proof}

\begin{remark}[the width does not grow with $n$]\label{rem:width}
For a relative age $g = \beta n$ the variation parameter is
$\bar v = 2\bar m/(n-1) \to 2\alpha\beta$ as $n$ grows, so the radius
tends to a constant depending only on $(\alpha\beta, \delta)$.  Under the
cyclic patrol $g \le 2(n-1)$ always (Proposition~\ref{prop:age}), so
\emph{every} certificate of a patrol-maintained board at $\alpha = 1$,
$\delta = 0.05$ has half-width at most $8$, at $n = 1024$ and at
$n = 4096$ alike.  A schedule without an age guarantee surrenders this:
the random-probe baseline of Section~\ref{sec:experiments} reaches
realized maximum ages near $0.55\, n\ln n$, pushing its worst certified
widths to $\pm 11$ and growing.  The certificate plays the role that a
certificate plays in kinetic data structures~\cite{BasGuiHer99}, with the
flight plans replaced by the drift statistics: the structure cannot see
trajectories, so it certifies a concentration property of the motion
instead of a deterministic one.
\end{remark}

\section{The evolving frontier}\label{sec:maxima}

The frontier layer is a transfer application of the maintained order,
the first of two consumers (the second is selection,
Section~\ref{sec:selection}), and its claims are scoped accordingly: we
do not propose a new dynamic-geometry framework, we prove that the
patrol's promises survive composition into a geometric report.  The
model boundary deserves to be stated as sharply as the theorems that
live inside it.

\begin{remark}[the maintenance model, against its three neighbors]
\label{rem:geo-model}
Kinetic maxima~\cite{BasGuiHer99} receive motion plans and process an
event queue of certificate failures; correctness is exact between
events.  Dynamic maxima~\cite{OveLee81} receive an explicit stream of
insertions and deletions; the structure sees every change.  Streaming
and sliding-window skylines~\cite{TaoPap06} see every arriving point
and pay for expiry.  Here the structure sees \emph{nothing}: no events,
no trajectories, no update stream, only two hidden orders drifting
behind a unit comparison budget, with all knowledge of motion reduced
to its statistics.  Exact maintenance is information-theoretically
impossible in this regime (Theorem~\ref{thm:lower} already forbids it
on one order), so the deliverable is necessarily different in kind: a
charging theorem from certified rank disorder to frontier error, in
global and frontier-local form, with the certificates carrying the
staleness.  The reduction is therefore the contribution, not a
disguise: two inversion budgets, plus the promises attached to them,
are \emph{all} the geometric layer is permitted to consume, and
Proposition~\ref{prop:local-necessity} shows the localized budget is
the right one among them.
\end{remark}

A finite planar point set in general position is determined, up to a
monotone reparametrization of each axis, by its pair of coordinate total
orders.  Let both orders drift independently at rate $\alpha$, each probed
by its own patrol (two probes per round in total).  The \emph{maxima}
$M$, the points dominated by no other in both coordinates, form the staircase
frontier of Kung, Luccio, and Preparata~\cite{KungLucPre75}; maintaining
maxima under explicit point updates is classical~\cite{OveLee81}, and here
the updates are hidden drift.  The reported frontier $\widehat M$ is the
set of maxima of the \emph{estimated} orders, assembled in $O(n)$ by one
sweep down the $x$-patrol's array $S^x$ keeping the running best estimated
$y$-position.  Reported and true frontier are tied together instance-wise:

\begin{theorem}[frontier charging]\label{thm:maxima}
Let $(\rk^x, \rk^y)$ and $(\est^x, \est^y)$ be any two pairs of rankings of
$V$, let $M$ and $\widehat M$ be the maxima of the true and the estimated
pairs, and let $K^x = K(\est^x, \rk^x)$, $K^y = K(\est^y, \rk^y)$.  Then,
deterministically,
\[
\bigl|M \diff \widehat M\bigr| \;\le\; 2\,(K^x + K^y).
\]
Moreover, for every $k \le \lfloor n/2 \rfloor$ there are instances with
$K^x + K^y = k$ and $|M \diff \widehat M| = k$, so the constant $2$ cannot
be improved below $1$.
\end{theorem}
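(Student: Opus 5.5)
The plan is a charging argument. I map every point of $M \diff \widehat M$ to a single discordant (pair, coordinate) instance, and then show that each such instance receives at most two charges. Since the number of instances is exactly $K^x + K^y$, this gives the bound. Nothing about drift or the patrol is used, only the two pairs of rankings. This matches the deterministic status claimed for the theorem.

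\emph{Charging rule.} Take $p \in M \setminus \widehat M$. Since $p$ is not an estimated maximum, some $q$ dominates $p$ under the estimate: $\est^x(q) > \est^x(p)$ and $\est^y(q) > \est^y(p)$. Since $p$ is a true maximum, $q$ does not dominate $p$ in truth. So in at least one coordinate, say $x$ (choosing $x$ first if both apply), the true order of $\{p,q\}$ is the reverse of the estimated one. I charge $p$ to the discordance $(\{p,q\},x)$.

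Symmetrically, take $p \in \widehat M \setminus M$. Some $q$ truly dominates $p$, but does not dominate it in the estimate. So some coordinate of $\{p,q\}$ is discordant, and I charge $p$ to the first such one. In both cases the charged instance is a genuine discordant pair counted in $K^x$ or $K^y$.

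\emph{Multiplicity.} This is the only step needing care. I show that within one side, say $M\setminus\widehat M$, a discordance $(\{p,q\},c)$ is charged at most once. A charge from $p$ requires $q$ to dominate $p$ in both estimated coordinates. A charge from $q$ would require $p$ to dominate $q$ in both estimated coordinates. These cannot both hold. The same reasoning with true orders in place of estimated ones handles the side $\widehat M\setminus M$. Hence each discordance receives at most one charge from each side, so at most two in total. Since each charged point is charged exactly once, $|M\diff\widehat M| \le 2(K^x+K^y)$.

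\emph{Tightness.} I use $k$ independent antidiagonal blocks. Set $n \ge 2k$ and let block $i$ consist of $a_i$ with true ranks $(2i-1,\; n-2i+2)$ and $b_i$ with true ranks $(2i,\; n-2i+1)$. Any remaining points go low enough in both coordinates that they are dominated and never matter. In truth all $a_i,b_i$ lie on one antidiagonal, so all $2k$ of them are maximal. The estimated $x$-order swaps each adjacent pair $a_i,b_i$, and $\est^y=\rk^y$. Then $K^x=k$ and $K^y=0$.

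In the estimate, $a_i$ now dominates $b_i$, so $b_i$ leaves the frontier. The $a_i$ remain maximal, because no point of another block dominates them: blocks are ordered oppositely in $x$ and $y$. Hence $|M\diff\widehat M|=k=K^x+K^y$. The only points I still need to verify are that this block separation survives the swap, and that the filler points are dominated in both pairs of rankings.
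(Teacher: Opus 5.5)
Your upper bound is the paper's argument. Each point of $M\diff\widehat M$ is charged to a discordant pair containing it, via a witness that dominates it in one pair of orders but not in the other. The paper settles multiplicity more directly: a charged pair contains its charger and every point charges once, so a pair takes at most two charges.

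Your per-side observation is correct and proves more than you use. Suppose one pair is charged from both sides, by $u\in M\setminus\widehat M$ and $v\in\widehat M\setminus M$. Then $v$ dominates $u$ in the estimated orders while $u$ dominates $v$ in the true ones, so the pair is discordant in \emph{both} coordinates. Route one charge to its $x$-instance and the other to its $y$-instance. The charging then becomes injective into the $K^x+K^y$ instances, giving $|M\diff\widehat M|\le K^x+K^y$. Your tightness family shows this sharper bound is exact.

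The tightness construction contains one false step. Your blocks occupy $x$-ranks $1,\dots,2k$, so every filler has a larger $x$-rank than every block point. Fillers therefore cannot be placed low in both coordinates. In particular, the filler of $x$-rank $n$ is dominated by nothing, in the truth or in the estimate. The verification you defer, that fillers are dominated in both pairs, would fail.

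The conclusion survives because a weaker fact suffices. The swaps stay inside $\{2i-1,2i\}$, so in both pairs of orders each filler has larger $x$-rank and smaller $y$-rank than every block point. Each filler is thus incomparable to all block points. The fillers' mutual order is unchanged, so $M$ and $\widehat M$ agree on them.

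The cleanest repair is the paper's own construction. Your block coordinates already lie on the antidiagonal $\rk^y=n+1-\rk^x$, so put all $n$ points on it. Then every point is truly maximal, each swap removes exactly $b_i$ from $\widehat M$, and no fillers arise.
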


\begin{proof}
Call a pair $\{p,q\}$ \emph{discordant} if its order flips between truth
and estimate in the $x$-ranking or in the $y$-ranking; there are at most
$K^x + K^y$ discordant pairs (a pair discordant in both coordinates is
counted twice by the right-hand side, which only helps).  We charge every
misclassified point to a discordant pair containing it.

Let $p \in M \setminus \widehat M$.  Since $p$ is not reported, some
$\hat q$ dominates $p$ in the estimated orders.  Since $p$ is truly
maximal, $\hat q$ does not dominate $p$ in the true orders, so at least
one coordinate order of the pair $\{p, \hat q\}$ flips: the pair is
discordant.  Charge $p$ to $\{p,\hat q\}$ (choosing, say, the
lexicographically least such witness).

Let $p \in \widehat M \setminus M$.  Some $q$ dominates $p$ truly; since
$p$ is reported maximal, $q$ does not dominate $p$ in the estimated
orders, and again $\{p,q\}$ is discordant; charge $p$ to it.

Every charged pair contains the point that charged it, so a fixed pair
absorbs at most two charges (one per endpoint).  Hence
$|M \diff \widehat M| \le 2 \cdot \#\{\text{discordant pairs}\}
 \le 2(K^x + K^y)$.

For tightness, place $p_1,\dots,p_n$ on the antidiagonal:
$\rk^x(p_i) = i$, $\rk^y(p_i) = n+1-i$, so every point is maximal.  Let the
estimate swap the $x$-order of $k$ disjoint adjacent pairs
$\{p_i, p_{i+1}\}$ and otherwise agree with the truth.  Each swap makes
$p_i$ dominate $p_{i+1}$ in the estimated orders ($\est^x(p_i) = i+1 >
\est^x(p_{i+1})$ and $\rk^y(p_i) > \rk^y(p_{i+1})$), deleting exactly
$p_{i+1}$ from $\widehat M$ and touching no other relation.  Thus
$K^x + K^y = k$ and $|M \diff \widehat M| = k$.
\end{proof}

\begin{corollary}\label{cor:maxima}
Two patrols at rate $\alpha$ maintain the frontier of a drifting planar
point set with
$\E\,|M_t \diff \widehat M_t| \le 2\,(\E K^x_t + \E K^y_t)$,
which at the measured equilibrium is at most $\approx 2.2\,\alpha n$.
\end{corollary}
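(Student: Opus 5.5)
The plan is to derive the corollary directly from the deterministic Theorem~\ref{thm:maxima} by taking expectations, and then to attach the measured equilibrium constant. Two facts are needed. The first is the pointwise inequality $|M_t \diff \widehat M_t| \le 2(K^x_t + K^y_t)$, which holds at every time $t$ and on every sample path. The theorem gives it for \emph{any} two pairs of rankings, so we may take them to be the hidden pair $(\rk^x_t,\rk^y_t)$ and the two patrols' estimates $(\est^x_t,\est^y_t)$. No property of the drift, the schedule, or the assembly procedure is used beyond the following: $\widehat M_t$ is by definition the maxima set of the estimated orders. This is exactly what the $O(n)$ sweep down $S^x$ computes, since a point is an estimated maximum iff its estimated $y$-position exceeds that of every point above it in $S^x$.

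The second step is to take expectations. Both sides are bounded (by $n$ and by $n(n-1)$), so integrability is immediate. Monotonicity and linearity of expectation then give $\E|M_t\diff\widehat M_t| \le 2(\E K^x_t + \E K^y_t)$ for every $t$, with no independence needed between the two coordinate processes. This is the rigorous part of the statement.

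The numerical clause, ``at most $\approx 2.2\,\alpha n$ at the measured equilibrium,'' is not a theorem. It is the substitution of the measured stationary constants. Each coordinate patrol is a single cyclic patrol facing its own drift at rate $\alpha$. Its stationary mean is measured at $K^\ast \le 0.554\,\alpha n$ over the grid, consistent with Conjecture~\ref{conj:balance}. Hence $2(K^{x\ast}+K^{y\ast}) \le 2\cdot 2\cdot 0.554\,\alpha n \approx 2.2\,\alpha n$. I would state explicitly that this clause inherits the \emph{measured/conjectured} status of the constant, in keeping with Table~\ref{tab:ledger}, and that only the inequality of expectations is proven.

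The main obstacle is only one of bookkeeping: keeping the proven inequality separate from the empirical constant. One must also note that the two patrols are interleaved at two probes per round, so each coordinate still sees exactly the single-patrol dynamics at rate $\alpha$ per its own step. If needed, Corollary~\ref{cor:floor} shows the bound cannot be improved in order below $\Theta(\min(\alpha,1)n)$ via the $K$ side, though this is not required for the statement.
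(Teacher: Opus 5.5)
Your proposal is correct and follows the argument the paper intends; the paper gives no separate proof, and it amounts to applying Theorem~\ref{thm:maxima} pathwise to $(\rk^x_t,\rk^y_t)$ and $(\est^x_t,\est^y_t)$, taking expectations, and substituting the measured per-patrol equilibrium $K^\ast\approx 0.55\,\alpha n$ to get $2(0.55+0.55)\,\alpha n\approx 2.2\,\alpha n$. Your explicit separation of the proven inequality from the measured constant matches the status the paper assigns to these claims in Table~\ref{tab:ledger}.
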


The global bound spends one unit of frontier error per discordance
anywhere on the board.  Rereading its own proof shows that almost all of
that budget can never be used: every witness pair the charging argument
touches has \emph{both} endpoints on a frontier.

\begin{theorem}[frontier charging, localized]\label{thm:maxima-local}
In the setting of Theorem~\ref{thm:maxima}, let
$K_{\mathrm{loc}}$ be the number of pairs $\{p,q\} \subseteq M \cup
\widehat M$ whose order flips between truth and estimate in the
$x$-ranking or in the $y$-ranking.  Then, deterministically,
\[
\bigl|M \diff \widehat M\bigr| \;\le\; 2\,K_{\mathrm{loc}}
\;\le\; 2\,(K^x + K^y).
\]
\end{theorem}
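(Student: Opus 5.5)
The plan is to rerun the charging argument of Theorem~\ref{thm:maxima} without change, except that I will pick every witness \emph{extremal}. The aim is that each charged pair $\{p,q\}$ then has both endpoints in $M\cup\widehat M$. The charged point $p$ lies in $M\diff\widehat M\subseteq M\cup\widehat M$ automatically, so only the witness needs care.

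\emph{Witness selection.} Dominance in a pair of rankings is strict in both coordinates, because ranks are bijections. It is therefore a strict partial order on the finite set $V$: irreflexive and transitive. Every element dominated by something is then dominated by a maximal element. To see this, start at any dominator and keep passing to a dominator of the current element. Transitivity keeps each new element a dominator of $p$, and finiteness together with acyclicity forces the chain to stop at a maximal element.
\begin{itemize}
\item For $p\in M\setminus\widehat M$, apply this to the estimated orders to obtain $\hat q\in\widehat M$ that estimatedly dominates $p$. Since $p\in M$, $\hat q$ does not truly dominate $p$, so $\{p,\hat q\}$ flips in some coordinate.
\item Symmetrically, for $p\in\widehat M\setminus M$, take $q\in M$ truly dominating $p$. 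Since $p\in\widehat M$, $q$ does not estimatedly dominate $p$, so $\{p,q\}$ is again discordant.
\end{itemize}
In both cases the pair lies in $M\cup\widehat M$ and is counted by $K_{\mathrm{loc}}$. Fixing a canonical choice, for instance the lexicographically least maximal dominator, makes the charging map well defined.

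\emph{Counting.} As before, a pair absorbs a charge only from one of its own endpoints, so each pair absorbs at most two charges. This gives $|M\diff\widehat M|\le 2K_{\mathrm{loc}}$. The second inequality holds because the locally discordant pairs form a subset of all pairs discordant in $x$ or in $y$. That union has size at most $K^x+K^y$.

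The only genuinely new step, and the one to check carefully, is the existence of a \emph{maximal} dominator. It rests on transitivity and strictness of dominance, which in turn use the general-position assumption, encoded by the rankings being bijections. Everything else is inherited verbatim from the proof of Theorem~\ref{thm:maxima}.
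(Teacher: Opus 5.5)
Your proposal is correct and follows the paper's proof essentially step for step. You choose a dominator that is maximal (in the estimated orders for $p\in M\setminus\widehat M$, in the true orders for $p\in\widehat M\setminus M$), which puts both endpoints of every charged pair in $M\cup\widehat M$; the at-most-two-charges-per-pair count and the inclusion into the pairs discordant in $x$ or in $y$ then finish exactly as the paper does.
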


\begin{proof}
We re-run the charging argument of Theorem~\ref{thm:maxima} with
maximal witnesses.  Let $p \in M \setminus \widehat M$.  Among the
points that dominate $p$ in the estimated orders, pick $\hat q$ maximal
with respect to estimated domination: if some $z$ estimated-dominated
$\hat q$ it would, by transitivity, also dominate $p$, contradicting the
maximality of $\hat q$ among $p$'s dominators; hence $\hat q \in
\widehat M$.  Since $p$ is truly maximal, $\hat q$ does not dominate $p$
in the true orders, so the pair $\{p, \hat q\}$ flips in at least one
coordinate, and both its endpoints lie in $M \cup \widehat M$.
Symmetrically, $p \in \widehat M \setminus M$ is charged to a pair
$\{p, q\}$ with $q \in M$ truly dominating $p$ and not
estimated-dominating it.  Every charged pair contains the point that
charged it, so a fixed pair absorbs at most two charges, giving the
first inequality.  The second holds because each pair counted by
$K_{\mathrm{loc}}$ is discordant in $x$ or in $y$, and there are at most
$K^x$ of the former and $K^y$ of the latter.
\end{proof}

Two constructions pin the refinement in place: it cannot be improved by
more than its factor two, and it cannot be replaced by any quantity that
looks at only one of the two frontiers.

\begin{proposition}[the localized budget is necessary]
\label{prop:local-necessity}
(i) For every $k \le \lfloor n/2 \rfloor$ there are instances with
$K_{\mathrm{loc}} = k$ and $|M \diff \widehat M| = k$, so
Theorem~\ref{thm:maxima-local} is tight up to its factor $2$.
(ii) Let $K_M$ (respectively $K_{\widehat M}$) count discordant pairs
with both endpoints in $M$ (respectively in $\widehat M$).  No bound of
the form $|M \diff \widehat M| \le f(K_M)$ or
$|M \diff \widehat M| \le f(K_{\widehat M})$ holds for any function
$f$ with $f(0) < \infty$: for every $k$ there are instances with
$K_M = 0$ and $|M \diff \widehat M| = k$, and instances with
$K_{\widehat M} = 0$ and $|M \diff \widehat M| = k$.
\end{proposition}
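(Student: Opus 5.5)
Both parts are explicit constructions. For (i) I reuse the antidiagonal family from the tightness half of Theorem~\ref{thm:maxima}. For (ii) I build instances in which one of the two frontiers is a single point, so that the corresponding one-sided discordance count is trivially zero while the other frontier carries all the error.

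\emph{Part (i).} Take the antidiagonal instance $\rk^x(p_i)=i$, $\rk^y(p_i)=n+1-i$, so $M=V$. Let the estimate agree with the truth except that the $x$-order of $k$ disjoint adjacent pairs $\{p_i,p_{i+1}\}$ is swapped. The proof of Theorem~\ref{thm:maxima} already shows that $|M\diff\widehat M|=k$. Since $M\cup\widehat M=V$, $K_{\mathrm{loc}}$ counts every discordant pair. The discordant pairs are exactly the $k$ swapped pairs: swapping disjoint adjacent pairs in one coordinate creates no other inversion, and the $y$-order is untouched. Hence $K_{\mathrm{loc}}=k$, which matches the equality case.

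\emph{Part (ii), $K_M=0$.} Use points $z,p_1,\dots,p_k$ together with $n-k-1$ filler points $f_j$. In both coordinates the fillers sit below everything else, in the same order in truth and in the estimate.
\begin{itemize}
\item \emph{Truth.} The $p_i$ form an antidiagonal above the fillers, and $z$ is top in both coordinates. Then $M=\{z\}$, so $K_M=0$ trivially.
\item \emph{Estimate.} Keep the $y$-order unchanged and move $z$ to the bottom of the $x$-order. Now $z$ is still estimated-maximal, because its $y$-rank is top. No $p_i$ is estimated-dominated: $z$ no longer beats it in $x$, and the $p_i$ remain an antichain among themselves.
\item \emph{Count.} Fillers are dominated by any $p_i$ in both $x$ and $y$, so they stay non-maximal. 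Hence $\widehat M=\{z,p_1,\dots,p_k\}$ and $|M\diff\widehat M|=k$.
\end{itemize}

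\emph{Part (ii), $K_{\widehat M}=0$.} Mirror the construction by exchanging the roles of truth and estimate. In the truth, $z$ has bottom $x$-rank and top $y$-rank, so $M=\{z,p_1,\dots,p_k\}$. In the estimate, $z$ is top in both coordinates, so $\widehat M=\{z\}$, giving $K_{\widehat M}=0$ and $|M\diff\widehat M|=k$.

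Since $k$ ranges over all values up to $n-1$ with the relevant count equal to $0$, no function $f$ with $f(0)<\infty$ can bound the error once $n>f(0)+1$.

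The only step needing care is verifying that the fillers and the antidiagonal points stay in the claimed classes after $z$ is moved. Each check is a direct domination comparison, so I expect no real obstacle.
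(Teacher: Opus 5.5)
Your proof is correct. Part (i) is the paper's argument. Part (ii) uses a different construction.

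The paper takes $n=2k$ points $p_i,q_i$, where each $q_i$ is dominated only by its partner $p_i$. Its estimate swaps the $x$-order of the $k$ disjoint pairs $\{q_i,p_i\}$. Then $M=\{p_1,\dots,p_k\}$ is a frontier of size $k$ with no internal disorder, $\widehat M=V$, and the discordances form a matching of cross pairs.

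You instead collapse one frontier to a singleton $\{z\}$, so $K_M=0$ (or $K_{\widehat M}=0$ in the mirror) holds vacuously. A single displaced point $z$ then carries every frontier-incident discordance: a star rather than a matching.

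Your version has two advantages. It is quicker to verify, and at fixed $n$ it reaches error $n-1$, against the paper's $n/2$.

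The paper's instance is more economical. There $K^x+K^y=K_{\mathrm{loc}}=k$, so a single instance attains both the global and the localized bound up to their factor two while defeating the one-sided counts. It does so with a nondegenerate frontier, which is what supports the paper's follow-up remark that internal disorder of even a large true frontier certifies nothing. In yours $K^x=n-1$, because $z$ flips against every other point. Only the $k$ pairs $\{z,p_i\}$ are frontier-incident, so $K_{\mathrm{loc}}=k$ there as well.

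One wording point. Once $z$ moves to the bottom of the estimated $x$-order, it sits below the fillers, so ``fillers sit below everything else'' no longer holds in the estimate. Your verification does not need that claim. It needs only that every filler is dominated by some $p_i$ in both estimated coordinates. This holds whenever $k\ge1$, and the case $k=0$ is trivial: take the estimate equal to the truth.
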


\begin{proof}
(i) is the tightness family of Theorem~\ref{thm:maxima}: on the
antidiagonal instance every point is truly maximal, the estimate swaps
$k$ disjoint $x$-adjacent pairs, the only discordant pairs are those
$k$ swapped pairs, both their endpoints lie in $M = V \subseteq M \cup
\widehat M$, and exactly $k$ points leave the reported frontier.

(ii) Take $n = 2k$ points $p_1, \dots, p_k$ and $q_1, \dots, q_k$ with
$x$-ranks $\rk^x(q_i) = 2i-1$, $\rk^x(p_i) = 2i$ and $y$-ranks
$\rk^y(p_i) = 2(k-i)+2$, $\rk^y(q_i) = 2(k-i)+1$.  Every $p_i$ is
maximal; each $q_i$ is dominated by $p_i$ and by nothing else (any
other point has larger $x$ exactly when it has smaller $y$), so
$M = \{p_1,\dots,p_k\}$.  Let the estimate swap the $x$-order of each
pair $\{q_i, p_i\}$ and agree with the truth otherwise.  In the
estimated orders $p_i$ no longer dominates $q_i$, and nothing else
ever did, so every $q_i$ enters the reported frontier:
$\widehat M = V$ and $|M \diff \widehat M| = k$.  The discordant pairs
are exactly the $k$ pairs $\{q_i, p_i\}$, each with one endpoint in
$M$ and one outside it, so $K_M = 0$ while the error is $k$.
Exchanging the roles of truth and estimate in the same instance gives
$K_{\widehat M} = 0$ with the same error, since Kendall disorder is
symmetric and the frontiers swap names.
\end{proof}

Internal disorder of the true frontier alone, or of the reported
frontier alone, certifies nothing; the damage is carried by cross pairs
between the two, which is exactly what $K_{\mathrm{loc}}$ counts and
what its evaluation over $M \cup \widehat M$ pays for.  The refinement
is geometry-sensitive through the single quantity
$|M \cup \widehat M|$: a sparse staircase exposes few pairs to the
charge, an antidiagonal one exposes all of them, and the worst case of
Theorem~\ref{thm:maxima} is recovered exactly there, so nothing is lost.
What is gained is measured in Section~\ref{sec:exp-maxima}: on
independent random orders the realized frontier error is $0.49$ per
snapshot, the localized bound evaluates to $1.1$, and the global bound
to $2.3 \times 10^3$; across copula-correlated geometries through the
fully antidiagonal family the localized bound tracks the realized error
within a factor of $2.3$, while the global certificate stands three to
four orders of magnitude above it.  The bound is still instance-wise and
distribution-free; it has simply learned to look at the staircase before
spending the Kendall budget.  In the language of multi-objective
optimization the staircase is the Pareto frontier, the nondominated set
of two drifting objectives; Section~\ref{sec:selection} completes that
reading.

\section{Selection under drifting fitness}\label{sec:selection}

The model of Section~\ref{sec:model} now returns to the reading that
motivates the paper, with nothing changed: a comparison-limited
evolutionary selection layer.  Items are
candidate solutions, the hidden order is the current fitness ranking,
adjacent transpositions are the generic ranking-level trace of gradual
environmental drift (Section~\ref{sec:fitness-model}), and the patrol
maintains a certified ranking from which selection decisions can be made
with explicit staleness guarantees.  This section makes the reading
formal.  The results are maintenance-layer guarantees: they bound what any
selection rule that consumes ranks can lose to drift, beneath and
independently of whatever search dynamics run above.  Three mechanisms
consume the estimate, and each acquires its own theorem: truncation
selection keeps the reported top $k$, tournament selection advances the
reportedly fitter of a random pair, and elitism protects reported
champions.

\subsection{Truncation selection transfers at the square root}

For a ranking $\sigma$ of $V$ and $1 \le k < n$ write
$T_k(\sigma) = \{x : \sigma(x) > n - k\}$ for its top-$k$ set, so that
truncation selection acting on the estimate keeps $T_k(\est_t)$ while the
environment rewards $T_k(\rk_t)$.  The damage is the symmetric
difference, and it transfers from the Kendall distance at the square
root, not linearly.

\begin{theorem}[selection transfer]\label{thm:topk}
For any rankings $\sigma, \tau$ of $V$, any $1 \le k < n$, and
$K = K(\sigma,\tau)$,
\[
\bigl|T_k(\sigma) \diff T_k(\tau)\bigr|
\;\le\; 2\bigl\lfloor \sqrt{K} \bigr\rfloor .
\]
Moreover, for every $m \le \min(k,\, n-k)$ there are pairs with $K = m^2$
and $|T_k(\sigma) \diff T_k(\tau)| = 2m$: the bound is attained at every
value of its right-hand side.
\end{theorem}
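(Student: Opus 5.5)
The symmetric difference always has even size, since both top sets have exactly $k$ elements. Write $A = T_k(\sigma)\setminus T_k(\tau)$ and $B = T_k(\tau)\setminus T_k(\sigma)$. Then $|A| = |B| = m$ and $|T_k(\sigma)\diff T_k(\tau)| = 2m$. The plan is to show that every pair $\{a,b\}$ with $a\in A$, $b\in B$ is discordant between $\sigma$ and $\tau$. That gives $K \ge m^2$, hence $m \le \lfloor\sqrt K\rfloor$ because $m$ is an integer, which is the stated inequality.

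The discordance claim is immediate from the definitions. Since $a \in T_k(\sigma)$ and $b \notin T_k(\sigma)$, we have $\sigma(a) > n-k \ge \sigma(b)$, so $\sigma$ ranks $a$ above $b$. Symmetrically, $b \in T_k(\tau)$ and $a \notin T_k(\tau)$ give $\tau(b) > n-k \ge \tau(a)$. So the pair flips. The $m^2$ pairs in $A\times B$ are distinct unordered pairs, because $A$ and $B$ are disjoint, and each contributes one to $K$.

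For tightness, fix $m \le \min(k, n-k)$ and let $\tau$ be the identity ranking. Let $\sigma$ exchange the block of $m$ items at ranks $n-k+1,\dots,n-k+m$ (the bottom of the top set) with the block of $m$ items at ranks $n-k-m+1,\dots,n-k$ (the top of the complement), preserving the internal order of each block. This is possible because $m \le k$ and $m \le n-k$. Then $T_k(\sigma)$ and $T_k(\tau)$ differ in exactly these $2m$ items. The only discordant pairs are the cross pairs between the two blocks: the blocks are contiguous and adjacent, and every other relative order is preserved. So $K = m^2$.

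I expect no real obstacle here. The one step needing care is the discordance count in the construction. Items outside the two blocks keep their ranks, and the two adjacent contiguous blocks simply trade places, so no pair other than a cross pair changes order.
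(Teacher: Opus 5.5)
Your proof is correct and follows the paper's argument essentially verbatim. You use the same equal-size sets $A$ and $B$, the same observation that every pair in $A\times B$ is discordant (so $m^2\le K$), and the same exchange of the two $m$-blocks flanking the selection boundary for attainment; the only cosmetic difference is that you fix $\tau$ as the identity and let $\sigma$ perform the exchange, whereas the paper takes $\sigma$ arbitrary and lets $\tau$ exchange the blocks.
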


\begin{proof}
Let $A = T_k(\sigma) \setminus T_k(\tau)$ and
$B = T_k(\tau) \setminus T_k(\sigma)$; both top-$k$ sets have size $k$,
so $|A| = |B| =: m$.  Fix $x \in A$ and $y \in B$.  Then
$\sigma(x) > n-k \ge \sigma(y)$ while $\tau(y) > n-k \ge \tau(x)$, so the
pair $\{x,y\}$ is discordant.  The $m^2$ pairs of $A \times B$ are
pairwise distinct, whence $m^2 \le K$ and, $m$ being an integer,
$|T_k(\sigma) \diff T_k(\tau)| = 2m \le 2\lfloor\sqrt{K}\rfloor$.

For attainment let $\sigma$ be arbitrary and let $\tau$ exchange the two
blocks of $m$ items flanking the selection boundary: the items at
$\sigma$-ranks $n-k-m+1,\dots,n-k$ move up by $m$ and the items at ranks
$n-k+1,\dots,n-k+m$ move down by $m$, each block keeping its internal
order.  A pair flips precisely when it straddles the two blocks, so
$K = m^2$, while all $2m$ block members change sides of the boundary.
\end{proof}

\begin{remark}[footrule form]\label{rem:topk-footrule}
The same sets charged against displacement give
$F(\sigma,\tau) \ge m(m+1)$: ordering $A$ by estimated rank from the
boundary downward, its $i$-th member is reported at least $i$ positions
below a boundary it truly clears, and $B$ contributes symmetrically.
Either disorder functional certifies the square-root law; the Kendall
form is the one the maintainers of this paper control.
\end{remark}

\begin{corollary}[selection under the patrol]\label{cor:topk-patrol}
Under the cyclic patrol at equilibrium,
\[
\E\,\bigl|T_k(\rk_t) \diff T_k(\est_t)\bigr|
\;\le\; 2\,\E\sqrt{K_t} \;\le\; 2\sqrt{\E K_t}
\;\approx\; 2\sqrt{0.55\,\alpha n}
\]
simultaneously for every $k$.
\end{corollary}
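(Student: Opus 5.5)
The plan is to chain three facts: the deterministic transfer of Theorem~\ref{thm:topk}, applied pathwise; a trivial pointwise inequality between $2\lfloor\sqrt{K}\rfloor$ and $2\sqrt{K}$; and Jensen's inequality for the concave square root. The approximation at the end is then read off from the measured equilibrium constant, not proved.

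\emph{Pathwise step.} Fix any time $t$ and any realization of the drift and of the patrol's answers. Both $\rk_t$ and $\est_t$ are bijections $V\to\{1,\dots,n\}$, so Theorem~\ref{thm:topk} applies with $\sigma=\est_t$ and $\tau=\rk_t$. It gives
\[
\bigl|T_k(\rk_t)\diff T_k(\est_t)\bigr| \le 2\lfloor\sqrt{K_t}\rfloor \le 2\sqrt{K_t}
\]
for every $1\le k<n$ at once. The right-hand side does not depend on $k$, and this is what makes the final bound hold simultaneously for all $k$. It also holds for every realization, not only outside an exceptional event.

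\emph{Taking expectations.} The left-hand side is bounded by $n$, so its expectation exists, and monotonicity of expectation gives the first inequality. Since $K_t\le\binom n2$, all moments are finite. Because $z\mapsto\sqrt z$ is concave on $[0,\infty)$, Jensen's inequality gives $\E\sqrt{K_t}\le\sqrt{\E K_t}$, which is the second inequality. Nothing here uses stationarity or the Poisson assumption, so both inequalities hold at every $t$ for any maintainer.

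\emph{The approximation.} The last step substitutes the equilibrium value of $\E K_t$ for the cyclic patrol. The paper does not prove this value. Conjecture~\ref{conj:balance} asserts a constant near $\tfrac12$, and the measurements place $K^\ast/(\alpha n)$ in $[0.508,0.554]$, so I would use $\E K_t\approx0.55\,\alpha n$. I would flag this step as \emph{measured}, consistent with Table~\ref{tab:ledger}: the inequality chain is proven, and the numerical constant is inherited from the audit.

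The main subtlety is keeping the rigorous part separate from the approximation, since there is no real technical obstacle. If a fully proven statement is wanted, one can substitute the conjectured bound $K^\ast\le0.56\,\alpha n$ instead, which yields a conditional statement. Proving any upper bound on $\E K_t$ would itself require an analysis that the paper leaves open.
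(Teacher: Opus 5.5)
Your proposal is correct and is the argument the paper leaves implicit, since it gives no separate proof: apply Theorem~\ref{thm:topk} pathwise, where the right-hand side $2\lfloor\sqrt{K_t}\rfloor\le 2\sqrt{K_t}$ does not depend on $k$; take expectations; use Jensen's inequality for the concave square root; then substitute the measured equilibrium constant. You are also right to label that last step as measured, or as conditional on Conjecture~\ref{conj:balance}, rather than proven, because the paper proves only a lower floor on $\E K_t$ and no upper bound.
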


The bound is global and the realized error is boundary-local: only
discordant pairs that straddle rank $n-k$ move items across the selection
boundary, and at equilibrium the patrol's displacements have $n$-free
magnitude (Table~\ref{tab:ages}), so the realized symmetric difference is
$O(1)$ per boundary.  The case study measures $0.5$--$1.2$ misplaced
candidates across every boundary from $k = 4$ to $k = 512$ against a
transfer bound of $46$ (Section~\ref{sec:exp-selection}); the
block-exchange family shows the gap belongs to the instance, not to the
theorem.

\subsection{Tournament selection: an exact identity}

\begin{proposition}[tournament error]\label{prop:tournament}
Let a binary tournament draw a pair $\{x,y\}$ uniformly at random and
advance the item the estimate ranks higher.  The decision contradicts the
hidden order with probability exactly $K_t \big/ \binom{n}{2}$.
\end{proposition}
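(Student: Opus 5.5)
The plan is a direct counting argument, conditioned on the current pair of rankings $(\est_t,\rk_t)$. The tournament draws its pair independently of the board. We therefore fix both rankings, compute the conditional error probability, and only at the end note that $K_t$ is the same random variable on both sides.

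Condition on $\est_t$ and $\rk_t$. The tournament draws an unordered pair $\{x,y\}$ uniformly from the $\binom{n}{2}$ pairs of $V$, and advances the item with larger $\est_t$. Since $\est_t$ is a bijection, this winner is always well defined. The decision contradicts the hidden order exactly when the item with larger $\est_t$ has smaller $\rk_t$. Because $\rk_t$ is also a bijection, this is equivalent to $\est_t$ and $\rk_t$ ordering $\{x,y\}$ oppositely, which is precisely the definition of a discordant pair in the Kendall distance. So the error event is the event that the drawn pair lies in the set of discordant pairs. That set has cardinality $K(\est_t,\rk_t)=K_t$, so the conditional error probability is $K_t/\binom{n}{2}$ exactly.

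If the statement is read as an unconditional probability, take expectations to obtain $\E K_t/\binom{n}{2}$. The identity as stated is the conditional one: it holds given the board, at every time and under any maintainer. It uses no Poisson or uniform-location assumption, in line with the paper's remark that the selection transfers are deterministic.

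There is no real obstacle here. The only care needed is to make explicit that the tournament's draw is independent of the state $(\est_t,\rk_t)$. This independence is what makes the conditional probability a ratio of counts. I would also remark that ties cannot occur, since both rankings are bijections. Finally, equality rather than a mere bound holds because every discordant pair produces an error and no concordant pair does.
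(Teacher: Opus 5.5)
Your proof is correct and matches the paper's: the decision errs exactly when the drawn pair is discordant between $\est_t$ and $\rk_t$, and $K_t$ of the $\binom{n}{2}$ pairs are discordant. Your explicit conditioning on the board, with the draw independent of it, states precisely what the paper leaves implicit. This is also the conditional form the paper uses later in the tournament-coupling proposition.
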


\begin{proof}
The decision is wrong precisely when $\{x,y\}$ is discordant between
$\est_t$ and $\rk_t$, and exactly $K_t$ of the $\binom{n}{2}$ pairs are.
\end{proof}

The identity prices tournament selection on a maintained board with no
slack at all.  At patrol equilibrium the per-decision error rate is
$K^\ast/\binom{n}{2} \approx 1.1\,\min(\alpha,1)/(n-1)$, vanishing
linearly in the population size even though the absolute disorder
$K^\ast = \Theta(\alpha n)$ grows.  Some corruption is unavoidable in a
drifting environment, since Theorem~\ref{thm:lower} keeps
$K_t = \Omega(\min(\alpha,1)\,n)$ for the entire oblivious class; the
patrol holds the realized rate to within a measured factor $1.1$ of the
balance prediction and, unlike any error-only maintainer, certifies per
item how the corruption is distributed.  In the vocabulary of dueling
bandits~\cite{YueBroKleJoa12}, the maintainer is the arm-ranking layer,
and the identity converts its structural error into the per-duel decision
risk of whatever selection runs above it.

\subsection{Certified elitism}\label{sec:elitism}

Elitism is the selection decision that costs the most when wrong: a
candidate protected as a champion after it has silently fallen out of the
elite.  The displacement certificates of Section~\ref{sec:cert} turn
elitism into a per-decision guarantee.

\begin{proposition}[certified elitism]\label{prop:elitism}
Fix a level $\delta$, a padding $b \ge 0$, and $k$.  At query time $t$
let item $x$ have age $g$ and radius $D = D(g,\delta)$ with its certified
band interior, and suppose
$\mathrm{pos}[x] - D - b \ge n - k$.
If the residual at $x$'s last verification was at most $b$, then with
probability at least $1 - \delta$,
\[
\rk_t(x) \;>\; \mathrm{pos}[x] - D - b \;\ge\; n-k,
\qquad\text{that is,}\qquad x \in T_k(\rk_t).
\]
Unconditionally, the probability that the rule selects an item not truly
in the top $k$ is at most $\delta + \Prob(\mathrm{residual} > b)$.
\end{proposition}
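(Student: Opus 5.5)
The plan is to reduce the statement to the composition bound of Proposition~\ref{prop:compose}, adding a careful accounting of strict versus non-strict inequalities at the selection boundary. Write $t_0 = \mathrm{last}[x]$. The triangle decomposition of Remark~\ref{rem:contract} splits the gap between report and truth into two parts:
\[
\rk_t(x) \;\ge\; \mathrm{pos}[x] - |\mathrm{pos}[x] - \rk_{t_0}(x)| - |\rk_t(x) - \rk_{t_0}(x)| .
\]
The first absolute value is the residual, at most $b$ by hypothesis. The second is the motion since verification. On the event of Lemma~\ref{lem:tail} it is \emph{strictly} less than $D$, because the lemma bounds $\Prob(\sup|\cdot| \ge D)$ and in particular the value at $s = t$. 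The interior hypothesis on the certified band is exactly what lets us invoke the lemma (equivalently Proposition~\ref{prop:cover}, which makes the guarantee independent of the maintainer and of the query time). Combining the two bounds gives $\rk_t(x) > \mathrm{pos}[x] - b - D$ with probability at least $1-\delta$.

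The second inequality, $\mathrm{pos}[x] - D - b \ge n-k$, is the selection rule's hypothesis. The strict inequality therefore yields $\rk_t(x) > n-k$, which is the definition of $x \in T_k(\rk_t)$ from Section~\ref{sec:selection}. The one point needing care is that strictness comes from the motion term. This is why the rule may use the non-strict margin condition while still landing in the top $k$ under the convention $\sigma(x) > n-k$. Because ranks are integers, no further rounding issue arises.

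For the unconditional statement, I would take a union bound. The rule errs only if the residual exceeds $b$ or, on the complement, the motion reaches $D$. The motion event has probability at most $\delta$ regardless of what the residual did, since Lemma~\ref{lem:tail} concerns drift in $(t_0,t]$, which is independent of the history up to $t_0$ that determines the residual. This gives total failure probability at most $\delta + \Prob(\mathrm{residual} > b)$, which is the argument of Proposition~\ref{prop:compose} specialised to a one-sided event.

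The main obstacle is a subtlety rather than a computation. The phrase ``with probability at least $1-\delta$'' in the conditional claim should be read either as the joint statement $\Prob(\text{residual} \le b \text{ and } x \notin T_k(\rk_t)) \le \delta$, or as a conditional probability given the residual event. Independence of the window's drift from the $\mathcal{F}_{t_0}$-measurable residual is what justifies the conditional reading. I would state that independence explicitly, citing the parenthetical in the proof of Proposition~\ref{prop:cover}. A second point is that the query time and the selection decision may be adaptive, but the lemma's bound is uniform over $t_0$ and $t$, so this causes no difficulty.
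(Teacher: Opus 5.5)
Your proof is correct and follows the paper's argument. The residual bound $|\mathrm{pos}[x]-\rk_{t_0}(x)|\le b$ and the strict motion bound $|\rk_t(x)-\rk_{t_0}(x)|<D$ from Lemma~\ref{lem:tail} chain with the margin hypothesis, and the unconditional clause is the union bound of Proposition~\ref{prop:compose}. The one step the paper states explicitly that you leave implicit is that $\mathrm{pos}[x]$ at query time equals its value at $x$'s last verification, because a reported position changes only at a probe involving the item; this is what lets you treat the first term of your triangle inequality as the residual at that verification.
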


\begin{proof}
A reported position changes only at probes involving the item, so
$\mathrm{pos}[x]$ at time $t$ equals its value at the last verification
$t_0$, and $|\mathrm{pos}[x] - \rk_{t_0}(x)| \le b$ by hypothesis.  On
the certificate event $|\rk_t(x) - \rk_{t_0}(x)| < D$
(Lemma~\ref{lem:tail}), and the two inequalities chain.
\end{proof}

Under the patrol the rule is uniform rather than adaptive: ages never
exceed $2(n-1)$, so $D \le 8$ at $\delta = 0.05$, $\alpha = 1$ for every
$n$ (Remark~\ref{rem:width}), and with the calibrated $b = 4$ the margin
is twelve positions.  \emph{Any candidate reported twelve or more
positions inside the boundary can be selected with a per-decision
guarantee, on a board of any size, at any time.}  No maintainer without
an age bound offers this: under the random-probe baseline the margin
tracks the certified width of the oldest item, which grows like
$\sqrt{\log n}$ and keeps growing.  The case study runs the rule at
$k \in \{16, 64, 256\}$, selects $559{,}800$ candidates, and records one
wrong pick (precision $0.999998$) while still certifying $96\%$ of the boundary's capacity at
$k = 256$ (Section~\ref{sec:exp-selection}).

\subsection{Selection reliability over time}\label{sec:reliability}

The identities above price single decisions; search runs on streams of
them, so the layer should also say how long a stream stays clean.  The
statement is short because the machinery already paid for it.

\begin{proposition}[certified failure time]\label{prop:failure-time}
Issue certified picks by the margin rule at level $\delta$ with padding
$b$, and let $\varepsilon_b$ be the residual exceedance probability.
Among any $N$ picks the expected number of wrong ones is at most
$N(\delta + \varepsilon_b)$, and the index $T$ of the first wrong pick
satisfies $\E\,T \ge 1/\bigl(2(\delta+\varepsilon_b)\bigr)$.
\end{proposition}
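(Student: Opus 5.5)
The plan is to derive both claims from a per-pick failure bound, using linearity of expectation for the first and a union bound for the second. Write $E_i$ for the event that the $i$-th pick is wrong. Proposition~\ref{prop:elitism} gives $\Prob(E_i)\le\delta+\varepsilon_b$ for each single pick: the certificate event fails with probability at most $\delta$ by Lemma~\ref{lem:tail}, and the residual exceeds $b$ with probability at most $\varepsilon_b$. This bound should hold whatever happened at earlier picks and whatever the pick times are. The justification is Proposition~\ref{prop:cover}: motion soundness is uniform over $t_0$, $t$ and everything the algorithm did. Granting that, the expected number of wrong picks among $N$ is $\sum_{i\le N}\Prob(E_i)\le N(\delta+\varepsilon_b)$ by linearity alone. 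No independence between picks is needed, which matters because picks made at nearby times share drift and are correlated.

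For the failure time, set $p=\delta+\varepsilon_b$. If $p\ge 1/2$ the claim reads $\E T\ge 1/(2p)\le 1$, and this holds trivially since $T\ge 1$ (with $T=\infty$ when no wrong pick occurs). For $p<1/2$, the union bound gives, for every integer $N$,
\[
\Prob(T\le N)=\Prob\Bigl(\bigcup_{i\le N}E_i\Bigr)\le Np .
\]
Take $N=\lfloor 1/(2p)\rfloor$. Then $\Prob(T>N)\ge 1-Np\ge 1/2$, and Markov-type reasoning gives $\E T\ge (N+1)\Prob(T>N)\ge (N+1)/2$. Since $N+1>1/(2p)$, this yields $(N+1)/2>1/(4p)$, which is off by a factor of two.

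To recover the stated constant I would sum tails directly instead:
\[
\E T=\sum_{N\ge0}\Prob(T>N)\ge\sum_{N=0}^{\lfloor 1/p\rfloor}(1-Np)^+ .
\]
The right-hand side is a Riemann sum for $\int_0^{1/p}(1-xp)\,dx=1/(2p)$. Because the integrand is decreasing and the sum is a left endpoint sum, the sum dominates the integral, so $\E T\ge 1/(2p)$.

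The main obstacle is justifying the per-pick bound when pick times are chosen adaptively. The margin rule picks an item only when its reported position clears the boundary, and that decision depends on the board's history. The lemma's window statistics depend only on drift events after $t_0$, and these are independent of the past. However, $t$ itself may be a stopping time, so I would argue that the bound on the supremum over $s\in[t_0,t]$ in Lemma~\ref{lem:tail} handles this. The residual term also has to be taken as a probability bound that holds for each pick. This is exactly how $\varepsilon_b$ is defined (the probability at the item's last verification), so I would state the result with that reading and not assume independence anywhere.
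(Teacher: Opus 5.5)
Your proposal is correct and is essentially the paper's proof. It uses the unconditional per-pick bound $\delta+\varepsilon_b$ from Proposition~\ref{prop:elitism}, linearity of expectation for the count with no independence needed, and the union bound $\Prob(T\le N)\le Np$ inserted into $\E\,T=\sum_{N\ge 0}\Prob(T>N)$ truncated at $\lfloor 1/p\rfloor$; the only difference is cosmetic, in that you bound $\sum_{N=0}^{\lfloor 1/p\rfloor}(1-Np)$ by a left-endpoint integral comparison where the paper evaluates it in closed form as $(M+1)(1-pM/2)\ge (M+1)/2\ge 1/(2p)$ with $M=\lfloor 1/p\rfloor$, so your Markov detour and the separate case $p\ge 1/2$ can simply be dropped.
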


\begin{proof}
Each certified pick is wrong with probability at most $\delta +
\varepsilon_b$ (Propositions~\ref{prop:compose} and~\ref{prop:elitism},
unconditional form); linearity of expectation gives the count, with no
independence needed.  Writing $q = \delta + \varepsilon_b$, a union
bound gives $\Prob(T \le N) \le Nq$, so with $M = \lfloor 1/q \rfloor$,
\[
\E\,T \;=\; \sum_{N \ge 0} \Prob(T > N)
\;\ge\; \sum_{N=0}^{M} (1 - Nq)
\;=\; (M{+}1)\Bigl(1 - \frac{qM}{2}\Bigr)
\;\ge\; \frac{M+1}{2} \;\ge\; \frac{1}{2q}. \qedhere
\]
\end{proof}

While no certified failure has occurred, every champion the rule
protects is a true member of the top $k$: elitist tracking-loss events,
the silent protection of a fallen elite that costs dynamic optimizers
most, occur at rate at most $\delta + \varepsilon_b$ per certified
decision, deterministically accounted.  The measured slack is large in
the structure's favor: across $559{,}800$ certified picks the case study
records one error, a realized rate of $1.8 \times 10^{-6}$ against a
contractual $\delta + \varepsilon_b \approx 0.055$
(Section~\ref{sec:exp-selection}).

Per-decision error rates also propagate into whole-algorithm guarantees,
not by analogy but by coupling: any conclusion proved for an idealized
evolutionary algorithm that sees the true ranking transfers to the same
algorithm running on the maintained board, with an additive slack the
tournament identity prices exactly.

\begin{proposition}[tournament coupling]\label{prop:coupling}
Fix any algorithm whose only access to fitness is through binary
tournaments on independently, uniformly drawn pairs, judged on the
maintained board, and let its idealized twin judge the same tournaments
on the hidden order.  Couple the two runs on identical randomness
(pair draws, variation, replacement, tie-breaking).  Then the
trajectories coincide until the first tournament that draws a pair
discordant at that moment, and for every event $A$ determined by the
first $T$ tournaments,
\[
\bigl|\,\Prob_{\mathrm{maintained}}(A) - \Prob_{\mathrm{true}}(A)\,\bigr|
\;\le\; \sum_{t=1}^{T} \frac{\E\,K_{(t)}}{\binom{n}{2}},
\]
where $K_{(t)}$ is the board's Kendall distance when the $t$-th
tournament is drawn, along the coupled trajectory.  At a stationary
level $\bar K$ the index of the first diverging decision has
expectation at least $\binom{n}{2}/(2\bar K)$.
\end{proposition}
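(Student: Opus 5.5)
The plan is a standard coupling argument, with Proposition~\ref{prop:tournament} supplying the per-decision rate and a union bound over decisions.

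\textbf{Construction.} Put the maintained run and the idealized run on one probability space. They share the drift process and the maintainer's probes and coins. They also share every randomness source the algorithm uses: the pair draws, the variation operators, the replacement rules and the tie-breaks. The board $(\est,\rk)$ evolves independently of the algorithm, so the hidden order and the estimate are literally the same object in both runs. The runs can therefore differ only through tournament outcomes. A tournament outcome differs exactly when the drawn pair is discordant between $\est$ and $\rk$ at the moment of the draw. Let $\tau$ be the index of the first tournament that draws a discordant pair. By induction on the tournament index, the two trajectories coincide on all tournaments before $\tau$, which proves the first claim.

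\textbf{Total-variation bound.} Fix an event $A$ determined by the first $T$ tournaments. On $\{\tau > T\}$ the two runs agree on $A$. Hence
\[
|\Prob_{\mathrm{maintained}}(A)-\Prob_{\mathrm{true}}(A)|\le \Prob(\tau\le T)\le \sum_{t=1}^{T}\Prob(E_t),
\]
where $E_t$ is the event that the $t$-th tournament draws a discordant pair. Let $\mathcal F_{t}$ be the history just before the $t$-th draw; it includes the current board along the coupled trajectory. The $t$-th pair is uniform and independent of $\mathcal F_t$, so Proposition~\ref{prop:tournament} gives $\Prob(E_t\mid\mathcal F_t)=K_{(t)}/\binom n2$. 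Taking expectations gives the sum. No independence across $t$ is needed.

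\textbf{Main obstacle.} The delicate step is justifying that the draw is independent of the current board. The algorithm's past decisions can influence when tournaments happen, but they cannot influence the board. The model also stipulates fresh uniform pairs. So conditioning on $\mathcal F_t$ is legitimate even though the timing of tournaments is random. I would state this explicitly, noting that $K_{(t)}$ is $\mathcal F_t$-measurable.

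\textbf{Expected first divergence.} At a stationary level, $\E K_{(t)}=\bar K$ for every $t$. Set $q=\bar K/\binom n2$. Then $\Prob(\tau\le N)\le Nq$, and the computation in Proposition~\ref{prop:failure-time} applies verbatim, giving $\E\tau\ge 1/(2q)=\binom n2/(2\bar K)$.
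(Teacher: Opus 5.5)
Your proposal is correct and follows the paper's proof step for step: coupling on shared randomness, induction that the states agree until the first discordant draw, Proposition~\ref{prop:tournament} applied conditionally on the past together with a union bound over the first $T$ tournaments, and then the computation of Proposition~\ref{prop:failure-time} with $q=\bar K/\binom{n}{2}$. One small correction: you do not need the claim that the board evolves independently of the algorithm, and you should not assert it for an EA whose replacement changes the population; the paper instead keeps the board inside the state that stays identical by induction, the freshness of each uniform pair draw alone gives $\Prob(E_t\mid\mathcal F_t)=K_{(t)}/\binom{n}{2}$, and after any divergence $K_{(t)}$ is simply read off the maintained run.
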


\begin{proof}
Under the coupling the two runs share every random input, and as long
as no drawn pair is discordant, every tournament returns the same
winner in both, so the full states (population, board, counters) remain
identical by induction.  The $t$-th drawn pair is uniform and
independent of the current state, hence discordant with probability
exactly $K_{(t)}/\binom{n}{2}$ conditionally on the past
(Proposition~\ref{prop:tournament}); a union bound over the first $T$
tournaments bounds the probability that the runs have diverged, which
dominates the difference in probability of any event they determine.
The expectation bound is the computation of
Proposition~\ref{prop:failure-time} with
$q = \bar K/\binom{n}{2}$.
\end{proof}

At the patrol equilibrium $\bar K \approx 0.55\min(\alpha,1)\,n$ the
horizon is $\binom{n}{2}/(2\bar K) \approx 0.45\,(n-1)/\min(\alpha,1)$
tournaments: for a population of $512$ at unit drift, about $230$
consecutive selection decisions are expected to be exactly the
decisions of the idealized algorithm.  In particular every hitting-time
or tracking statement available for the idealized process, for
instance in the dynamic runtime-analysis
literature~\cite{Droste03,RohLehYao09,LisWit16}, holds for the
maintained process up to the stated additive slack, with
$\E K_{(t)}$ supplied by the equilibrium theory of
Section~\ref{sec:equilibrium} rather than assumed.

These propositions also mark, deliberately, where the layer's authority
ends.  Hitting times, tracking error, and diversity dynamics couple the
selection stream to the landscape, the variation operators, and the
change process; the frequency-and-magnitude axis of dynamic runtime
analysis~\cite{RohLehYao09,LisWit16} parameterizes exactly that
coupling.  This paper contributes the interface those analyses consume:
exact per-decision error rates, failure-time floors, and a measurable
displacement statistic that locates a landscape relative to the
recovery crossover of Theorem~\ref{thm:recovery}.  When drift
decorrelates ranks faster than the budget re-verifies them, no ranking
layer can convert order into progress, and the certificates do not
pretend to: they report the obsolescence itself, through ages, withheld
post-shock coverage, and the swap-rate detector, which is precisely the
information an evaluation manager needs to trigger the responses of
Section~\ref{sec:exp-ea}.

\subsection{Two drifting objectives}

Theorem~\ref{thm:maxima} is the two-objective member of the same family.
The maxima of a planar point set under its coordinate orders are exactly
the Pareto-optimal candidates under two objectives: the nondominated set
that multi-objective evolutionary methods in the NSGA
line~\cite{DebPraAgaMey02} maintain as their population elite.  Read this
way, the frontier theorem states that if two objectives drift and each is
tracked by a patrol on its induced ranking, the reported nondominated set
differs from the true one by at most $2(K^x + K^y)$, instance-wise, with
both Kendall distances controlled by their patrols and the assembly
costing $O(n)$ per generation.  The measured frontier error of $0.48$
points per snapshot (Section~\ref{sec:exp-maxima}) is the two-objective
analogue of the boundary locality above: discordances corrupt selection
only where they touch the elite.

\section{Experimental evaluation}\label{sec:experiments}

The paper profile uses thirty paired seeds.  The maintenance grid is
$n\in\{256,1024,4096,16384\}$ and
$\alpha\in\{2^{-4},2^{-2},1,4,16\}$, with $80$ measured sweeps after burn-in;
an additional ten-seed run reaches $n=65{,}536$.  Frontier experiments
warm-start the estimates because they test maintenance rather than initial
construction.  The robustness and full-EA protocols are specified in their
subsections.  Every per-seed observation, confidence interval, statistical
test, source hash, and command line is included in \texttt{anc/results/}.

Four maintainers are compared throughout: the \emph{cyclic} and
\emph{boustrophedon} patrols, the \emph{repeated insertion sort} (the
maintainer with proven $\Theta(n)$ steady state at unit
rate~\cite{BesaDEGJ18}), and the \emph{random adjacent probe}, which
relaxes exactly one ingredient of the patrol, the deterministic cursor,
and prices it.  The selection study of
Section~\ref{sec:exp-selection} adds a fifth, the \emph{generational
re-sorter}: it re-sorts the whole population by binary insertion (about
$n \log_2 n$ live comparisons per generation) and serves the previously
published generation meanwhile, the ranking-level shadow of full periodic
re-evaluation under the identical unit budget.

\subsection{The structure as a data structure}\label{sec:exp-ds}

Table~\ref{tab:ds-microbench} measures the patrol as an artifact rather
than as a process: wall-clock step cost, memory footprint, the
deterministic age bound, and access locality, reported through the
\emph{access span}, the expected distance between consecutive probed
locations, a portable proxy for cache behavior that does not depend on
hardware counters.  The patrol's step is a constant-time, span-one
operation on three flat arrays; the random baseline performs the same
arithmetic yet jumps $\Theta(n)$ locations per probe and pays for it both
in time and, unboundedly, in age.  The last two rows close the loop on
two theorems at once.  The certificate row compares the certified radius
at age $n$ with the realized $99$th displacement percentile, the gap
being the price of Freedman generality (Section~\ref{sec:exp-cert}).
The stabilization row starts the patrol from the reversed board, the
worst case of Theorem~\ref{thm:stabilize}, and observes sorting at
$1022.001$ sweeps for $n = 1024$: one probe past $L-1$ full cycles,
inside the predicted window $(L-1,\, L]$, with the same exactness at
every size tested down to $n = 64$.

\begin{table}[H]
\centering\small
\caption{Data-structure microbenchmarks at $n = 65{,}536$ (ten seeds;
time per maintained step on one core).  Space counts live words; age is
the deterministic bound; span is the expected distance between
consecutive probed locations.  Below the rule: certificate tightness at
gap $g = n$, and the reversed-start stabilization measurement.}
\label{tab:ds-microbench}
\begin{tabular}{lcccc}
\toprule
structure & ns / step & space (words) & age bound & access span\\
\midrule
comparison patrol & $1160$ & $196{,}612$ & $131{,}070$ & $1.0$\\
repeated insertion & $1163$ & $196{,}612$ & unbounded & $1.0$\\
random adjacent & $1276$ & $229{,}380$ & unbounded & $21845.0$\\
\midrule
certificate at $g=n$ & -- & -- & $D=6$ & $p_{99}=4.0$\\
reversed start, zero drift & -- & -- & -- & $1022.0$ sweeps
\\
\bottomrule
\end{tabular}
\end{table}

\subsection{Equilibrium and scaling}\label{sec:exp-steady}

Table~\ref{tab:steady} and Figure~\ref{fig:steady} carry the equilibrium
story.  Four reads.  First, $K^\ast/n$ is constant in $n$ for every
maintainer: the $\Theta(\alpha n)$ equilibrium of
Remark~\ref{rem:lb} realized.  Second, the floor is now two-sided
theory rather than analogy: at $\alpha = 1$ the cyclic patrol measures
$K^\ast = 0.553\,n$ against its own proven floor of $0.045\,n$
(Theorem~\ref{thm:floor}, a factor $12.2$) and the oblivious bound
$(n-1)/12$ of Corollary~\ref{cor:unit}; every remaining disagreement
between theory and measurement lives in the constant of
Conjecture~\ref{conj:balance}, which Section~\ref{sec:exp-little} audits
directly.  Third, the collapse: dividing by
$\alpha n$ flattens every cyclic-patrol configuration onto
$0.465$--$0.556$.  Rates through $\alpha=4$ remain close to the conjectured
balance constant $\tfrac12$; at $\alpha=16$ the decline marks the onset of repair
saturation rather than a new scaling law.  Fourth, the price of patrol structure is small and
the price of no structure is not: the boustrophedon patrol sits $2.5\%$
above repeated insertion sort at $n = 4096$, the cyclic patrol $8.6\%$
above, while the random adjacent probe (identical except that its cursor
is memoryless) pays a factor $2.0$ in error and, as
Table~\ref{tab:ages} shows, an unbounded factor in age.

\begin{table}[H]
\centering\small
\caption{Steady-state Kendall distance per item, $K^\ast/n$, at
$\alpha = 1$ (thirty seeds, $80$ measured sweeps).  The flat columns show the
$\Theta(n)$ equilibrium; random probing pays roughly a factor two.}
\label{tab:steady}
\begin{tabular}{lcccc}
\toprule
maintainer & $n{=}256$ & $1024$ & $4096$ & $16384$\\
\midrule
cyclic patrol & $0.550$ & $0.552$ & $0.555$ & $0.555$\\
boustrophedon patrol & $0.520$ & $0.522$ & $0.524$ & $0.523$\\
repeated insertion & $0.505$ & $0.509$ & $0.511$ & $0.511$\\
random adjacent & $1.004$ & $1.023$ & $1.028$ & $1.026$
\\
\bottomrule
\end{tabular}
\end{table}

\begin{figure}[H]
\centering
\begin{tikzpicture}
\begin{axis}[width=0.49\textwidth, height=5.4cm,
  xmode=log, log basis x=2, xlabel={$n$}, ylabel={$K^\ast/n$},
  ymin=0, ymax=1.65, ytick={0,0.5,1,1.5}, xtick={256,1024,4096,16384},
  xticklabels={$256$,$1024$,$4096$,$16384$},
  legend pos=north west, legend cell align=left]
\addplot+[mark=*, mark size=1.6pt, error bars/.cd, y dir=both, y explicit]
  table[x=n, y=K, y error=sd]{figdata/steady_cyclic.dat};
\addlegendentry{cyclic patrol}
\addplot+[mark=square*, mark size=1.5pt, error bars/.cd, y dir=both,
  y explicit] table[x=n, y=K, y error=sd]{figdata/steady_boustrophedon.dat};
\addlegendentry{boustrophedon}
\addplot+[mark=triangle*, mark size=1.9pt, error bars/.cd, y dir=both,
  y explicit] table[x=n, y=K, y error=sd]{figdata/steady_insertion.dat};
\addlegendentry{repeated insertion}
\addplot+[mark=diamond*, mark size=1.9pt, error bars/.cd, y dir=both,
  y explicit] table[x=n, y=K, y error=sd]{figdata/steady_random.dat};
\addlegendentry{random adjacent}
\addplot[dashed, domain=256:16384] {1.0/12};
\addlegendentry{oblivious bound $\tfrac{1}{12}$}
\end{axis}
\end{tikzpicture}\hfill
\begin{tikzpicture}
\begin{axis}[width=0.49\textwidth, height=5.4cm,
  xmode=log, log basis x=2, xlabel={$\alpha$},
  ylabel={$K^\ast/(\alpha n)$},
  ymin=0, ymax=0.75,
  xtick={0.0625,0.25,1,4,16},
  xticklabels={$2^{-4}$,$2^{-2}$,$1$,$2^{2}$,$2^{4}$},
  legend pos=south east, legend cell align=left]
\addplot+[mark=*, mark size=1.6pt, error bars/.cd, y dir=both, y explicit]
  table[x=alpha, y=K, y error=sd]{figdata/collapse_256.dat};
\addlegendentry{$n = 256$}
\addplot+[mark=square*, mark size=1.5pt, error bars/.cd, y dir=both,
  y explicit] table[x=alpha, y=K, y error=sd]{figdata/collapse_4096.dat};
\addlegendentry{$n = 4096$}
\addplot[dashed, domain=0.05:18] {0.5};
\addlegendentry{conjectured balance $\tfrac12$}
\end{axis}
\end{tikzpicture}
\caption{Left: equilibrium error per item is flat in $n$ for all four
maintainers ($\alpha=1$); the dashed line is the oblivious lower bound of
Corollary~\ref{cor:unit}, and the patrol's own floor
(Theorem~\ref{thm:floor}) sits at $0.045$ on this axis.  Right: the
scaling collapse.  Plotting $K^\ast/(\alpha n)$ collapses twenty patrol
configurations across three decades of drift rate; the dashed line is
the constant of Conjecture~\ref{conj:balance}.}
\label{fig:steady}
\end{figure}

The footrule tells the same story in more humane units: under the cyclic
patrol $F^\ast/n \in [0.994, 1.007]$ for every $n$ tested.  \emph{A
patrol-maintained board at unit drift is off by one position per entry, on
average, at every scale.}  Against the Diaconis--Graham
inequalities~\eqref{eq:dg} the measured ratio $F^\ast/K^\ast \approx 1.81$
sits near the upper end $2$: equilibrium disorder is dominated by many
small displacements rather than few large ones, which is precisely the
regime certificates can afford to be tight in.

\subsection{Approach to equilibrium and unattended decay}

Figure~\ref{fig:traj} shows the two halves of the maintenance trade.
Unattended, the estimate loses one pair per event essentially exactly as
long as discordances are sparse (the envelope of Lemma~\ref{lem:decay} is
tight to $4\%$ at $t = n/4$), so every idle step has a known, linear price.
Attended, the price stops accruing: from a cold sorted start the patrol's
error climbs for about three sweeps and then holds the equilibrium of
Table~\ref{tab:steady} indefinitely.  Equilibrium is reached at the speed
of the budget, not of the mixing time.

\begin{figure}[H]
\centering
\begin{tikzpicture}
\begin{axis}[width=0.49\textwidth, height=5.4cm,
  xlabel={$t$ (steps)}, ylabel={$\E\,K_t$},
  xmin=0, xmax=10300, ymin=0,
  legend pos=north west, legend cell align=left,
  scaled x ticks=false]
\addplot+[mark=none, thick] table[x=t, y=K]{figdata/decay.dat};
\addlegendentry{unattended estimate, $n{=}512$}
\addplot+[mark=none, dashed, thick] table[x=t, y=bound]{figdata/decay.dat};
\addlegendentry{envelope of Lemma~\ref{lem:decay}}
\end{axis}
\end{tikzpicture}\hfill
\begin{tikzpicture}
\begin{axis}[width=0.49\textwidth, height=5.4cm,
  xlabel={sweeps of $n-1$ steps}, ylabel={$K_t/n$},
  xmin=0, xmax=120, ymin=0, ymax=1.15,
  legend pos=south east, legend cell align=left]
\addplot+[mark=none, thick] table[x=sweep, y=K]{figdata/traj_cyclic.dat};
\addlegendentry{cyclic patrol}
\addplot+[mark=none, thick, dashed]
  table[x=sweep, y=K]{figdata/traj_insertion.dat};
\addlegendentry{repeated insertion}
\addplot+[mark=none, thick, dotted]
  table[x=sweep, y=K]{figdata/traj_random.dat};
\addlegendentry{random adjacent}
\end{axis}
\end{tikzpicture}
\caption{Left: with zero probes, disorder accrues at unit rate and the
exact envelope $\tfrac{n-1}{2}(1-e^{-2\alpha t/(n-1)})$ of
Lemma~\ref{lem:decay} tracks the sparse-disorder regime.  Right: patrol and
insertion reach equilibrium within three sweeps ($n=1024$, $\alpha=1$), while
random probing converges to a plateau roughly twice as high.}
\label{fig:traj}
\end{figure}

\subsection{The equilibrium audited from inside}\label{sec:exp-little}

The constant of Conjecture~\ref{conj:balance} is not audited by reading
$K^\ast$ off the board but by metering the queue that produces it.
Because the relative order of a fixed pair changes only at drift events
of that pair and at probe swaps of that pair, a ledger keyed by pairs
records every birth and every death of a discordance exactly; the ledger
size reproduces the Kendall distance at every checkpoint by construction,
and the experiment asserts this identity rather than assuming it.
Table~\ref{tab:lifetimes} reports the flow at $n = 4096$ over ten seeds
and eighty measured sweeps per configuration.

\begin{table}[H]
\centering\small
\caption{The discordance queue at $n = 4096$.  Births and lifetimes are
exact ledger counts; the Little column is the ratio
$\bar\lambda\,\overline W / K^\ast$ (Proposition~\ref{prop:flow});
first-pass is the fraction of discordances repaired on the first cursor
visit; repair share is the fraction of deaths due to the patrol rather
than to drift cancellation.}
\label{tab:lifetimes}
\begin{tabular}{lcccccc}
\toprule
$\alpha$ & $K^\ast\!/(\alpha n)$ & $\bar\lambda/\alpha$
  & $\overline W$ (sweeps) & Little & first-pass & repair share\\
\midrule
$2^{-4}$ & $0.508$ & $0.969$ & $0.525$ & $1.002$ & $0.970$ & $0.969$\\
$2^{-2}$ & $0.531$ & $0.889$ & $0.598$ & $1.001$ & $0.894$ & $0.875$\\
$1$ & $0.554$ & $0.719$ & $0.771$ & $1.000$ & $0.766$ & $0.607$\\
$2^{2}$ & $0.542$ & $0.582$ & $0.930$ & $1.000$ & $0.750$ & $0.283$
\\
\bottomrule
\end{tabular}
\end{table}

\begin{figure}[H]
\centering
\begin{tikzpicture}
\begin{axis}[width=0.7\textwidth, height=6.2cm,
  ymode=log, xlabel={lifetime (sweeps of $n-1$ steps)},
  ylabel={survival fraction},
  xmin=0, xmax=4, ymin=2e-5, ymax=1.3,
  legend pos=north east, legend cell align=left]
\addplot+[mark=none, thick]
  table[x=sweeps, y=survival]{figdata/life_tail_0p0625.dat};
\addlegendentry{$\alpha = 2^{-4}$}
\addplot+[mark=none, thick, dashed]
  table[x=sweeps, y=survival]{figdata/life_tail_0p25.dat};
\addlegendentry{$\alpha = 2^{-2}$}
\addplot+[mark=none, thick, dotted]
  table[x=sweeps, y=survival]{figdata/life_tail_1p0.dat};
\addlegendentry{$\alpha = 1$}
\addplot+[mark=none, thick, dashdotted]
  table[x=sweeps, y=survival]{figdata/life_tail_4p0.dat};
\addlegendentry{$\alpha = 4$}
\end{axis}
\end{tikzpicture}
\caption{Lifetime survival of discordant pairs under the live patrol
($n = 1024$, pooled over ten seeds).  The first sweep absorbs the
uniform waiting ramp of the balance argument; what survives it decays
geometrically, sweep after sweep, the escape-and-retry cascade whose
absence from the rigorous side is exactly the content of
Conjecture~\ref{conj:balance}.}
\label{fig:lifetail}
\end{figure}

The audit confirms the conjecture's mechanism, not merely its number.
At $\alpha = 2^{-4}$ births occur at $0.97\,\alpha$ per step, the mean
lifetime is $0.525$ sweeps against the ideal $0.5$, $97\%$ of
discordances die on the first cursor visit, and the Little product
matches the measured $K^\ast$ to $0.2\%$; the standing stock is
$0.508\,\alpha n$.  As the rate grows the queue thickens in a specific,
measurable way: by $\alpha = 4$ only $58\%$ of drift events create new
discordances (the rest land on already-discordant locations and cancel),
the mean lifetime stretches to $0.93$ sweeps, the drift itself performs
$72\%$ of the removals, and the two distortions cancel almost exactly,
holding $K^\ast/(\alpha n)$ inside $[0.508,\, 0.554]$ across the grid,
at $n = 1024$ and $n = 4096$ alike.  Little's law holds to three digits
in every cell, which is the strongest internal-consistency check the
model offers: the equilibrium is exactly a queue in balance, its inflow
and sojourn measured separately and multiplying back to its size.

\subsection{Ages, widths, coverage}\label{sec:exp-cert}

Table~\ref{tab:ages} is the equity ledger.  The patrols convert their
deterministic cursor into a deterministic age, and the age into a uniform
certified width: \emph{every} entry of the board is certified to $\pm 8$
positions at the $95\%$ level, with realized widths $\pm 6$, independent
of $n$ (Remark~\ref{rem:width}).  Repeated insertion sort, the better
error minimizer, has no deterministic age bound; its realized worst age is
$85\%$ larger.  The random baseline decouples age from budget entirely and
its widths grow without bound.  Per-item probe load makes the same point
in the language of fragile complexity~\cite{AfsFagHam19}: the patrol
caps every item's comparison load at $2$ per cycle by construction, while
the baseline's load is coupon-collector skewed.

\begin{table}[H]
\centering\small
\caption{Staleness and its price at $n = 4096$, $\alpha = 1$,
$\delta = 0.05$.  Maximum age is reported in units of $n$; $D$ is evaluated
at the realized age and at the deterministic patrol bound $2(n-1)$.}
\label{tab:ages}
\begin{tabular}{lcccc}
\toprule
maintainer & $F^\ast/n$ & realized max age & $D$ at realized age
           & $D$ at bound\\
\midrule
cyclic patrol        & $1.005$ & $1.00\,n$ & $\pm 6$ & $\pm 8$\\
boustrophedon patrol & $0.961$ & $1.00\,n$ & $\pm 6$ & $\pm 8$\\
repeated insertion   & $0.927$ & $1.85\,n$ & $\pm 8$ & ---\\
random adjacent      & $1.768$ & $4.50\,n$ & $\pm 11$ & unbounded\\
\bottomrule
\end{tabular}
\end{table}

Table~\ref{tab:coverage} tests the promises.  Motion-only coverage is
$1.000$ at every gap and level: Lemma~\ref{lem:tail} is sound with room
to spare, and the room is the price of Freedman generality (the certified
radius runs $2.5$--$3\times$ the realized $99$th percentile).  The
operational rows close the loop on the full reported interval of
Definition~\ref{def:operational}: with zero padding the patrol's
certificates already cover at $99.7\%$ against a $95\%$ nominal level,
because the motion slack absorbs the residual estimate error (whose
distribution at probe time has mean $0.60$ and $p_{99} = 4$).  The same
slack buys robustness to rate misspecification: radii computed from an
assumed rate $\hat\alpha = \alpha/2$, half the truth, still cover at
$99.95\%$ on the bare drift process at gap $g = n$, and overestimated
rates only widen.  The certificates are honest and modestly priced;
nothing in them is asymptotic.

\begin{table}[H]
\centering\small
\caption{Certificate coverage at $n = 1024$, $\alpha = 1$.  Top block:
motion-only guarantees for gap $g$.  Bottom block: operational coverage under
the live patrol using radius $D(\text{age},\delta)+b$.  All measured values
exceed their nominal levels.}
\label{tab:coverage}
\begin{tabular}{lccccc}
\toprule
& & $\delta = 0.10$ & $\delta = 0.05$ & $\delta = 0.01$
  & $p_{99}|{\Delta \rk}|$\\
\midrule
motion, $g = n/4$ & $D$ / coverage & $4$ / $1.000$ & $5$ / $1.000$
  & $6$ / $1.000$ & $2$\\
motion, $g = n$   & $D$ / coverage & $6$ / $1.000$ & $7$ / $1.000$
  & $8$ / $1.000$ & $4$\\
motion, $g = 4n$  & $D$ / coverage & $10$ / $1.000$ & $11$ / $1.000$
  & $13$ / $1.000$ & $7$\\
\midrule
operational, $b=0$ & coverage & --- & $0.9973$ & $0.9992$ & \\
operational, $b=2$ & coverage & --- & $0.9996$ & $0.9999$ & \\
operational, $b=4$ & coverage & --- & $0.9999$ & $1.0000$ & \\
\bottomrule
\end{tabular}
\end{table}

The two-part contract is audited by the split-seed protocol of
Proposition~\ref{prop:calibration}.  Fifteen calibration runs at
$n = 1024$, $\alpha = 1$ set the padding to $b = 4$ at target residual
exceedance $10^{-2}$ and estimate $\hat\varepsilon_4 = 0.0048$; fifteen
disjoint audit runs then measure held-out exceedance $0.0046$ and
composed-interval coverage $0.99994$ at $\delta = 0.05$.  The
finite-sample Hoeffding certificate at $m = 15$ runs is
$\varepsilon_4 \le 0.32$ at confidence $0.95$: loose, honest about its
$m^{-1/2}$ units, and irrelevant to the proven motion part, which needs
no calibration at all.

\subsection{Shock recovery and the rebuild crossover}\label{sec:exp-shock}

Figure~\ref{fig:shock} runs Theorem~\ref{thm:recovery} and
Proposition~\ref{prop:hybrid} against the machine at $n = 4096$: block
reversals of width $4$ through $4096$ injected into a sorted board
(quiet aftermath, $\alpha = 0$) and into the living equilibrium
($\alpha = 1$), with the pure patrol, the binary-insertion rebuild, and
the swap-count hybrid racing to recover.  At $\alpha = 0$ the
measurement is the theorem with the slack removed: the patrol recovers
in exactly $L$ sweeps at every width ($3.0$, $15.0$, $63.0$, $255.0$,
$1023.0$, $4095.0$ against bounds $L+1$), the rebuild needs $10.0$ to
$11.0$ sweeps regardless, the curves cross at $L \approx 10 = $ the
crossover of Theorem~\ref{thm:recovery}(iii), and the hybrid tracks the
winner on both sides of the crossing, never exceeding its
$2\,\mathrm{min} + 2(n-1)$ budget: $4.0$ sweeps at $w = 4$, between
$22.0$ and $23.0$ after switching.  Nonlocal shocks make the case the
locality model cannot: thirty-two random long-range exchanges produce
$L \approx 3544$, the patrol grinds for $3544$ sweeps while the rebuild
finishes in $10.6$ (Table~\ref{tab:shock-nonlocal}), a factor $334$,
and the hybrid pays only the $T_0$-cycle probation before switching.

\begin{figure}[H]
\centering
\begin{tikzpicture}
\begin{axis}[width=0.49\textwidth, height=6.0cm,
  xmode=log, log basis x=2, ymode=log,
  xlabel={block reversal width $w$}, ylabel={recovery (sweeps)},
  xtick={4,16,64,256,1024,4096},
  xticklabels={$4$,$16$,$64$,$256$,$1024$,$4096$},
  title={\small quiet aftermath ($\alpha=0$)},
  legend pos=north west, legend cell align=left]
\addplot+[mark=*, mark size=1.6pt]
  table[x=w, y=patrol]{figdata/shock_recovery_a0.dat};
\addlegendentry{patrol}
\addplot+[mark=square*, mark size=1.5pt]
  table[x=w, y=resort]{figdata/shock_recovery_a0.dat};
\addlegendentry{rebuild}
\addplot+[mark=triangle*, mark size=1.9pt]
  table[x=w, y=hybrid]{figdata/shock_recovery_a0.dat};
\addlegendentry{hybrid}
\addplot+[no marks, dashed, black]
  table[x=w, y=bound_patrol]{figdata/shock_recovery_a0.dat};
\addlegendentry{$L+1$ (Thm.~\ref{thm:recovery})}
\addplot+[no marks, dotted, black]
  table[x=w, y=bound_resort]{figdata/shock_recovery_a0.dat};
\addlegendentry{$C_{\mathrm{rs}}/(n{-}1)$}
\end{axis}
\end{tikzpicture}\hfill
\begin{tikzpicture}
\begin{axis}[width=0.49\textwidth, height=6.0cm,
  xmode=log, log basis x=2, ymode=log,
  xlabel={block reversal width $w$}, ylabel={recovery (sweeps)},
  xtick={4,16,64,256,1024,4096},
  xticklabels={$4$,$16$,$64$,$256$,$1024$,$4096$},
  title={\small live drift ($\alpha=1$)},
  legend pos=north west, legend cell align=left]
\addplot+[mark=*, mark size=1.6pt]
  table[x=w, y=patrol]{figdata/shock_recovery_a1.dat};
\addlegendentry{patrol}
\addplot+[mark=square*, mark size=1.5pt]
  table[x=w, y=resort]{figdata/shock_recovery_a1.dat};
\addlegendentry{rebuild}
\addplot+[mark=triangle*, mark size=1.9pt]
  table[x=w, y=hybrid]{figdata/shock_recovery_a1.dat};
\addlegendentry{hybrid}
\end{axis}
\end{tikzpicture}
\caption{Recovery after a block-reversal shock at $n = 4096$ (ten seeds,
five at the two largest widths; recovery to the sorted state at
$\alpha = 0$, to $1.25\times$ the pre-shock equilibrium at $\alpha = 1$).
Left: the quiet aftermath realizes Theorem~\ref{thm:recovery} exactly,
recovery $= L$ sweeps, rebuild flat at $11$, crossover at
$L \approx 10$.  Right: under live drift the rebuild's published board
inherits half a generation of staleness and recovery costs roughly
$100$ sweeps even for small shocks, so the patrol dominates at every
width; the hybrid's quiet-aftermath trigger misfires here, the regime
the next subsection's diagnostics are built to detect.}
\label{fig:shock}
\end{figure}

\begin{table}[H]
\centering\small
\caption{Nonlocal shocks: thirty-two uniformly random long-range
exchanges at $n = 4096$ (ten seeds; sweeps to recovery).  The induced
overstatement $L$ is $\Theta(n)$ and the crossover verdict flips to the
rebuild at $\alpha = 0$; under live drift both responses are slow and
the difference compresses.}
\label{tab:shock-nonlocal}
\begin{tabular}{lcccc}
\toprule
& $L$ & patrol & rebuild & hybrid\\
\midrule
$\alpha=0$ & $3544$ & $3544$ & $10.6$ & $22.2$\\
$\alpha=1$ & $3462$ & $3047$ & $3554.0$ & $3566.0$
\\
\bottomrule
\end{tabular}
\end{table}

Two honest boundaries emerge.  The crossover and the hybrid's
competitive guarantee are quiet-aftermath statements, and the
$\alpha = 1$ panel shows both why they matter and where they stop: a
rebuild that must comparison-shop against a moving truth publishes a
stale board, so continuous repair wins every local contest under live
drift, while a genuinely nonlocal shock still defeats any local cursor.
The policy variable is therefore not the drift rate but the
\emph{overstatement geometry} of the change, which is observable: the
swap counter detects that repair is not converging, and the displacement
diagnostics of Section~\ref{sec:exp-ea} estimate $L$ directly from
re-evaluation panels.

\subsection{The frontier under drift}\label{sec:exp-maxima}

The frontier experiments run on the maintenance model directly, two
drifting orders and two patrols, and share nothing with the
evolutionary benchmarks: they are an algorithms experiment in their own
right.  They are designed to keep three different
quantities from blurring into one another, and they are reported in that
order: the \emph{safety certificate} (the proven bound, evaluated on the
same snapshots it protects), the \emph{empirical tightness} (how far
realized error sits below each bound, and how the gap moves with
geometry), and the \emph{adversarial construction} (the family on which
the bound is attained, run rather than merely described).

Table~\ref{tab:maxima} and Figure~\ref{fig:snapshot} put
Theorem~\ref{thm:maxima} in the plane.  At equilibrium each patrol carries
$\approx 0.55\,n$ discordances, so the charging bound permits about
$2.3 \times 10^3$ frontier errors; the measured figure is $0.48$ per
snapshot, because the frontier of a random instance passes through
$\Theta(\log n)$ points and only discordances incident to that thin
staircase can hurt it.  The certificate is doing exactly its job (it is
instance-wise,
distribution-free, and met within its factor $2$ on the antidiagonal
family, where \emph{every} point is on the frontier); the experiment
quantifies the distance between that worst case and a generic one at
three to four orders of magnitude.  The snapshot makes the locality
visible: two hundred discordances are in the picture and only two of them
are anywhere the staircase can feel.

\begin{table}[H]
\centering\small
\caption{Evolving maxima, $n = 1024$, $\alpha = 1$, two cyclic patrols,
$30$ seeds $\times\, 60$ snapshots, initial coordinate orders independent.}
\label{tab:maxima}
\begin{tabular}{lcccc}
\toprule
& $|M|$ & $|M \diff \widehat M|$ & bound $2(K^x{+}K^y)$ & ratio\\
\midrule
mean & $7.6$ & $0.48 \pm 0.19$ & $2.26 \times 10^3$ & $2.0\times10^{-4}$\\
\bottomrule
\end{tabular}
\end{table}

Table~\ref{tab:frontier-local} is the tightness read, with the
adversarial construction as its last row: it prices the localized bound
of Theorem~\ref{thm:maxima-local} across the geometric spectrum, the two
coordinate orders coupled by a Gaussian copula from independence through
the exactly antidiagonal family.  The localized certificate
$2K_{\mathrm{loc}}$ tracks the realized frontier error within a factor
of $2.3$ \emph{everywhere}: $1.1$ against a realized $0.49$ on
independent orders, $89.9$ against $39.8$ on the antidiagonal family
where every point is maximal, while the global budget $2(K^x{+}K^y)$
stands fixed near $2.3\times 10^3$ throughout.  The refinement costs one
pass over $M \cup \widehat M$ to evaluate, is exactly as
distribution-free as the original, and converts the frontier theorem
from a safety statement that is loose by three to four orders of
magnitude on generic instances into one that is loose by a factor near
its own charging constant.

\begin{table}[H]
\centering\small
\caption{The localized frontier certificate across front geometries
($n = 1024$, $\alpha = 1$, thirty seeds, twelve snapshots each;
coordinate orders coupled by a Gaussian copula with the stated
correlation).  All three columns bound or measure
$|M \diff \widehat M|$ per snapshot.}
\label{tab:frontier-local}
\begin{tabular}{lcccc}
\toprule
front geometry & $|M|$ & realized & $2K_{\mathrm{loc}}$
  & $2(K^x{+}K^y)$\\
\midrule
independent & $6.7$ & $0.49$ & $1.1$ & $2258$\\
$\rho=-0.5$ & $11.6$ & $0.59$ & $1.2$ & $2258$\\
$\rho=-0.9$ & $28.1$ & $1.49$ & $3.1$ & $2258$\\
$\rho=-0.99$ & $70.1$ & $8.21$ & $17.3$ & $2258$\\
antidiagonal & $167.7$ & $39.79$ & $89.9$ & $2258$
\\
\bottomrule
\end{tabular}
\end{table}

\begin{figure}[H]
\centering
\begin{tikzpicture}
\begin{axis}[width=0.84\textwidth, height=6.5cm,
  xlabel={$x$-rank}, ylabel={$y$-rank},
  xmin=-4, xmax=164, ymin=-4, ymax=164,
  legend pos=south west, legend cell align=left]
\addplot[only marks, mark=*, mark size=0.85pt, color=black!40]
  table[x=x, y=y]{figdata/snap_points.dat};
\addlegendentry{items ($n=160$)}
\addplot[no marks, very thick, color=black]
  table[x=x, y=y]{figdata/snap_true_st.dat};
\addlegendentry{true frontier}
\addplot[no marks, thick, dashed, color=blue]
  table[x=x, y=y]{figdata/snap_est_st.dat};
\addlegendentry{reported frontier}
\addplot[only marks, mark=square, mark size=3.4pt, thick, color=red]
  table[x=x, y=y]{figdata/snap_fn.dat};
\addlegendentry{maximal, not reported}
\addplot[only marks, mark=diamond, mark size=4pt, thick, color=orange]
  table[x=x, y=y]{figdata/snap_fp.dat};
\addlegendentry{reported, not maximal}
\end{axis}
\end{tikzpicture}
\caption{One snapshot of the drifting plane ($n = 160$, $\alpha = 1$,
after $30$ sweeps; items drawn at their current hidden rank coordinates).
Despite $K^x=102$ and $K^y=99$, only one maximal point is missed (square) and
one non-maximal point is reported (diamond): $|M\diff\widehat M|=2$ versus a
worst-case bound of $402$.}
\label{fig:snapshot}
\end{figure}

\subsection{Selection under drift}\label{sec:exp-selection}

Table~\ref{tab:selection} and Figure~\ref{fig:selection} run
Section~\ref{sec:selection} against the machine.  Three reads, in
parallel with the frontier.  First, boundary locality: the patrol
misplaces $0.5$--$1.2$ candidates per boundary across $k = 4$ to $512$,
flat in $k$ to within noise, while the transfer bound stands at $46$;
exactly as for the staircase, the bound spends a unit on every
discordance and random instances put almost none of them where a boundary
can feel it.  Second, the price of generational re-evaluation: at the
identical budget of one comparison per step, the re-sorter's published
board carries $K^\ast = 2.05\,n$, a factor $3.7$ over the patrol, because
its information is on average half a generation stale; the factor passes
through the tournament identity unchanged ($4.0 \times 10^{-3}$ against
$1.08 \times 10^{-3}$ per duel) and shows up as a factor three to four in
truncation error at every $k$.  Continuous certified maintenance beats
periodic re-evaluation at equal cost, and the margin is the staleness.
Third, certified elitism does what Proposition~\ref{prop:elitism}
promises and is not vacuous: with the uniform margin $D + b \le 12$ the
rule certifies $50\%$ of the boundary capacity at $k = 16$, $88\%$ at
$k = 64$, and $96\%$ at $k = 256$, and across all $559{,}800$ certified
picks it commits one error.  The guarantee costs the picks nearest the
boundary and nothing else.

\begin{table}[H]
\centering\small
\caption{Selection under drift at $n = 1024$, $\alpha = 1$ ($30$ seeds
$\times\,60$ snapshots).  The top block compares truncation and tournament
error at equal budget; the bottom block reports certified-elitism yield and
precision for $\delta=0.05$, $b=4$.}
\label{tab:selection}
\begin{tabular}{lcccccc}
\toprule
& & \multicolumn{3}{c}{$|T_k \diff \widehat T_k|$} & bound & tournament\\
\cmidrule(lr){3-5}
maintainer & $K^\ast/n$ & $k{=}16$ & $64$ & $256$
  & $2\lfloor\sqrt{K}\rfloor$ & error\\
\midrule
cyclic patrol  & $0.553$ & $0.61$ & $0.71$ & $0.90$ & $46.6$
  & $1.08\times 10^{-3}$\\
generational   & $2.064$ & $2.17$ & $2.36$ & $2.76$ & $90.6$
  & $4.03\times 10^{-3}$\\
\midrule
certified, patrol & & $k{=}16$ & $64$ & $256$ & picks & wrong\\
\cmidrule(lr){3-5}
yield     & & $0.50$ & $0.88$ & $0.96$ & $559{,}800$ & $1$\\
precision & & $1.000$ & $1.000$ & $1.000$ & & \\
\bottomrule
\end{tabular}
\end{table}

\begin{figure}[H]
\centering
\begin{tikzpicture}
\begin{axis}[width=0.84\textwidth, height=6.0cm,
  xmode=log, log basis x=2, ymode=log,
  xlabel={selection boundary $k$},
  ylabel={$|T_k \diff \widehat T_k|$},
  xtick={4,16,64,256},
  xticklabels={$4$,$16$,$64$,$256$},
  ymin=0.2, ymax=200,
  legend pos=north west, legend cell align=left]
\addplot+[mark=*, mark size=1.6pt, error bars/.cd, y dir=both, y explicit]
  table[x=k, y=sym, y error=sd]{figdata/sel_cyclic.dat};
\addlegendentry{cyclic patrol}
\addplot+[mark=square*, mark size=1.5pt, error bars/.cd, y dir=both,
  y explicit] table[x=k, y=sym, y error=sd]{figdata/sel_generational.dat};
\addlegendentry{generational re-sort}
\addplot+[no marks, dashed, thick, black]
  table[x=k, y=cyc]{figdata/sel_bound.dat};
\addlegendentry{transfer bound, patrol}
\addplot+[no marks, dotted, thick, black]
  table[x=k, y=gen]{figdata/sel_bound.dat};
\addlegendentry{transfer bound, generational}
\end{axis}
\end{tikzpicture}
\caption{Truncation selection error against the boundary $k$ ($n = 1024$,
$\alpha=1$, thirty seeds, log--log).  Patrol error is three to four times
lower than generational re-sorting at equal budget; dashed lines show the
worst-case transfer certificates of Theorem~\ref{thm:topk}.}
\label{fig:selection}
\end{figure}

\subsection{Non-Poisson drift, shocks, and frontier density}
\label{sec:exp-robust}

The baseline model is now stressed in four directions.  A compound-Poisson
process groups crossings into local bursts of mean six; a hot-spot process
places $90\%$ of events in an $8\%$ band of rank locations; a regime process
alternates rates $1/8$ and $4$ once per sweep; and explicit changes reverse a
contiguous rank block or exchange random nonlocal positions.  All continuous
tests use $n=4096$, thirty paired seeds, and the same mean unit drift budget.
For shocks we record the immediate Kendall increase, the area above the
pre-change equilibrium, and sweeps required to return within $25\%$ of that
equilibrium.  A detected shock invalidates pre-change motion certificates;
coverage is resumed only for candidates probed after the detector fired.

Figure~\ref{fig:robust} shows why a single scalar ``drift rate'' is
insufficient.  Concentrating crossings can reduce global Kendall error because
the maintainer repeatedly revisits the damaged region, whereas alternating
slow and fast regimes leave a larger residual stock.  The shock inequality of
Proposition~\ref{prop:shock} holds in every run, but recovery depends on where
the inversions land.  These experiments are intentionally outside the
Poisson/uniform certificate theorem; they test graceful degradation, not
nominal coverage.

\begin{figure}[H]
\centering
\begin{tikzpicture}
\begin{axis}[width=0.80\textwidth,height=6.0cm,
  ybar,bar width=8pt,ylabel={$K^\ast/n$},
  symbolic x coords={poisson,compound,hotspot,regime},
  xtick=data,xticklabel style={rotate=20,anchor=east},
  legend pos=north west]
\addplot table[x=model,y=cyclic]{figdata/stress_drift.dat};
\addlegendentry{cyclic patrol}
\addplot table[x=model,y=insertion]{figdata/stress_drift.dat};
\addlegendentry{repeated insertion}
\end{axis}
\end{tikzpicture}
\caption{Rank-maintenance error under matched mean event budgets but different
temporal and spatial structure ($n=4096$, thirty paired seeds).  The result
separates event rate from burstiness and location concentration.}
\label{fig:robust}
\end{figure}

The Pareto experiment independently varies geometric difficulty.  The two
coordinate orders are generated by a Gaussian copula with correlations
$0,-0.5,-0.9,-0.99$ and by the exactly antidiagonal construction.  Negative
correlation thickens the nondominated set from $\Theta(\log n)$ toward all
$n$ candidates.  This closes the gap left by the random snapshot: the charging
theorem remains a safety certificate in every regime, while its empirical
tightness rises as discordances become increasingly likely to touch the
frontier.  The fully antidiagonal family remains the adversarial endpoint.

\subsection{Full dynamic evolutionary loops}\label{sec:exp-ea}

We next place evaluation management inside an optimizer.  Every claim in
this subsection lives in one of the paper's three layers, and each table
names its register.  \emph{Rank-process maintenance} (Kendall disorder,
ages, certified yield and precision) and \emph{selection-error transfer}
(tournament decision error, certified-pick precision) are the registers
where the theory speaks and the conclusions are sharp.  \emph{Optimizer
performance} (regret, recovery) couples the maintained ranking to the
landscape and the variation operators; conclusions there are empirical
by design, comparative, interval-bounded with paired tests and
rank-biserial effect sizes, and never extrapolated beyond the two
benchmark families.  The common
steady-state EA uses a population of $\mu=512$, binary-tournament parent
selection, bit mutation at rate $1/d$ for Dynamic BitMatching, Gaussian
mutation for Moving Peaks, and tournament replacement.  Dynamic BitMatching
uses $d\in\{256,1024\}$ and target changes of $1\%,5\%,20\%$; Moving Peaks
uses five dimensions, ten peaks, and shift severities $1,3,6$.  Environments
change every $5000$ objective evaluations, either abruptly or progressively.
Each run receives $15{,}000$ evaluations, spanning three scheduled changes.

The main maintenance allocation is $25\%$, with $10\%$ and $50\%$ sensitivity
runs.  Patrol pays two scalar evaluations for every comparison.  Periodic
full re-evaluation publishes after a complete pass; random partial and
elite-first policies refresh one candidate at a time; the sentinel policy
samples eight fixed candidates and launches a full pass after detecting a
change; the no-re-evaluation policy spends the entire budget on offspring.
The oracle refreshes the population without charge and is therefore an upper
control, not a feasible competitor.  All non-oracle policies share the same
total objective-evaluation budget.  A second policy set widens the
comparison beyond re-evaluation scheduling to the classic
dynamic-optimization responses, charged identically: random immigrants
(diversity maintenance), triggered hypermutation, memory with reinjection
of archived elites after a detected change, partial restart of the worse
half, and the patrol+refresh hybrid of
Proposition~\ref{prop:hybrid} lifted to the evaluation layer, which
patrols continuously and answers a detected change with one full
re-evaluation pass.

For every condition and seed we report time-averaged and $90$th-percentile
regret, recovery evaluations, Kendall disorder, binary-tournament decision
error, certified top-$64$ yield and precision, and the exact number of
maintenance and offspring evaluations.  Means carry bootstrap $95\%$
intervals.  Patrol--baseline comparisons use paired Wilcoxon tests with Holm
correction and matched rank-biserial effect sizes.  The tests address
evaluation management rather than claiming that this deliberately simple EA
is competitive with specialized dynamic optimizers.

Finally, we audit the abstraction itself.  On a fixed panel of $512$
candidates, each environmental change is converted to its induced rank
permutation.  We measure Kendall burst size, the Fano factor across changes,
the normalized entropy of affected rank locations, and the fraction of
candidates moving more than one rank.  Low-burst, local processes are the
regime represented by the adjacent model.  Large nonlocal fractions explain
when a global re-evaluation pass can outperform local patrol despite having a
worse age distribution.  This diagnostic is the bridge from a concrete
fitness landscape to the assumptions of the maintenance theory.

\begin{table}[H]
\centering\footnotesize
\caption{Representative abrupt-change EA results at the main $25\%$
maintenance allocation: mean regret with bootstrap $95\%$ interval,
and binary-tournament error.  BitMatching uses $d=1024$, severity $5\%$;
Moving Peaks uses severity $3$.  The free-refresh oracle is not feasible.}
\label{tab:dynamic-ea}
\setlength{\tabcolsep}{3.8pt}
\begin{tabular}{lcccc}
\toprule
& \multicolumn{2}{c}{BitMatching} & \multicolumn{2}{c}{Moving Peaks}\\
policy & regret $[95\%]$ & sel. error & regret $[95\%]$ & sel. error\\
\midrule
patrol & $444.97$ $[442.48,447.37]$ & $0.155$ & $17.64$ $[15.11,20.11]$ & $0.113$\\
periodic full & $444.36$ $[441.73,446.92]$ & $0.091$ & $17.69$ $[15.20,20.15]$ & $0.055$\\
random partial & $444.70$ $[442.31,446.93]$ & $0.112$ & $18.67$ $[15.99,21.29]$ & $0.052$\\
elite-first & $443.54$ $[441.29,445.75]$ & $0.070$ & $17.68$ $[15.24,20.07]$ & $0.039$\\
sentinel-triggered & $443.24$ $[440.99,445.44]$ & $0.065$ & $17.64$ $[15.18,20.02]$ & $0.034$\\
no re-evaluation & $441.17$ $[437.26,445.11]$ & $0.107$ & $16.39$ $[13.96,18.73]$ & $0.118$\\
oracle (free refresh) & $434.17$ $[431.83,436.39]$ & $0.011$ & $16.34$ $[13.91,18.71]$ & $0.010$
\\
\bottomrule
\end{tabular}
\end{table}

\begin{table}[H]
\centering\small
\caption{Induced rank-process diagnostics on fixed candidate panels of
$512$.  Burst is Kendall shock size; entropy is normalized across rank
locations; nonlocal is the fraction moving more than one rank;
$p_{95}$ disp.\ is the $95$th percentile of per-candidate rank
displacement per change, the panel-level estimate of the overstatement
$L$ of Theorem~\ref{thm:recovery}.}
\label{tab:rank-diagnostics}
\begin{tabular}{lccccc}
\toprule
landscape & burst & Fano & entropy & nonlocal & $p_{95}$ disp.\\
\midrule
BitMatching, gradual & $18755.7$ & $26.0$ & $0.984$ & $0.970$ & $143$\\
BitMatching, abrupt & $18755.7$ & $26.0$ & $0.984$ & $0.970$ & $143$\\
Moving Peaks, gradual & $2670.9$ & $84.9$ & $0.984$ & $0.816$ & $21$\\
Moving Peaks, abrupt & $2670.9$ & $84.9$ & $0.984$ & $0.816$ & $21$
\\
\bottomrule
\end{tabular}
\end{table}

Table~\ref{tab:dynamic-ea} rejects a simplistic winner narrative.  At the
main $25\%$ allocation, patrol is slightly more accurate than no
re-evaluation on Moving Peaks, but less accurate on the highly nonlocal
BitMatching shocks; periodic, elite-first, and sentinel-triggered refresh are
more accurate in both representative abrupt conditions.  Meanwhile the
evaluations spent on every maintenance policy reduce the number of offspring,
so no re-evaluation and the free-refresh oracle attain lower regret here.  The
$10\%/25\%/50\%$ sensitivity runs make the trade-off explicit: increasing
patrol budget reduces its Kendall disorder and tournament error, but can
increase optimization regret.  Patrol is therefore useful when bounded
per-item staleness and auditable recency are requirements, not as a universal
replacement for allocating evaluations to search.

Table~\ref{tab:rank-diagnostics} explains the outcome, and the recovery
theory now turns the explanation into a rule.  Both benchmark panels
produce high-entropy, predominantly nonlocal rank shocks, far outside a
single adjacent event per evaluation; the displacement column makes the
verdict quantitative.  On a panel of $\mu = 512$ the crossover of
Theorem~\ref{thm:recovery} sits at $L^\ast = C_{\mathrm{rs}}(512)/511 - 1
\approx 7$: BitMatching changes displace the $95$th-percentile candidate
by $143$ ranks and Moving Peaks by $21$, both far beyond $L^\ast$, so the
theory itself prescribes global refresh, or the hybrid, over pure patrol
on these landscapes, and the regret table agrees.  The prescription
inverts where the diagnostic does: the maintenance-layer experiments of
Section~\ref{sec:exp-shock} sit below the crossover under gradual drift
and local change, exactly where the patrol's certified repair wins.  The
practical reading of the pair of tables is a three-way rule: measure the
per-change displacement quantiles on a re-evaluation panel; below
$L^\ast$ run the patrol, far above it trigger a refresh, and in between,
or when the regime is unknown, run the hybrid, whose probation cost is
bounded by Proposition~\ref{prop:hybrid}.  In the reset experiment,
serving pre-change motion certificates degrades coverage as shock size
grows (to $0.9695$ for the largest block), whereas withholding them until
a candidate is re-probed restores measured motion coverage to $1.000$ at
the half-sweep checkpoint, with $50\%$ yield.  This is precisely the
operational failure mode hidden by a homogeneous gradual-drift benchmark.

Table~\ref{tab:ea-strong} widens the comparison to the standard
dynamic-optimization responses, all under the same budget accounting:
random immigrants, triggered hypermutation, memory with reinjection of
archived elites, partial restart, and the patrol+refresh hybrid that
patrols continuously, folds sentinel probes into its budget, and answers
a detected change with one full re-evaluation pass.  On raw regret the
severity sweep orders the policies exactly as a budget argument
predicts.  At $1\%$ severity every maintenance policy loses to spending
nothing: no-re-evaluation wins outright, and the hybrid trails it with
Holm-adjusted $p = 8.9\times10^{-8}$ and rank-biserial effect $1.0$,
because ten flipped bits are cheaper to outrun with offspring than to
audit.  At $20\%$ severity the ordering inverts: no-re-evaluation falls
to the bottom third, partial restart and sentinel-triggered refresh lead,
and the hybrid significantly beats periodic full refresh
($p = 0.027$).  No search-side response separates from the simple
sentinel elsewhere; the paired intervals overlap.  Where the policies
separate sharply is the selection layer: on BitMatching the hybrid cuts
the patrol's tournament decision error from $0.155$ to $0.121$, and on
Moving Peaks from $0.113$ to $0.055$, matching periodic refresh while
keeping bounded per-item staleness and the certificate interface between
changes.  This is the displacement rule read off the machine: both
panels sit far above the crossover ($p_{95}$ displacements of $143$ and
$21$ against $L^\ast \approx 7$), so pure patrol is the weakest
selection layer here, exactly as Theorem~\ref{thm:recovery} prices it,
and the value it retains is the part the regret column cannot see,
namely certified, auditable selection between changes.  The conclusion
the table supports is the paper's, not a stronger one: as an
optimizer-side response the patrol is unremarkable, and as a certified
selection-maintenance layer it is cheap, auditable, and composable with
exactly the refresh mechanisms that dominate when changes are nonlocal.

\begin{table}[H]
\centering\footnotesize
\caption{Stronger dynamic baselines at the main allocation (abrupt
changes; BitMatching $d = 1024$, severity $5\%$; Moving Peaks severity
$3$; thirty seeds; mean regret with bootstrap $95\%$ intervals
and binary-tournament decision error).  Register: optimizer performance,
with the selection-error columns reporting the maintenance register
inside the same runs.}
\label{tab:ea-strong}
\setlength{\tabcolsep}{3.8pt}
\begin{tabular}{lcccc}
\toprule
& \multicolumn{2}{c}{BitMatching} & \multicolumn{2}{c}{Moving Peaks}\\
policy & regret $[95\%]$ & sel. error & regret $[95\%]$ & sel. error\\
\midrule
patrol & $444.97$ $[442.49,447.48]$ & $0.155$ & $17.64$ $[15.26,20.13]$ & $0.113$\\
hybrid patrol+refresh & $444.75$ $[442.49,446.83]$ & $0.121$ & $18.31$ $[15.78,20.94]$ & $0.055$\\
periodic full & $444.36$ $[441.73,446.90]$ & $0.091$ & $17.69$ $[15.34,20.09]$ & $0.055$\\
sentinel-triggered & $443.24$ $[441.01,445.40]$ & $0.065$ & $17.64$ $[15.35,20.02]$ & $0.034$\\
no re-evaluation & $441.17$ $[437.14,445.05]$ & $0.107$ & $16.39$ $[14.00,18.79]$ & $0.118$\\
random immigrants & $443.71$ $[440.75,446.76]$ & $0.119$ & $18.49$ $[15.71,21.35]$ & $0.103$\\
triggered hypermutation & $443.19$ $[440.24,446.23]$ & $0.101$ & $18.15$ $[15.81,20.61]$ & $0.104$\\
memory + reinjection & $445.31$ $[442.10,448.46]$ & $0.114$ & $17.00$ $[14.75,19.31]$ & $0.083$\\
partial restart & $446.10$ $[443.42,448.64]$ & $0.121$ & $18.24$ $[15.72,20.80]$ & $0.057$
\\
\bottomrule
\end{tabular}
\end{table}

Table~\ref{tab:ea-ablation} closes the protocol with ablations along
the four axes a budgeted evaluation manager actually controls or
suffers: change frequency (the interval between abrupt changes), the
maintenance share of the evaluation budget, tournament selection
pressure, and population size, each varied around the main setting on
the representative BitMatching condition.  Longer change intervals reduce
both regret and decision error throughout the table.  Increasing the
maintenance share from $0.10$ to $0.50$ lowers decision error for every
active maintenance policy but raises regret because fewer evaluations remain
for offspring; the no-maintenance row is unchanged by construction.  Larger
tournaments improve regret for all four policies without uniformly improving
decision error, while the smaller population improves regret at the fixed
total evaluation budget.

\begin{table}[H]
\centering\footnotesize
\caption{Ablations on BitMatching ($d = 1024$, severity $5\%$, abrupt,
thirty seeds): mean regret\,/\,tournament decision error per cell.  The
main setting is interval $5{,}000$, share $0.25$, binary tournaments,
$\mu = 512$; each block varies one axis.  Register: optimizer
performance left of the slash, selection-error transfer right.}
\label{tab:ea-ablation}
\setlength{\tabcolsep}{4.5pt}
\begin{tabular}{lcccc}
\toprule
& patrol & hybrid & sentinel & none\\
\midrule
interval $2{,}500$ & $450.3$\,/\,$0.229$ & $451.9$\,/\,$0.198$ & $450.4$\,/\,$0.146$ & $452.8$\,/\,$0.221$\\
interval $5{,}000$ (main) & $445.0$\,/\,$0.155$ & $444.7$\,/\,$0.121$ & $443.2$\,/\,$0.065$ & $441.2$\,/\,$0.107$\\
interval $10{,}000$ & $440.3$\,/\,$0.106$ & $440.4$\,/\,$0.097$ & $439.8$\,/\,$0.035$ & $432.4$\,/\,$0.060$\\
\midrule
share $0.10$ & $442.6$\,/\,$0.175$ & $441.4$\,/\,$0.164$ & $441.0$\,/\,$0.129$ & $441.2$\,/\,$0.107$\\
share $0.50$ & $452.0$\,/\,$0.135$ & $449.8$\,/\,$0.088$ & $452.8$\,/\,$0.033$ & $441.2$\,/\,$0.107$\\
\midrule
tournament $4$ & $429.7$\,/\,$0.147$ & $426.7$\,/\,$0.111$ & $426.4$\,/\,$0.059$ & $424.5$\,/\,$0.126$\\
population $128$ & $395.7$\,/\,$0.136$ & $393.0$\,/\,$0.108$ & $391.2$\,/\,$0.018$ & $404.9$\,/\,$0.140$
\\
\bottomrule
\end{tabular}
\end{table}

\section{Related work}\label{sec:related}

\paragraph{Evolving data.}  The model is the sorting side of the evolving
data framework: Anagnostopoulos, Kumar, Mahdian, and
Upfal~\cite{AnaKumMahUpf11} introduced sorting against one random adjacent
swap per comparison, proved an $\Omega(n)$ steady-state lower bound for
adaptive algorithms, and gave an $O(n \log\log n)$ maintainer; Besa,
Devanny, Eppstein, Goodrich, and Johnson~\cite{BesaDEGJ18,BesaDEGJexp18}
closed the gap by showing repeated insertion sort holds $\Theta(n)$.
Relative to that line, the present paper changes the deliverable: the
maintained object is not a ranking but a ranking \emph{with promises},
meaning deterministic ages, per-item certified intervals, per-item probe
loads, exact stabilization and recovery laws, and transfer theorems into
the frontier and into selection.  The lower-bound side is re-proved twice
in forms that yield explicit constants at every drift rate: by
Poissonization and parity decoupling for the oblivious class
(Theorem~\ref{thm:lower}), and by a dead-zone argument for the
location-oblivious class that contains the patrols themselves
(Theorem~\ref{thm:floor}), so the structure under study is squeezed
between a proven floor and a conjectured ceiling whose constant the
lifetime ledger measures from inside.  Table~\ref{tab:positioning}
fixes the comparison against the reference points in one place.

\begin{table}[tb]
\centering\footnotesize
\caption{The patrol against its reference points, one comparison per
step unless stated.  Steady error at $\alpha = 1$; ages in steps;
recovery from a shock of overstatement $L$ under quiet aftermath.
Kinetic structures solve a different problem (known motion plans, exact
event-driven certificates) and anchor the certificate vocabulary rather
than compete on this budget.}
\label{tab:positioning}
\setlength{\tabcolsep}{4.0pt}
\begin{tabular}{lccccc}
\toprule
structure & steady error & worst age & per-query promise
  & recovery & space\\
\midrule
comparison patrol
  & $\Theta(n)$: floor proven, & $2(n-1)$ & motion radius $+$
  & $(L-1,\,L]$ cycles, & $3n{+}O(1)$\\
\quad (here)
  & const.\ $\tfrac12$ conjectured & deterministic
  & calibrated residual & exact & \\
repeated insertion~\cite{BesaDEGJ18}
  & $\Theta(n)$ proven & $1.85\,n$ measured, & none & not stated
  & $3n{+}O(1)$\\
  & ($0.51\,n$ measured) & no det.\ bound & & & \\
evolving sorting~\cite{AnaKumMahUpf11}
  & $O(n\log\log n)$ & not tracked & none & not stated & $O(n)$\\
periodic rebuild
  & $2.06\,n$ measured & $n\lceil\log_2 n\rceil$ & none
  & $C_{\mathrm{rs}}(n)$ exact, & $4n{+}O(1)$\\
\quad (binary insertion)
  & (staleness-driven) & by construction & & $L$-independent & \\
kinetic structures~\cite{BasGuiHer99}
  & exact between events & event-driven & exact certificate
  & by event & model-\\
  & (needs flight plans) & & & processing & dependent\\
\bottomrule
\end{tabular}
\end{table}

\paragraph{Self-stabilization and queues.}  Started from arbitrary
corruption, the patrol converges to the legitimate state without
intervention, the defining property of self-stabilizing
protocols~\cite{Dijkstra74}, and Theorem~\ref{thm:stabilize} pins its
convergence time to within one cycle.  The hybrid's rent-or-buy structure
is the elementary trade of competitive analysis~\cite{KarManRudSle88},
with the patrol renting, the rebuild buying, and the swap counter as the
meter.  The equilibrium audit treats the standing disorder as a queue and
verifies $L = \lambda W$~\cite{Little61,Stidham74} on exact birth and
death records of individual discordances; we are not aware of a prior
data-structure analysis run at this grain.

\paragraph{Noise versus drift.}  In the noisy-comparison
tradition~\cite{FeiRagPelUpf94} the ground truth is fixed and the oracle
lies; repetition buys certainty.  Here the oracle is exact and the truth
moves; repetition buys nothing but recency, which is why ages, not error
rates, are the structure's currency.

\paragraph{Kinetic structures and patrols.}  Kinetic data
structures~\cite{BasGuiHer99} maintain geometry under \emph{known} motion
plans and audit themselves with certificates that fail at predictable
events.  The displacement certificate of Section~\ref{sec:cert} is the
statistical counterpart when no plan exists.  The patrol itself is the
maintenance-theoretic cousin of patrolling games~\cite{AlpMorPap11} (a
defender walking a cycle against a stochastic rather than strategic
intruder), and the boustrophedon variant borrows its turning pattern from
coverage planning~\cite{ChoPig98}.

\paragraph{Budgets per item.}  Fragile complexity~\cite{AfsFagHam19}
bounds how many comparisons a single element may suffer in a static
computation.  The patrol's per-cycle load of $2$ per item is the dynamic
analogue, and Table~\ref{tab:ages} measures what abandoning such a bound
costs downstream of the data structure, in certificate width.

\paragraph{Evolutionary computation.}  Dynamic optimization is a mature
subfield of evolutionary computation~\cite{Branke02,NguYanBra12}: moving
optima, change detection, and re-evaluation policies are its standard
concerns.  Dynamic BitMatching continues the theoretically studied dynamic
OneMax line~\cite{Droste03}; Moving Peaks is the standard continuous tracking
benchmark~\cite{MosChi13,Branke20}.  Runtime analysis of dynamic
optimization studies how the frequency and magnitude of change govern
trackability~\cite{RohLehYao09} and proves regime separations for
concrete heuristics on dynamic fitness families~\cite{LisWit16}; the
displacement diagnostics and the crossover of
Theorem~\ref{thm:recovery} supply, from the maintenance side, exactly
the magnitude statistic that axis parameterizes, together with the
per-decision selection error rates such analyses consume.  Those
literatures typically modify the
search heuristic through memory, immigrants, diversity, or prediction.  We
instead isolate evaluation management inside an otherwise fixed EA and ask
what a comparison-limited evaluator can certify about its fitness ranking.
Theorem~\ref{thm:topk} and
Propositions~\ref{prop:tournament} and~\ref{prop:elitism} are
maintenance-layer guarantees for truncation, tournament, and elitist
selection under drift, and Theorems~\ref{thm:maxima}
and~\ref{thm:maxima-local} are the corresponding guarantees for the
nondominated set that multi-objective methods of the
NSGA line~\cite{DebPraAgaMey02} carry as their elite.  The theory of
randomized search heuristics~\cite{DroJanWeg06,DoeNeu20} analyzes the
same elementary mutation kernel as an \emph{optimizer}; here it is the
\emph{adversary}, and Wilson's mixing bound~\cite{Wilson04} calibrates
how fast it forgets.  The full loops of Section~\ref{sec:exp-ea} compare
continuous patrol with periodic, random, elite-first, and event-triggered
re-evaluation, and with the field's own response repertoire, random
immigrants, triggered hypermutation, memory, and restart, at equal
objective-evaluation budgets.  The empirical rank diagnostics expose
locality, burst size, and displacement quantiles as the variables
determining which policy is appropriate, and Theorem~\ref{thm:recovery}
turns the displacement diagnostic into a crossover with a proof rather
than a heuristic.

\paragraph{Adjacent context.}  Selecting comparisons under a budget
against drifting strengths is the maintenance face of dueling
bandits~\cite{YueBroKleJoa12}, with the objective shifted from regret to
structural error.  Maxima under
explicit updates go back to~\cite{KungLucPre75,OveLee81}, and sliding
windows over data streams maintain skylines under visible arrivals and
expiries~\cite{TaoPap06}; the reduction of
Theorems~\ref{thm:maxima} and~\ref{thm:maxima-local} needs no update
stream at all, only the two Kendall distances, which is what separates
this frontier layer from kinetic maxima (planned motion, event queues),
from dynamic maxima under explicit insertions and deletions, and from
streaming skylines (Remark~\ref{rem:geo-model}).
Adaptive sorting~\cite{EstWood92} bounds static sorting cost by the
input's existing disorder; the stabilization law of
Theorem~\ref{thm:stabilize} is its maintenance-side analogue, with the
overstatement $L$ playing the role of the presortedness measure and the
bound exact rather than asymptotic.  Background on inversions and bubble
passes is classical~\cite{Knuth98}, and the metric inequalities are
Diaconis--Graham~\cite{DiaGra77}.

\section{Concluding remarks}\label{sec:conclusion}

Selection under drifting fitness has an information problem before it has
a search problem, and the object that solves it is a data structure whose
true output is trust: not the ranking, which is necessarily wrong
somewhere, but exact statements about where and by how much.  This paper
builds that structure and closes its accounts.  The lower bounds say what no schedule can avoid,
twice: for oblivious probing by parity decoupling, and for the patrol's own
location-oblivious class by a dead-zone argument, so the measured
equilibrium stands on a proven floor.  The bump lemma gives the patrol an
exact self-stabilization law, measured to within one probe of its
prediction, and from it a recovery calculus whose crossover at
$L \approx \log_2 n$ decides, with a proof, when local repair should yield
to a global rebuild; a swap-count hybrid obeys the decision without being
told the shock.  Certificates are served as a two-part contract that never
mixes what is proven with what is calibrated, and the equilibrium constant
that remains conjectural is audited at the level of individual discordance
lifetimes, where Little's law closes to three digits.

As the evaluation-management layer beneath rank-based selection, the
structure carries its boundary conditions inside the theory rather than
as caveats around it.  Local gradual rank motion is
the regime where certified continuous repair wins, and the displacement
diagnostics locate any landscape relative to the crossover; the benchmark
landscapes fall on the nonlocal side, where the same theorems prescribe
refresh or the hybrid, and the equal-budget loops, now including random
immigrants, hypermutation, memory, and restart, confirm the prescription
in both directions.  The resulting contribution is neither a universally
superior optimizer nor a relabelled sorting algorithm: it is a certified
maintenance layer with explicit floors, exact recovery laws, a falsifiable
equilibrium constant, and a decision rule for its own retirement.

\paragraph{Reproducibility.}  The ancillary directory \texttt{anc/}
contains the drift engines, maintainers, the inversion ledger, dynamic
landscapes, full EA loop, selection layer, shock and hybrid policies,
statistical analysis, and closed forms of every stated bound.
Forty-two tests check engine invariants, rank orientation, deterministic
theorems and tightness constructions, the bump lemma and stabilization
windows, recovery and hybrid bounds, the equilibrium floor, ledger
exactness, shock accounting, variable-rate certificates, evaluation
budgets, selection-pressure replay, deterministic EA replay, and seeded
stochastic guarantees.  Raw
and summarized CSV files, commands, source hashes, package versions, and
platform information accompany a paper and a quick regeneration profile;
\texttt{README.md} documents both.

\end{document}